\definecolor{darkgreen}{rgb}{0.0,0,0.9}
\definecolor{darkblue}{rgb}{0.0, 0.0, 0.55}
\newtheorem{theorem}{Theorem}[section]
\newtheorem{lemma}[theorem]{Lemma}
\newtheorem{corollary}[theorem]{Corollary}
\newtheorem{claim}[theorem]{Claim}
\newtheorem{remark}[theorem]{Remark}
\theoremstyle{definition}
\newtheorem{definition}[theorem]{Definition}
\newtheorem{example}[theorem]{Example}
\newtcolorbox{wbox}
{
	colback  = white,
}
\newcommand{\remove}[1]{} 
\newcommand{\opt}{\mbox{\tt opt}} 
\newcommand{\btd}{\bigtriangledown}
\newcommand{\Vr}{V\setminus\set{r}}
\newcommand{\set}[1]{\ensuremath{\left\{#1\right\}}}
\newcommand*\samethanks[1][\value{footnote}]{\footnotemark[#1]}
\title{Equitable Core Imputations for \\
Max-Flow, MST and $b$-Matching Games}
\date{}
\author[1]{Rohith Reddy Gangam\thanks{This work was supported in part by NSF grant CCF-2230414.}}
\author[2]{Naveen Garg\thanks{Work done when the author was visiting University of Warwick on a Royal Society Visiting Fellowship.}}
\author[1]{Parnian Shahkar\samethanks[1]}
\author[1]{Vijay V.~Vazirani\samethanks[1]}
\affil[1]{University of California, Irvine}
\affil[2]{Indian Institute of Technology Delhi}
\begin{document}

\maketitle 

\begin{abstract}

	\bigskip

We study fair allocation of profit (or cost)
for three central problems from combinatorial optimization: Max-Flow, MST and $b$-matching. The essentially unequivocal choice of solution concept for this purpose would be the {\em core}, because of its highly desirable properties
. However, recent work \cite{Vazirani-leximin} observed that for the assignment game, an arbitrary core imputation makes no fairness guarantee at the level of individual agents. To rectify this deficiency, special core imputations, called {\em equitable core imputations}, were defined --- there are two such imputations, {\em leximin and leximax} --- and efficient algorithms were given for finding them. 

For all three games, we start by giving examples to show that an arbitrary core imputation can be excessively unfair to certain agents. This led us to seek equitable core imputations for our three games as well. However, the ubiquitous tractable vs intractable schism separates the assignment game from our three games, making our task different from that of \cite{Vazirani-leximin}. 
As is usual in the presence of intractability, we resorted to defining the Owen core for each game and algorithmically relating it to the set of optimal dual solutions of the underlying combinatorial problem. 
We then give polynomial time algorithms for finding equitable imputations in the Owen core. 

The motivation for this work is two-fold: the emergence of automated decision-making, with a special emphasis on fairness,  and the plethora of industrial applications of our three games.  
\end{abstract}

\newpage

\section{Introduction}
\label{sec.intro}

With automated decision-making emerging as the norm, the design of algorithms which ensure fair allocations to agents has become increasingly important; these are not only ethical, but also lead to customer satisfaction, and hence loyalty, to businesses paying attention to this issue. In this vein, our paper studies the fair allocation of profit or cost for three problems
% \todo{"combinatorial games" instead of "problems"?}
: Max-Flow, MST and $b$-matching. All three are central to combinatorial optimization, with a plethora of industrial applications
% \todo{Since we're referring to industrial applications, I think it is better to cite more recent references as the industry has changed a lot in the past 30 years.}
, e.g., see \cite{waissi1994network}, \cite{ahuja1993network} and \cite{Sch-book}. The corresponding games, which distribute profit (or cost), have also received considerable attention in economics and computer science, e.g., see \cite{Moulin2014cooperative}. 

As an example, assume that several agents have built pipes for carrying flow from a source to a sink and the resulting profit needs to be distributed among the agents. What is a disciplined way of doing this?  The gold standard solution concept in this regard is that of the {\em core} which gives each sub-coalition at least as much profit as its inherent worth, hence ensuring that no sub-coalition has incentive to secede, see details in Section \ref{sec.background}. However as pointed out recently, for the assignment game \cite{Vazirani-leximin}, an arbitrary core imputation makes {\em no fairness guarantee at the level of individual agents}. Similarly, for the three games stated above, we give examples in Section \ref{sec:unfair_imputations} to show that an arbitrary core imputation can be excessively unfair to certain agents. This led us to studying {\em equitable core imputations}, proposed in \cite{Vazirani-leximin}, which comprise of two special imputations in the core: {\em leximin and leximax}.

The first work to seek a middle ground in the tension between sub-coalitions demanding profit which is commensurate with their inherent worth and the societal norm of ``equality'', was the  well-known {\em egalitarian solution} of Dutta and Ray \cite{Dutta-Ray}. For a convex game\footnote{The characteristic function of such a game is supermodular.}, this is the unique core imputation which Lorenz dominates all other core imputations. At a high level, our goal, and that of \cite{Vazirani-leximin}, is analogous to that of Dutta and Ray. Section \ref{sec.background} states three shortcomings of the approach of \cite{Dutta-Ray} as well as the way they are rectified via the approach of \cite{Vazirani-leximin} and the current paper. 
% \todo{I feel it is better to move the paragraph above to related work, since it is appearing earlier than more important paragraphs. }

The ubiquitous tractable vs intractable schism separates the assignment game from our three  games, making our task  different from that of \cite{Vazirani-leximin}. 
% \todo{I feel previous sentence should come as the last sentence of this paragraph, not the first.}
The core of the assignment game consists of all optimal solutions to the dual of the LP-relaxation of the maximum weight matching problem in the underlying bipartite graph \cite{Shapley1971assignment}. Consequently, the problem of determining if a given imputation is in the core is in P. However, the latter problem was shown to be co-NP-hard for the max-flow  \cite{Fang2002computational} and MST games \cite{faigle1997complexity}. Building on these results, we show that finding a leximin or leximax core imputation is also NP-hard for these games. For the $b$-matching game,  we establish a weaker evidence of intractability\footnote{A very recent paper \cite{GTV} has established co-NP-hardness as well.}, see Section \ref{sec:b_matching_game}. 

On the other hand, for all three games, every optimal solution to the dual LP of the corresponding combinatorial  problem leads to a core imputation. Hence, the core of these games is non-empty and can be partitioned into two sets: the well-behaved, tractable part derived from optimal solutions to the dual LP and the rest; the latter is where the phenomenon of NP-hardness shows up. This situation had been encountered in the past and researchers had resorted to studying only the tractable partition\footnote{This is consistent with the parable {\em the streetlight effect}, see \cite{Streetlight}, which sometimes characterizes the way science makes progress.}, naming it the {\em Owen Set} \cite{Owen1975, Owen.Characterization}; we will carry over this name to our paper as well. For all three games, the Owen set is convex, i.e., every convex combination of two of its elements is in the set, and therefore by Lemma 3 in \cite{Vazirani-leximin}, the leximin and leximax imputations in the Owen set are unique. 

{\bf Results and key ideas:}
\begin{enumerate}
	\item For the max-flow game, we show that every optimal solution to the dual LP efficiently leads to an imputation in the Owen set. Furthermore, we give a polynomial time algorithm for determining if a given imputation belongs to the Owen set and if so, finding a corresponding optimal dual solution - the analogous result for the $b$-matching game was given in \cite{Transportation-core}. 
	\item For the MST game, we show that every optimal solution to the dual LP efficiently yields a set of core imputations, hence characterizing its Owen set. Among the various primal-dual formulations for the MST game, we establish that the minimum branching LP serves as the right formulation for defining the Owen set. Despite its exponentially many dual variables on sets, we provide an elegant interpretation that maps them to costs on agents. We also provide a separation oracle that facilitates the use of the ellipsoid method for efficiently verifying core membership. Because the procedure uses ellipsoid algorithm, it is not practical; we leave the open problem of finding efficient combinatorial algorithms.
	\item We give a combinatorial, strongly polynomial algorithm for finding the leximin and leximax core imputations in the Owen set of the max-flow game. Our approach leverages the Picard-Queyranne structure to efficiently compute dual optimal solutions corresponding to these imputations. For the $b$-matching game, we do the same by building on the algorithms of \cite{Vazirani-leximin}. 
	\item For the MST (and more generally, minimum branching) game, we give LP-based polynomial time algorithms, using the ellipsoid method, for finding the leximin and leximax imputations in the Owen set. Again we leave the open problem of finding combinatorial polynomial time algorithms. 
\end{enumerate}

% \todo{Justify leximin to nucleolus using spread?}

\subsection{Background information}
% \todo{The paragraphs of this subsection are either regarding preliminaries or related work, We can send each paragraph to its proper section.}
\label{sec.background}

% \todo{Here we provide some well-known supporting facts of the previous section, that a reader might not be aware of.}
 
The {\em core} distributes the total worth of a game among the agents in such a way that the profit received by a sub-coalition is at least as large as the profit which the sub-coalition can generate all by itself. This not only ensures stability of the grand coalition, since no sub-coalition has an incentive to secede, but ensure some degree of fairness, namely to each of exponentially many sub-coalitions -- a stringent requirement indeed. Additionally, the core also provides profound insights into the negotiating power of individuals and sub-coalitions, see \cite{Moulin2014cooperative, Va.general} and Remark \ref{flow:negotiating_power}.

However, as pointed out recently in the context of the assignment game \cite{Vazirani-leximin}, which forms a paradigmatic setting for studying the core --- in large part due to the classic work of Shapley and Shubik \cite{Shapley1971assignment} --- an arbitrary core imputation makes no fairness guarantee at the level of individual agents. This is due to the fact that a singleton sub-coalition (or a set of players from the same side of the bipartition), can make zero profit, and therefore its profit under a core imputation can be an arbitrary amount.

It is well know that ``fairness'' can be defined in many ways, depending on the setting. Among these, the use of max-min and min-max fairness is widespread, e.g., in game theory, networking and resource allocation. A leximin allocation maximizes the smallest component and subject to that, it maximizes the second smallest, and so on. It therefore goes much further than a max-min allocation. Similarly a leximax allocation goes much further than a min-max allocation.
\cite{Vazirani-leximin} defined leximin and leximax core imputations as {\em equitable core imputations}. These two imputations achieve equality in different ways: whereas leximin tries to make poor agents more rich, leximax tries to make rich agents less rich, thereby indirectly making poor agents more rich, hence they may be better suited for different applications.

The egalitarian solution of Dutta and Ray \cite{Dutta-Ray} has three shortcomings: it does not apply to several key natural games, including the assignment, MST and max-flow games, since they are not convex; it is not efficiently computable for any non-trivial, natural game; and the Lorenz order is a partial, and not a total, order. As stated above, \cite{Vazirani-leximin} and our work rectify all three shortcomings.

\section{Related Works}
\label{sec:related_work}

The Owen set was introduced in \cite{Owen1975} for linear production games and later formally defined in \cite{Owen.Characterization}. It represents core imputations derived from dual solutions, where each dual variable corresponds to an agent's shadow price. Our definition of Owen set extends beyond linear production games to include games like min-cost spanning tree(\cite{GranotHuberman1981}) and max-flow(\cite{Kalai1982totally}) games, where not all dual variables represent shadow prices, and the prices have to be carefully defined from the dual optimal solutions. Thus, this definition provides a more inclusive framework.

\cite{Samet1987} explored Owen set imputations, referring to them as the set of dual payoffs. While Owen set generally represents a subset of core imputations, \cite{Samet1987} outlined conditions under which the core and Owen set coincide. For instance, in max-flow games with unit capacity edges, one of the conditions from \cite{Samet1987} is met, allowing our algorithms to return the leximin fair imputation among all core sets. However, this is not universally true. As we will show in later sections, finding a leximin imputation among the core set is $NP$-hard in a general max-flow game. However we will provide efficient algorithms for finding a leximin fair imputation among the Owen set. In some special classes of flow and assignment games, \cite{Granot} showed the leximin core can be found efficiently. Our work is closely related, as we extend the analysis to the general class of max-flow games. 

Extensive research has been conducted on finding fair cost-shares in a minimum spanning tree (MST) game that is a special subcase of the min-cost branching game. \cite{GranotHuberman1981} showed that the core of an MST game is never empty. Efficient procedures for computing core imputations and the nucleolus were later developed, see \cite{GranotHuberman1984}. \cite{Bird1976} introduced a cost-sharing rule that ensures each agent's cost share is within the core by distributing costs proportional to their connection costs to the MST. However, this rule may not yield a unique cost share in cases with multiple MSTs. \cite{Angel} proposed two methods to find fair core imputations within the subset governed by Bird's rule. Our work, in contrast, adopts a broader definition of fairness in the class of Owen set imputations, which extends beyond Bird’s rule. 

Besides Bird's rule, several other solutions have been proposed to allocate costs among individuals in a minimum cost network, including methods by \cite{Kar}, Folk \cite{feltkamp1994irreducible} \cite{bergantinos2007fair}, Cycle-complete\cite{trudeau2012new}, a family of strict responsive rules \cite{bogomolnaia2010sharing}, or the class of egalitarian Shapley value solutions \cite{casajus2013null}. Most of these solutions consider all relevant link costs for determining the final imputation of optimal costs. Contrary to conventional approaches, \cite{gimenez2020egalitarian} introduced a model where only a subset of network costs is considered, assuming agents possess localized knowledge with limited relevant information. They proposed a novel egalitarian approach to cost-share in minimum cost spanning games, reinterpreting spanning tree cost-share as a claims problem. However, due to its focus on local perspectives, their method does not always satisfy core properties. In their subsequent work \cite{gimenez2022claims}, the same authors developed new methods for distributing optimal network costs by framing each minimum cost spanning tree scenario as a claims problem and applying corresponding rules. However, similar to their earlier research, these methods do not guarantee core imputations. In contrast, we present an efficient algorithm that computes a leximin/leximax fair Owen set imputation, without imposing restrictions on agent assumptions. A couple of studies have explored some variations of the problem; \cite{zhan2020cost} studied cost sharing in MST games without a source node and proved that the core is non-empty. \cite{le2016generalized} introduced the Generalized Minimum Spanning Tree Game, showing that the core might be empty. \cite{bergantinos2021review} provides an exhaustive review of papers addressing the cost sharing problem in MST games.

In $b$-matching games, \cite{sotomayor1992multiple} showed that the core is non-empty. \cite{Transportation_games} and \cite{vazirani2023lpduality} demonstrated that any optimal dual LP solution is a core imputation, though these imputations do not fully characterize the core for $b$-matching games as they do for assignment games. \cite{Vazirani-leximin} recently proposed a combinatorial algorithm for finding the leximin/leximax core of the assignment game, and we show that with modifications, it can be utilized to find the leximin/leximax Owen set imputation in $b$-matching games. \cite{biro2018stable} proved that core membership testing in edge-constrained $b$-matching is co-NP-complete, and very recently, \cite{GTV} proved this for standard $b$-matching game as well. 

In cooperative game theory, the nucleolus is a solution concept related to the core and fair imputations, designed to balance dissatisfaction among coalitions by minimizing the maximum dissatisfaction lexicographically. Unlike the core, which can sometimes be empty, the nucleolus always exists. \cite{Schmeidler} showed that when the core is non-empty, the nucleolus belongs to the core. \cite{Raghavan} provided an efficient algorithm for determining the nucleolus in assignment games. However, for general cooperative games including the three games studied in this paper, there is no known polynomial-time algorithm for finding the nucleolus - it is in fact NP-hard for both MST games(\cite{MST_nucleolus_np-hard}) and max-flow games(\cite{Flow_nucleolus_np-hard}). This challenge motivates the exploration of alternative fairness concepts, such as the leximin/leximax core employed in this paper. In general, the leximin/leximax core imputation does not coincide with the nucleolus. Nevertheless, \cite{Granot} demonstrated that under specific conditions in max-flow games, the leximin core and the nucleolus are equivalent.

\section{Preliminaries}
\label{sec:prelim}
In this section, we define notions that are used repeatedly in the paper.

\begin{definition}
    Let $N$ be a set of agents. A cooperative game on $N$ is defined by a {\em characteristic function} $v: 2^N \rightarrow \mathbb{R}_+$, where for each $S \subseteq N$, $v(S)$ is the value (or cost) that the sub-coalition $S$ can produce on its own. $N$ is also called the grand coalition.
\end{definition}

In the three games that we consider - the max-flow game, the min-cost branching game and the max-weight $b$-matching game - the set of agents is either the vertex or the edge set of a graph $G$. We will use $(G,v)$ to represent these games, where $v$ is the characteristic function of the specific game.  

\begin{definition}
An \textit{imputation} $p: N \to \mathbb{R}_+$ is a partition of the value(/cost) of the game, $v(N)$, among the agents in $N$. $p(i)$ is the {\em share} of agent $i$ and $\sum_{i \in N} p(i)=v(N)$.
\end{definition}

Depending on the game we might refer to $v(S)$ as the profit or cost of the coalition $S\subseteq N$. The corresponding games are called profit or cost-sharing games. 

\begin{definition}\label{defCore}
An imputation $p$ is in the {\em core} of a profit(/cost)-sharing game $(N,v)$ if and only if for every sub-coalition $S \subseteq N$, the total profit(/cost) shares of the members of $S$ is no less(/more) than the value of $S$, i.e., $\sum_{i \in S} p(i)$ is at least(/at most) $v(S)$.
\end{definition}

\begin{definition}
Let $P$ be a set of imputations of a game $(N,v)$ and $p_1,p_2 \in P$. Let $l_1,l_2$ be the lists formed by arranging the shares of agents in $p_1,p_2$ in ascending order. $l_1$ is {\em lexicographically larger} than $l_2$ if $l_1$ has the larger value at the first index where the two lists differ. The imputation in $P$ which is lexicographically larger than all other imputations in $P$ is the {\em lexicographically minimum} or {\em leximin} imputation in $P$.  
\end{definition}

\begin{definition}
Let $P$ be a set of imputations of a game $(N,v)$ and $p_1,p_2 \in P$. Let $l_1,l_2$ be the lists formed by arranging the shares of agents in $p_1,p_2$ in descending order. $l_1$ is {\em lexicographically smaller} than $l_2$ if $l_1$ has the smaller value at the first index where the two lists differ. The imputation in $P$ which is lexicographically smaller than all other imputations in $P$ is the {\em lexicographically maximum} or {\em leximax} imputation in $P$.  
\end{definition}

If the set $P$ in the definitions above is the set of core imputations in the game, then the imputations are called \emph{leximin} and \emph{leximax core imputation} respectively. Similarly, the respective imputations for the Owen set imputations will be called the \emph{leximin} and the \emph{leximax Owen set imputations}. 

\subsection{The games and their unfair core imputations}

\subsubsection{Max-flow game}
An instance of the $s$-$t$ maximum flow problem is given by a directed graph $G = (V, E)$, a source vertex $s\in V$, a sink vertex $t\in V$ and edge capacities $c: E \to \mathbb{R}_+$. A flow $f: E \to \mathbb{R}_+$ is a function on the edges of $G$ that satisfies the following constraints.
\begin{description}
    \item[Capacity constraint:] The flow $f_e$ through any edge $e$ must not exceed its capacity $c_e$, i.e. $\forall e\in E, f_e\le c_e$.
    \item[Conservation:] For every vertex $v$ except $s$ and $t$, the flow entering $v$ equals the flow leaving $v$, i.e., $\forall v\in V\setminus\set{s,t}, \sum_{u:(u,v)\in E} f_{uv} = \sum_{w:(v,w)\in E} f_{vw}$.
\end{description}
The value of flow $ f $ is the sum of the flow on the edges leaving source $ s $ or entering sink $ t $ (these sums are equal due to flow conservation). The objective is to find a flow $f$ of maximum value. 

The max-flow game is defined over an instance of the maximum $s-t$ flow problem and has an agent for each edge in $G$. The value/profit of the grand coalition, $v(E)$ is the maximum $s$-$t$ flow. The profit of a subset of agents $E'\subseteq E$, is the maximum $s$-$t$ flow in the subgraph induced by the edges $E'$, with the same capacities of edges. An imputation is a distribution of the total profit to the edges/agents and an imputation is in the core if no subset gets a profit less than its value.
\subsubsection{MST game and min-cost branching game}

Let $G=(V, E)$ be a directed graph, $c: E\rightarrow\mathbb{R}_+$ a cost function on the edge and $r\in V$ a root vertex. A {\em branching} is a subset of edges $E'\subseteq E$ such that there is a path from every vertex in $\Vr$ to $r$ using edges in $E'$. The cost of a branching is the sum of the costs of the edges in the branching. The value/worth of a set $S\subseteq \Vr$ , $v(S)$, corresponds to the minimum cost branching in $G(S\cup {r})$ and defines the characteristic function of the game. 

Similarly, if the graph $G$ is undirected and the worth of a set $S$ is given by the minimum cost spanning trees involving the vertices of $S$ and the root, then the game is called min-cost spanning tree(MST) game. This is trivially a special case of the min-cost branching game.
We will use $(G,v)$ to define the MST or min-cost branching game - the vertices of the graph, minus the root, cover the player set and the min-cost branching or the spanning tree formed by the vertices with the root fix the characteristic function of the game.

Let $T$ be a branching(/spanning tree) of the minimum cost in $G$. An assignment $s:V\rightarrow\mathbb{R}_+$ is a {\em cost-share} or a {\em imputation} if $\sum_{v\in\Vr} s(v)=c(T)$. An imputation is in the core of the min-cost branching(/MST) game if $\forall S\subseteq\Vr, \sum_{e\in T'} c(e) \ge \sum_{v\in S} s(v)$ where $T'$ is the minimum cost branching(/MST) in the subgraph induced over $S\cup\set{r}$.

\subsubsection{Max-weight bipartite \texorpdfstring{$b$}{b}-matching game}

Consider a bipartite graph $ G = (U, V, E) $ with an associated edge-weight function $ w: E \rightarrow \mathbb{R}_+ $. The capacity function $ b: U \cup V \rightarrow \mathbb{Z}_+ $ sets a cap on the number of matches a vertex can participate in. While edges may be matched multiple times, the vertex limitations dictated by $ b $ inherently set restrictions on the edges as well. Consequently, edge $ (i, j) $ can be matched up to $ \min \{b(i), b(j)\} $ times. Any choice of edges, with multiplicity, subject to these constraints, is called a $b$-matching. 

In the context of the \textit{max-weight bipartite $ b $-matching game}, ``the $b$-matching game'' for short, the \textit{value} of a coalition $ (S_u \cup S_v) $, where $ S_u \subseteq U $ and $ S_v \subseteq V $, is defined by the maximum weight of a $ b $-matching within the subgraph of $ G $ limited to $ (S_u \cup S_v) $ alone. This value is represented by $ v(S_u \cup S_v) $, which forms the \textit{characteristic function} of the game, with $ v: 2^{U \cup V} \rightarrow \mathbb{R}_+ $. An \textit{imputation} is made up of two mappings $ p_U : U \rightarrow \mathbb{R}_+ $ and $ p_V : V \rightarrow \mathbb{R}_+ $ ensuring that $ \sum_{u \in U} p_U(u) + \sum_{v \in V} p_V(v) = v(U \cup V) $. The definition of the core, as stated in Definition \ref{defCore}, remains the same. 

The particular instance of the bipartite $ b $-matching game where $ b $ is a constant function is called the \textit{uniform bipartite $ b $-matching game}, with the constant represented by $ b_c \in \mathbb{Z}_+ $. If $b(v) = 1, \forall v\in U\cup V$, then it is the special case of \textit{assignment game}, which is the focus of \cite{Shapley1971assignment} and \cite{Vazirani-leximin}.

\subsubsection{Unfair imputations for these games}
\label{sec:unfair_imputations}
In each of these games, there are core imputations that may be considered very unfair. For example, in the max-flow game, consider a path of $n$ unit-capacity edges from the source to the sink. An imputation that allocates all the profit to a single edge while giving nothing to the others is in the core, but a fairer imputation would distribute the profit equally among all edges, giving each a share of $1/n$. 

Similarly, in the MST game, imagine $n$ vertices connected to the root by a path of $n$ unit-cost edges. An imputation that charges all the cost to the farthest vertex is in the core, but a more equitable choice would be to charge each agent one unit of cost. Shapley and Shubik (\cite{Shapley1971assignment}]) demonstrated that in the assignment game, a special case of $b$-matching game, the set of core imputations corresponds precisely to the optimal solutions of the dual LP-relaxation of the maximum weight matching problem. Moreover, they showed that this set of core imputations forms a lattice, where the extreme points tend to disproportionately favor one side. 

Proofs for all lemmas and theorems marked with a $\dagger$ are provided in Appendix~\ref{app:proofs}.

% \section{The method of Iterative LPs}
\section{Computing the leximin optimum solution to a linear program}
\label{sec:lp_leximin_solution}

\begin{mini}
		{} {c^T x }
			{\label{lp_general_1}}
		{}
        \addConstraint{Ax}{= b}{}
        \addConstraint{x}{\geq 0}{}
\end{mini} 

Consider the above optimization linear program(LP) on the set of variables $X=\{x_1,x_2, \ldots, x_n\}$. It is easy to see that an LP has a unique leximin optimum solution${}^\dagger$. Finding an optimal solution that maximizes the minimum component of $x$, i.e., the minimum value of any variable in the leximin solution, can be easily found by solving the LP \ref{lp_iterative_1} ($\opt$ is the optimal value of LP \ref{lp_general_1}). 

\begin{maxi}
		{} {\alpha}
			{\label{lp_iterative_1}}
		{}
        \addConstraint{c^T x}{= \opt}{}
        \addConstraint{Ax}{= b}{}
        \addConstraint{x_i}{\geq \alpha, }{\quad \forall i\in \{1,2,\ldots,n\}}
\end{maxi} 

The important point to note here is that, while this gives us a solution with the maximin value, it doesn't give us the exact set of variable(/s) that attain this value in the leximin solution. One way to solve this is by finding a strict complementarity solution - a solution that assigns maximin value to a subset of variables and a higher value to every other variable. Freund et al.(\cite{Freund-StrictComplementarity}) achieve this by solving a different LP. 

Below, we describe a more interesting technique - a generalization from Nace and Orlin(\cite{leximin_load_LP}) - using dual LP and strict complementarity slackness conditions. The idea is that, if $z_i$ represents the dual variable corresponding to the constraint $x_i\geq \alpha$, the dual LP has constraints $z_i\geq 0,\forall x_i\in X$ and $\sum_{x_i\in X } z_i = 1$. Each $z_i$ being zero or positive informs us whether $x_i = \alpha$ in the leximin solution. Since the condition forces at least one positive $z_i$, we can determine a subset of variables having the maximin value in the leximin solution. Thus, we can solve for the leximin solution using an iterative process - set the known variables to their appropriate values and solve the LP that maximizes the next minimum value, using its dual LP. The algorithm is deatiled below.

\noindent Consider an LP formulated as: 

\begin{mini}
		{} {c^T x }
			{\label{lp_general}}
		{}
        \addConstraint{Ax}{= b}{}
        \addConstraint{x}{\geq 0}{}
\end{mini}

\begin{claim}
\label{cl:unique}
% ${}^\dagger$
The leximin optimum solution of an LP is unique.    
\end{claim}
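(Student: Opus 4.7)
The plan is a contradiction via convex averaging. Suppose $x \neq y$ are two leximin optima of the LP. The feasible-optimal set $\{u : Au = b,\ u \geq 0,\ c^T u = \opt\}$ is convex, so $z := (x+y)/2$ is again a feasible optimal point. I aim to show that $z^\uparrow \succ_{\text{lex}} x^\uparrow$, where $u^\uparrow$ denotes the ascending sorting of a vector $u$; this contradicts the leximin-optimality of $x$.

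Since $x$ and $y$ are both leximin optima, $x^\uparrow = y^\uparrow$. List the distinct values appearing as $\alpha_1 < \alpha_2 < \cdots$ with multiplicities $m_1, m_2, \ldots$, and set $X_j = \{i : x_i = \alpha_j\}$ and $Y_j = \{i : y_i = \alpha_j\}$, so $|X_j| = |Y_j| = m_j$. Because $x \neq y$, there is a smallest $k$ with $X_k \neq Y_k$; pick $i^* \in X_k \setminus Y_k$ (nonempty since $|X_k|=|Y_k|$). I then claim $y_{i^*} > \alpha_k$: if instead $y_{i^*} = \alpha_j$ for some $j < k$, the minimality of $k$ gives $Y_j = X_j$, whence $x_{i^*} = \alpha_j \neq \alpha_k$, a contradiction. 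Consequently $z_{i^*} = \tfrac{1}{2}(\alpha_k + y_{i^*}) > \alpha_k$.

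For $j < k$ the sets $X_j$ and $Y_j$ coincide, so $z_i = \alpha_j$ on those coordinates and $z_i \geq \alpha_k$ elsewhere. Hence $z^\uparrow$ agrees with $x^\uparrow$ on the first $m_1 + \cdots + m_{k-1}$ entries, but has strictly fewer than $m_k$ entries equal to $\alpha_k$ (coordinate $i^*$ has been lifted). Thus at position $m_1 + \cdots + m_{k-1} + m_k$, the vector $x^\uparrow$ still reads $\alpha_k$ while $z^\uparrow$ reads something $> \alpha_k$, giving $z^\uparrow \succ_{\text{lex}} x^\uparrow$ and completing the contradiction. The whole argument is essentially bookkeeping on sorted vectors plus convexity of the feasible-optimal set; the only mildly subtle step is ruling out $y_{i^*} < \alpha_k$, which is forced by the minimality of $k$. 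No LP duality, vertex structure, or complementary slackness is required, which also explains why the statement holds for arbitrary LPs.
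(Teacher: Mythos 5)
Your proposal is correct and follows essentially the same route as the paper's proof: average the two putative leximin optima using convexity of the optimal face, then show the average is lexicographically larger, a contradiction. Your bookkeeping via value classes and multiplicities is in fact slightly more careful than the paper's phrasing (which speaks of ``the variable with the $i$-th smallest value,'' not well-defined under ties), but the underlying argument is identical.
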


\begin{proof}
% [Proof of Claim~\ref{cl:unique}]
Let $x^1,x^2$ be two distinct optimal solutions to the LP that are both leximin. This implies that the multiset of values assigned to variables in $x^1,x^2$ are identical. By convexity of the set of optimal solutions of an LP, $x=(x^1+x^2)/2$ is also an optimal solution. Suppose $i$ is the smallest index such that the variable with the $i^{\rm th}$ smallest value in $x^1$, say $x_{j_1}$, is different from the variable with the $i^{\rm th}$ smallest value in $x^2$, say $x_{j_2}$. Note $x^1_{j_1}< x^1_{j_2}$, $x^2_{j_2}< x^2_{j_1}$ and $x^1_{j_1}= x^2_{j_2}$. Hence $x_{j_1}=(x^1_{j_1}+x^2_{j_1})/2 > x^1_{j_1}$ and $x_{j_2}=(x^1_{j_2}+x^2_{j_2})/2 > x^2_{j_2}$. Besides the $(i-1)$ smallest variables all variables in $x$ have value strictly larger than $x^1_{j_1}=x^2_{j_2}$. Hence $x$ is lexicographically larger than both $x^1$ and $x^2$ yielding a contradiction.   
\end{proof}

We will find the leximin optimum solution to LP (\ref{lp_general}) by solving a sequence of LPs. Our first step is to solve LP (\ref{lp_general}) to compute its optimal value, say \opt. Next, we modify LP (\ref{lp_general}) by adding the constraint $c^Tx =\opt$ to ensure that any feasible solution to the modified LP is an optimum solution to LP (\ref{lp_general}). Let $x=(x_1,x_2,\ldots,x_n)$ be the variables in LP (\ref{lp_general}). To maximize the minimum value assigned to a variable we include constraints $x_i\ge \alpha, i\in[n]$, and our objective now is to maximize $\alpha$.

%%%%%%%
Let $\alpha^*$ be the value of the optimum solution of the modified LP. Hence there is no optimum solution to LP (\ref{lp_general}) where all variables take values that are strictly larger than $\alpha^*$. We would like to fix those variables which take value $\alpha^*$ in every optimum solution. Let $V_F$ denote the set of indices of $x$ whose values have been fixed so far. Initially $V_F = \emptyset$. The mapping $m: x \to \mathbb{R}_+$ assigns the fixed values to $x_i$s where $i \in V_F$ and is zero for all others. LP (\ref{lp_iterative}) is the modified LP.
\begin{maxi}
		{} {\alpha}
			{\label{lp_iterative}}
		{}
        \addConstraint{c^T x}{= \opt}{}
        \addConstraint{Ax}{= b}{}
        \addConstraint{x_i}{\geq \alpha, }{\quad \forall i \notin V_F}
        \addConstraint{x_i}{= m(x_i), }{\quad\forall i \in V_F}
\end{maxi} 

and the dual program of the modified LP is the following
\begin{mini}
		{} {b^Ty+\beta\opt-\sum_{i\in F} \alpha_i z_i}
			{\label{lp_iterative_dual}}
		{}
        \addConstraint{A^Ty-z}{\ge \beta c}{}
        \addConstraint{\sum_{i\notin V_F} z_i}{= 1}{}
        \addConstraint{z_i}{\geq 0, }{ \quad\forall i \notin V_F}
\end{mini}

we can solve the dual LP (\ref{lp_iterative_dual}) in polynomial time to obtain an optimal dual solution \( z^* \). By strong duality, the optimum primal objective, denoted by \( \alpha^* \), equates to the optimum dual objective. Let \( x^* \) denote any optimum solution of the primal LP (\ref{lp_iterative}). By the complementary slackness condition, for all \( i \notin V_F \) such that \( z_{i}^* > 0 \), it holds that \( x^*_i = \alpha^* \). This implies that for the set of indices $\{i\notin V_F\}$ with positive duals, denoted as \( V_{\alpha^*} \), the corresponding primal variables assigned the fixed value \( \alpha^* \) in every feasible max-min solution; hence, these variables form a subset of the minimum set of variables that must be fixed at \( \alpha^* \). Given that the dual optimum solutions satisfy the feasibility constraint \( \sum_{i \notin V_F} z^*_i = 1 \), there exists at least one index $i \notin V_F$ with a positive dual, implying that \( V_{\alpha^*} \) is non-empty.

Next, for the primal variables with indices in \( V_{\alpha^*} \), their values are fixed, and the set of fixed indices $V_F$ along with the mapping $m(\cdot)$ are updated accordingly, i.e. $V_F = V_F \cup V_{\alpha^*}$ and $\forall i\in V_{\alpha^*}: m(x_i) = \alpha^*$.
The procedure iterates until all indices are fixed.  Since in every iteration at least one new index is fixed, there are at most \( n \) iterations, ensuring that the problem is solvable in polynomial time.

%%%%%%%

\begin{lemma}
\label{lem:leximin_optimum}
% ${}^\dagger$
    The procedure returns a leximin optimum solution to LP(\ref{lp_general}).
\end{lemma}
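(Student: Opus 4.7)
The plan is to prove by induction on the iteration count $k$ that after $k$ iterations of the procedure, the set $V_F$ of fixed indices together with the mapping $m$ agrees exactly with the leximin optimum solution on those coordinates, and moreover $V_F$ contains precisely those indices that take the $k$ smallest distinct values in the (unique, by Claim \ref{cl:unique}) leximin optimum $x^{\text{lex}}$. Since the procedure enlarges $V_F$ by at least one element per iteration, after at most $n$ iterations every coordinate is fixed and the assembled vector equals $x^{\text{lex}}$.

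The base case $k=0$ is vacuous. For the inductive step, let $\mu$ denote the smallest value taken by $x^{\text{lex}}_i$ over $i \notin V_F$. First I would show that the optimum $\alpha^*$ of LP (\ref{lp_iterative}) equals $\mu$. On one hand, by induction $x^{\text{lex}}$ is feasible for LP (\ref{lp_iterative}) with $\alpha = \mu$, so $\alpha^* \geq \mu$. On the other hand, if some feasible $x'$ satisfied the constraints with $\alpha > \mu$, then $x'$ agrees with $x^{\text{lex}}$ on $V_F$ but strictly exceeds $\mu$ at every unfixed coordinate; sorting both in ascending order, $x'$ is lexicographically larger than $x^{\text{lex}}$, contradicting the leximin property.

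Next I would identify $V_{\alpha^*}$ with $\{i \notin V_F : x^{\text{lex}}_i = \mu\}$. The easy inclusion follows from complementary slackness: any $i$ with $z_i^* > 0$ must satisfy $x_i = \alpha^*$ in every optimum of LP (\ref{lp_iterative}), and $x^{\text{lex}}$ is one such optimum. The reverse inclusion is the main obstacle. For each $i \notin V_F \cup V_{\alpha^*}$, by definition there exists an optimum $x^{(i)}$ with $x^{(i)}_i > \alpha^*$. Averaging these finitely many optima and using convexity of the optimum set of LP (\ref{lp_iterative}), I obtain a single optimum $x^{\ddagger}$ in which every coordinate outside $V_F \cup V_{\alpha^*}$ strictly exceeds $\alpha^*$. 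If $V_{\alpha^*}$ were a proper subset of $\{i \notin V_F : x^{\text{lex}}_i = \mu\}$, then the ascending sorts of $x^{\ddagger}$ and $x^{\text{lex}}$ would agree on $V_F$ and on the $|V_{\alpha^*}|$ coordinates equal to $\alpha^*$, but $x^{\ddagger}$ would have a strictly larger value at the next position while $x^{\text{lex}}$ still has $\mu = \alpha^*$ there, contradicting leximin maximality of $x^{\text{lex}}$.

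Combining these two points, in iteration $k+1$ the procedure fixes exactly the indices taking the next smallest distinct value in the leximin, and fixes them at exactly that value, preserving the inductive invariant. The entire argument rests on three ingredients that are either already in hand or standard: uniqueness of the leximin (Claim \ref{cl:unique}), strong duality and complementary slackness for LP (\ref{lp_iterative})/(\ref{lp_iterative_dual}), and convexity of the optimal face. The only delicate step is the convexity-plus-sorting argument used for the reverse inclusion $\{i \notin V_F : x^{\text{lex}}_i = \mu\} \subseteq V_{\alpha^*}$; the rest is bookkeeping.
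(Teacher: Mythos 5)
Your inductive strategy is sound in outline and differs from the paper's argument, which instead assumes the output is not leximin, locates the first index of disagreement with the true leximin solution, and derives a contradiction via the step-$j$ LP and complementary slackness. Your two correct ingredients --- that $\alpha^*=\mu$ (feasibility of the leximin solution for LP~(\ref{lp_iterative}) gives one direction, a sorted-lists comparison gives the other) and that $V_{\alpha^*}\subseteq\{i\notin V_F: x^{\mathrm{lex}}_i=\alpha^*\}$ by complementary slackness --- are essentially the same facts the paper uses, just organized forward rather than by contradiction.

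There is, however, a genuine gap in your reverse inclusion. You assert that for each $i\notin V_F\cup V_{\alpha^*}$ there exists ``by definition'' an optimum $x^{(i)}$ of LP~(\ref{lp_iterative}) with $x^{(i)}_i>\alpha^*$. That is not a definition; it is strict complementarity. $V_{\alpha^*}$ is the support of \emph{one particular} optimal dual solution $z^*$, and $z^*_i=0$ does not rule out that some \emph{other} dual optimum has $z_i>0$, in which case every primal optimum has $x_i=\alpha^*$ and no such $x^{(i)}$ exists. A generic LP solver (e.g., simplex) can return exactly such a non-strictly-complementary $z^*$, so the inclusion $\{i\notin V_F: x^{\mathrm{lex}}_i=\mu\}\subseteq V_{\alpha^*}$ can fail, and with it your invariant that $V_F$ consists of precisely the indices carrying the $k$ smallest distinct values. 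The good news is that this step is not needed: the lemma follows from the forward inclusion together with non-emptiness of $V_{\alpha^*}$ (forced by $\sum_{i\notin V_F}z^*_i=1$). Weaken the invariant to ``$m$ agrees with $x^{\mathrm{lex}}$ on $V_F$, and every fixed value is at most every unfixed value of $x^{\mathrm{lex}}$''; then each iteration fixes at least one new index at its correct leximin value (possibly leaving other $\mu$-valued indices to be fixed, again at $\mu$, in later iterations), and termination in at most $n$ iterations still holds. Alternatively, you could explicitly compute a strictly complementary dual optimum, but neither the algorithm as described nor the paper's proof requires this.
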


\begin{proof}
% [Proof of Claim~\ref{lem:leximin_optimum}]
Let $\gamma_j$ be the value at which variables were fixed at step $j$ of our algorithm. Let $x^j$ be an optimum solution to LP (\ref{lp_iterative}) in step $j$ which assigns all variables fixed in step $j$ a value $\gamma_j$ and all other non-fixed variables a value at least $\gamma_j$. $x^j$ is a feasible solution to LP(\ref{lp_iterative}) in step $(j+1)$ and hence the optimum solution at this step has a value at least $\gamma_j$. This implies $\gamma_{j+1}\ge \gamma_j$ at any step $j$ of the algorithm, excluding the last step.

We renumber variables in the order they got fixed through our algorithm. This implies $x_i$ got fixed either in the same or in an earlier step than $x_{i+1}$. Let $\alpha_i = m(x_i)$ for all $i\in [n]$. It follows from the above discussion that $\alpha_1\le\alpha_2\le \cdots\le \alpha_n$. 

We prove the lemma by contradiction and assume that the optimal solution found by our algorithm, say $a$, is not leximin. Let $b$ be the leximin optimal solution and $\beta_1\le\beta_2\le\cdots\le\beta_n$ be the values of variables in solution $b$ in an ascending order. Let $i$ be the first index where $\beta_i > \alpha_i$.

We first observe that variables corresponding to values $\beta_1,\beta_2,\ldots,\beta_{i-1}$ in $b$ and $\alpha_1,\alpha_2,\ldots,\alpha_{i-1}$ in $a$ are identical. If that was not the case, one could use an argument similar to Claim~\ref{cl:unique} to show that the solution $(a+b)/2$ is an optimal solution lexicographically larger than $b$ leading to a contradiction. 

Suppose variable $x_i$ was assigned value $\alpha_i$ at step $j$ of our algorithm and let $x^j$ be an optimal solution to LP(\ref{lp_iterative}) in this step. We first consider the case when variables $x_1,\ldots,x_{i-1}$ were fixed at steps before $j$. The solution $b$ is a feasible solution to LP(\ref{lp_iterative}) at step $j$ and has value $\beta_i > \alpha_i$. This contradicts our assumption that the optimum solution to LP(\ref{lp_iterative}) at step $j$ has value $\alpha_i$. We next consider the case when $x_{i-1},x_i$ were fixed together at step $j$. Hence in the optimum solution to LP(\ref{lp_iterative_dual}) at step $j$, $z^*_{i-1}>0$ and $z^*_i>0$ which implies that in every optimum solution to LP(\ref{lp_iterative}) at this step the variables $x_{i-1},x_i$ have identical values. However, $b$ is also an optimum solution to LP(\ref{lp_iterative}) at this step which violates this property since $\beta_i > \alpha_i=\alpha_{i-1}=\beta_{i-1}$ yielding a contradiction.
\end{proof}

The above method gives us an efficient algorithm to find the leximin LP solution and we can use this method to solve for the leximin Owen set imputation for the Max-flow game and the $b-$matching games. Below, we describe the novel structure of the Owen set in these games(and the MST game) and how we can use them to get efficeint combinatorial algorithms for the same.

\section{Max-flow game}
\label{sec:flow_game}

We would like to obtain the leximin(/leximax) fair imputations in the core of the Max-flow game. But, a corollary of the following theorem is that finding these imputations is NP-hard.

\begin{restatable}{theorem}{flowleximinhardness}
\label{thm:flow-leximin-hardness}
${}^\dagger$
    Finding a core imputation that maximizes the minimum profit-share of any edge in a max-flow game is NP-hard. Similarly, it is NP-hard to find a core imputation that minimizes the maximum profit-share of any edge.
\end{restatable}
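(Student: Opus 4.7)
My starting point is the result of \cite{Fang2002computational} that core membership for max-flow games is \textsf{co-NP}-hard; equivalently, the \emph{violation} problem --- given a max-flow instance $(G, c)$ together with an imputation $p$ satisfying $\sum_{e \in E} p_e = v(E)$, decide whether some $S \subseteq E$ satisfies $\sum_{e \in S} p_e < v(S)$ --- is \textsf{NP}-hard. The plan is to reduce this violation problem to the task of computing a core imputation of maximum min-share (respectively, minimum max-share), so that a polynomial algorithm for leximin or leximax core would decide core membership.

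\textbf{Reduction.} Given an instance $(G, c, p)$ of the violation problem, I would construct an augmented max-flow instance $(G', c')$ in which $G$ is embedded as a subgraph and each original edge $e$ is paired with an ``anchor'' gadget --- a bundle of fresh $s'$-$t'$ paths with carefully chosen capacities --- together with a target threshold $\tau$. The gadgets are designed so that any core imputation $q$ of $(G', c')$ with $\min_{e'} q_{e'} \ge \tau$ is forced to agree with $p$ on $E(G)$, while every other core imputation of $(G', c')$ has some coordinate below $\tau$. Then the leximin value of $(G', c')$ meets $\tau$ iff $p \in \mathrm{core}(G, c)$, yielding NP-hardness of leximin. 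The ``if'' direction extends $p$ to all of $E(G')$ by allocating the remaining profit uniformly on the anchor edges and verifies the core constraints by showing that each coalition of $E(G')$ either projects to a core inequality of $(G, c)$ at $p$, or is dominated by the anchor structure. The ``only if'' direction argues that a violated coalition $S$ of $(G, c)$ prevents any core-feasible extension of $p$ in $(G', c')$, and the anchoring makes $\tau$ unattainable for any replacement imputation. The leximax statement then follows by a symmetric construction in which the anchors impose upper rather than lower witnesses on shares, so that the minimum max-share in the core decides whether $S$ forces some edge to absorb too much profit.

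\textbf{Main obstacle.} The delicate point is the anchor design. A naive parallel-path gadget would allow the leximin allocator to redistribute share among anchor edges and ``hide'' a violation in $G$; to prevent this, the gadget must include bottleneck sub-coalitions whose values pin $q_e = p_e$ on each edge $e \in E(G)$ once the threshold $\tau$ is enforced. Calibrating capacities so that this pinning is tight for every edge simultaneously --- while still admitting a feasible imputation with $\min q \ge \tau$ whenever $p \in \mathrm{core}(G, c)$ --- is the combinatorial crux. I expect the construction to reuse ideas from the gadgetry of \cite{Fang2002computational} used for the \textsf{co-NP}-hardness of core testing, since both proofs ultimately need to enforce that a proposed imputation cannot be rearranged without exposing a sub-coalition violation.
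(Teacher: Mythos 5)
Your starting point (the co-NP-hardness of core membership from \cite{Fang2002computational}) is the same as the paper's, and your high-level goal --- arrange matters so that a leximin/leximax oracle would decide core membership --- is also the right one. But the proposal has a genuine gap precisely where you flag the ``combinatorial crux'': the anchor gadget that is supposed to pin $q_e = p_e$ on every original edge once a threshold $\tau$ is enforced is never constructed, and it is not clear such a gadget exists for an \emph{arbitrary} violation instance $(G,c,p)$. You are attempting a general reduction from the violation problem for arbitrary imputations $p$, which is harder than necessary and makes the pinning requirement very demanding (e.g., $p$ may assign zero to some edges and wildly unequal values to others, and a coalition in $G'$ can mix original and anchor edges in ways your sketch does not control).

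The paper avoids the gadget entirely by a different and much more concrete device: it takes the \emph{specific} hard instance $(G,c,p)$ produced by the reduction in \cite{Fang2002computational} and transforms it in three steps so that the target imputation becomes \emph{uniform}. First, capacities along the $4|F|$ distinguished $s$-$t$ paths are increased by a small $\alpha$ and the extra profit is given to the previously zero-profit edges (so every edge now has strictly positive share); second, everything is scaled by $M=3|F|(q+1)$ so all shares are positive integers; third, each edge of profit $k$ is subdivided into a path of $k$ edges, each receiving profit $1$. Each step is shown to preserve membership/non-membership in the core. After the last step the candidate imputation assigns exactly $1$ to every edge, and since the total profit equals the number of edges, it is the \emph{only} imputation with minimum share $\ge 1$ (and the only one with maximum share $\le 1$); hence it is automatically the maximin, minimax, leximin and leximax imputation, and an algorithm computing any of these would decide whether it lies in the core. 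That uniformization is the key idea missing from your proposal: rather than designing a gadget that forces agreement with a fixed nonuniform $p$, one reshapes the instance so that the imputation to be tested is the unique extremal one by counting alone. To repair your argument you would either need to supply the anchor construction in full (and prove both directions of the pinning claim for every coalition of $G'$), or adopt the paper's route of perturbation, scaling and edge subdivision applied to the specific X3C-derived instance.
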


And so, we will focus on the Owen set of the max-flow game. We will first look at a linear programming(LP) formulation of the max-flow problem.

\subsection{LP formulation of the max-flow problem}

Consider a max-flow instance on a directed graph $G = (V, E)$, a source vertex $s\in V$, a sink vertex $t\in V$ and edge capacities $c: E \to \mathbb{R}_+$. The maximum $s$-$t$ flow problem can be formulated as a circulation by introducing an edge of infinite capacity from $ t $ to $ s $ and ensuring flow conservation at all vertices including $s,t$. The objective of finding a flow of maximum value can be restated as finding a circulation which maximizes the flow in the edge $(t,s)$. A linear program(LP) for the maximum flow problem is formulated by associating a variable $f_{ij}$ with edge $(i,j)\in E$ whose value equals the flow through this edge.  

\begin{maxi}
		{} {f_{ts}}
			{\label{eq.flow-primal}}
		{}
		\addConstraint{\sum_{j:(j,i)\in E}{f_{ji}} ~-
            \sum_{j:(i,j)\in E}{f_{ij}}}{\leq 0 \quad}{\forall i \in V}
		\addConstraint{0\leq f_{ij}}{\leq c_{ij} }{\forall (i,j) \in E}
	%	\addConstraint{f_{ij}}{\geq 0}{\forall (i, j) \in E}
\end{maxi} 

The first set of inequalities, taken together for all vertices ensures the conservation of flow while the second set of inequalities imposes the capacity constraints.

The dual of the maxflow LP is obtained by associating a potential $\pi_i$ with vertex $i\in V$ and a length $\delta_{ij}$ with edge $(i,j)\in E$. 

\begin{mini}
    {} {\sum_{(i, j) \in E}  {c_{ij} \delta_{ij}}}  
        {\label{eq.flow-dual}}
    {}
    \addConstraint{\delta_{ij} - \pi_i + \pi_j}{ \geq 0 \quad }{\forall (i, j) \in E}
    \addConstraint{\pi_s - \pi_t}{\geq 1}{}
    \addConstraint{\delta_{ij}}{\geq 0}{\forall (i, j) \in E}
    \addConstraint{\pi_i}{\geq 0}{\forall i \in V}
\end{mini}

The first set of constraints requires that the potential drop across any edge be at most the length of the edge. The second constraint requires that the potential difference between $s,t$ is at least 1 and these two constraints together imply that the length of any path from $s$ to $t$ under the length function $\delta$ is at least 1. Thus the dual LP can be viewed as an assignment of lengths to edges, $\delta: E \rightarrow \mathbb{R}_+$, that minimizes $\sum_{(i,j)\in E} \delta_{ij}c_{ij}$ subject to the condition that the length of any path from $s$ to $t$ is at least 1. If $\delta$ is 0/1, then every path from $s$ to $t$ has an edge of length 1 and hence edges of length 1 form a cut separating $s$ and $t$; this set of edges is an $s$-$t$ cut. For this reason, the assignment $\delta$ is also called a fractional $s$-$t$ cut and its capacity is $\sum_{(i,j)\in E} \delta_{ij}c_{ij}$. By the strong duality theorem, the capacity of the minimum fractional $s$-$t$ cut equals the maximum $s$-$t$ flow.

\subsection{Owen set for the max-flow game}

The central idea is to identify the subset of the core that corresponds to optimal solutions of the dual linear program (LP), referred to as the \textbf{Owen set}. 

Let $(\delta, \pi)$ represent an optimal dual solution. Define the \textit{profit} of an edge $(i, j)$ as: 
$$ p_{ij} := c_{ij} \cdot \delta_{ij}$$
This defines an imputation because, by the LP duality theorem, the worth of the game which is the maximum flow, equals the objective function value of an optimal dual solution. Specifically, $\text{worth}(E) = \sum_{(i, j) \in E} c_{ij} \cdot \delta_{ij}$. An imputation is classified as an \textbf{Owen set imputation} if and only if there exists an optimal dual solution $(\delta, \pi)$ such that: $p_{ij} := c_{ij} \cdot \delta_{ij}.$

\begin{theorem}
	\label{thm.flow}
	For the max-flow game, the Owen set is a subset of the core. 
\end{theorem}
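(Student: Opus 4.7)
The plan is to verify the two defining conditions of a core imputation for every $p$ in the Owen set: that $p$ is indeed an imputation (i.e.\ shares sum to $v(E)$), and that for every sub-coalition $S \subseteq E$ the total share of $S$ under $p$ is at least the worth $v(S)$.

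The first condition is essentially free from LP-duality: if $(\delta,\pi)$ is optimal for the dual LP \eqref{eq.flow-dual}, then by strong duality $\sum_{(i,j) \in E} c_{ij}\delta_{ij}$ equals the maximum $s$-$t$ flow in $G$, which is exactly $v(E)$. Since $p_{ij}=c_{ij}\delta_{ij}\geq 0$, this already says $p$ is a non-negative allocation summing to $v(E)$.

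The main content is the sub-coalition inequality. Given any $S \subseteq E$, let $G_S$ be the subgraph consisting of the edges in $S$ (with the original capacities, and with $V$, $s$, $t$, and the circulation edge $(t,s)$ unchanged). The worth $v(S)$ is the maximum $s$-$t$ flow in $G_S$, which by strong duality equals the optimum of the analogue of LP \eqref{eq.flow-dual} restricted to edges of $S$. I would now take $(\delta,\pi)$ and simply restrict $\delta$ to $S$, and observe that $(\delta\!\upharpoonright_S,\pi)$ is \emph{feasible} for that restricted dual: the constraint $\delta_{ij}-\pi_i+\pi_j\geq 0$ still holds for each $(i,j)\in S$ (these are a subset of the original constraints), the global constraint $\pi_s-\pi_t\geq 1$ is untouched, and the non-negativity constraints on the surviving variables are inherited. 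Hence by weak duality applied to $G_S$,
\[
 v(S) \;\leq\; \sum_{(i,j) \in S} c_{ij}\,\delta_{ij} \;=\; \sum_{(i,j) \in S} p_{ij}.
\]

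I do not expect a real obstacle here: the whole argument boils down to the fact that the dual constraints of a max-flow LP are indexed by individual edges (plus one global source-sink constraint), so erasing edges only erases constraints, and a feasible dual solution stays feasible under restriction. The only subtle point worth stating explicitly in the write-up is the degenerate case where $G_S$ has no $s$-$t$ path, so that $v(S)=0$; the inequality above still holds trivially because $p_{ij}\geq 0$ for each edge. With both conditions established, $p$ satisfies Definition~\ref{defCore} and therefore lies in the core.
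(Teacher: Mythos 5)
Your proof is correct and is essentially the paper's argument: the paper establishes the same inequality $v(S)\le\sum_{(i,j)\in S}c_{ij}\delta_{ij}$ by decomposing the max-flow of $G(S)$ into paths and using that every $s$-$t$ path has $\delta$-length at least $1$, which is just weak duality for the restricted instance unrolled by hand. Your version, which observes that deleting edges only deletes dual constraints and then invokes weak duality directly, is the same idea in compressed form (and incidentally shows that the paper's appeal to complementary slackness is not needed, since $\sum_{p\ni(i,j)}f_p\le c_{ij}$ already gives the required direction).
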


\begin{proof}

For any sub-coalition $S \subseteq V$, define $ \text{profit}(S) := \sum_{{ij} \in S} {p({i, j})} .$ To prove that $p$ is in the core of the max-flow game, we need to show that for any sub-coalition $S \subseteq V$, $\text{worth}(S) \leq \text{profit}(S)$. 
	 
Consider a sub-coalition $S \subseteq V$. Let $f$ be a max-flow in the graph $G(S)$; its value is $\text{worth}(S)$. Using standard methods, $f$ can be decomposed into at most $|S|$ flow paths. Let $\mathcal{P}$ denote the set of all such paths and for path $p \in \mathcal{P}$, let $f_p$ denote the flow sent on $p$. 

We will use the following two facts. First, as stated above, the sum of distance labels on $p$, $\sum_{{ij} \in p} {\delta_{ij}} \geq 1$. Second, by complementary slackness, if $\delta_{ij} > 0$, edge ${i, j}$ is fully saturated, and therefore for such an edge, $\sum_{p \ni {i j}} {f_p} \ = \ c_{ij}$. Now,
$$ \text{worth}(S) \ = \ \sum_{p \in \mathcal{P}} {f_p} \ \leq \ \sum_{p \in \mathcal{P}} {f_p \cdot \left(\sum_{(i, j) \in p} {\delta_{ij}}\right)} \ = \ \sum_{(i, j) \in S} {\delta_{ij} \cdot \left(\sum_{p \ni (i, j)} {f_p} \right)} \ = \ \sum_{(i, j) \in S} {\delta_{ij} \cdot c_{ij}} \ = \text{profit}(S)$$ 
where the inequality follows from the first fact and the second-last equality follows from the second. 

\begin{corollary}
\label{cor.flow}
	The core of the max-flow game is always non-empty. 
\end{corollary}

Example \ref{ex.flow-no-core} gives a max-flow game and a core imputation not in its Owen set. Therefore, Owen set does not fully characterize the core.
% Therefore, the dual partially characterizes the core of the max-flow game.
\end{proof}

%%%%%%%%%%%%%%%%%%%%%%%%%%%%%%%%%%%%%%%%%%%%%%%%%%%%%%%%%%%

\begin{figure}[ht]
\begin{center}
\includegraphics[width=4in]{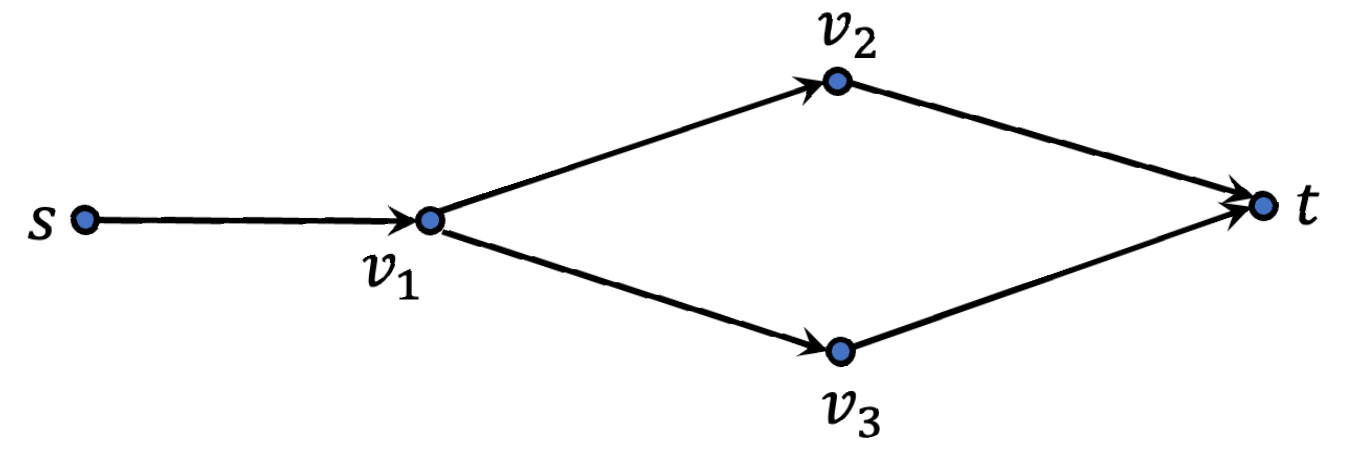}
\caption{The graph for Example \ref{ex.flow-no-core}.}
\label{fig.flow}
\end{center}
\end{figure}

%%%%%%%%%%%%%%%%%%%%%%%%%%%%%%%%%%%%%%%%%%%%%%%%%%%%%%%%%%%%%%%

\begin{example}
	\label{ex.flow-no-core}
Consider the graph given in Figure \ref{fig.flow}. Assume that the capacities of edges $(v_1, v_2)$ and $(v_1, v_3)$ are 1 each. An easy way of showing that there is a core imputation which does not correspond to an optimal dual solution is the following: Assume that the rest of the edges have very high capacities, e.g., 10 each. Assign a profit of 2 to edge $(s, v_1)$ and zero to the rest; observe that the worth of the game is 2. Clearly, the dual will not assign edge $(s, v_1)$ a positive distance label, since a max-flow does not saturate it. 

\end{example}

\begin{definition}
    An edge $e\in E$ is {\em essential} if it is saturated in every $s$-$t$ max-flow and it is called {\em inessential} if there is an $s$-$t$ max-flow that does not saturate $e$.
\end{definition}

\begin{remark}
\label{flow:negotiating_power}
If the capacity of an essential edge is dropped by (a small)$\epsilon$, the worth of the game also drops by $\epsilon$.
However, dropping the capacity of a non-essential edge by $\epsilon$ will not lead to any change in the worth of the game.
Therefore, only essential edges have negotiating power and, by complementarity slackness conditions, profits assigned by a Owen set imputations are consistent with this. 

\end{remark}

\begin{theorem}
\label{thm:flow_efficient_dual}
    Deciding if an imputation of the max-flow game is in the Owen set can be done in polynomial time. 
\end{theorem}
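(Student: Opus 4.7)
The plan is to reduce the Owen-set membership question to a shortest-path feasibility problem. Given an imputation $p$ with $\sum_{(i,j) \in E} p_{ij} = v(E)$, by definition $p$ is in the Owen set iff there exist dual variables $(\delta, \pi)$ feasible for LP (\ref{eq.flow-dual}), optimal, and satisfying $c_{ij} \delta_{ij} = p_{ij}$ for every edge $(i,j)$. First I would observe that the last condition completely pins down $\delta$: for every edge with $c_{ij} > 0$ we are forced to take $\delta_{ij} = p_{ij}/c_{ij}$, while for every edge with $c_{ij} = 0$ we need $p_{ij} = 0$ and $\delta_{ij}$ is unconstrained (so I would effectively delete such edges from the ``dual constraint graph,'' or set $\delta_{ij}$ to any sufficiently large value). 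After this step, an easy sanity check is that $p_{ij} \ge 0$ for all edges, that $p_{ij} = 0$ on every zero-capacity edge, and that $\sum p_{ij} = v(E)$, which is the max-flow value.

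Next, with $\delta$ now fixed, the remaining task is to decide whether a $\pi \ge 0$ exists with $\pi_i - \pi_j \le \delta_{ij}$ for every $(i,j) \in E$ and $\pi_s - \pi_t \ge 1$. I would argue that this system is feasible if and only if the shortest-path distance from $s$ to $t$ under the edge-length function $\delta$, call it $d_\delta(s,t)$, is at least $1$. In one direction, telescoping $\pi_i - \pi_j \le \delta_{ij}$ along any $s$-to-$t$ path $P$ gives $\pi_s - \pi_t \le \sum_{(i,j) \in P} \delta_{ij}$, so $1 \le \pi_s - \pi_t \le d_\delta(s,t)$ is necessary. In the other direction, if $d_\delta(s,t) \ge 1$, setting $\pi_i := d_\delta(i,t)$ satisfies all the constraints: $\pi \ge 0$ because $\delta \ge 0$, the triangle inequality $d_\delta(i,t) \le \delta_{ij} + d_\delta(j,t)$ yields $\pi_i - \pi_j \le \delta_{ij}$ on every edge, and $\pi_s - \pi_t = d_\delta(s,t) \ge 1$.

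It remains to see that any feasible $(\delta, \pi)$ produced this way is automatically dual-optimal. This is immediate by strong duality: its objective value is $\sum_{(i,j) \in E} c_{ij} \delta_{ij} = \sum_{(i,j) \in E} p_{ij} = v(E)$, which equals the maximum flow and hence the optimal dual value. So feasibility is equivalent to optimality here.

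Putting everything together, the algorithm is: (i) verify the elementary compatibility conditions on $p$; (ii) form $\delta$ as above; (iii) run Dijkstra's algorithm (valid since $\delta \ge 0$) to compute $d_\delta(s,t)$; and (iv) answer \emph{yes} iff $d_\delta(s,t) \ge 1$, in which case $\pi_i := d_\delta(i,t)$ gives an explicit certifying dual optimum. All steps run in polynomial, indeed strongly polynomial, time. There is no serious obstacle; the only subtlety is handling zero-capacity edges correctly and recognizing that dual optimality is automatic once the objective matches $v(E)$, so the problem genuinely collapses to a single shortest-path computation.
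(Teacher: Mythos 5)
Your proof is correct, and the reduction to a single shortest-path computation is a genuinely different (and cleaner) route than the paper's. The paper shares your first step --- the profits force $\delta_{ij}=p_{ij}/c_{ij}$, and any feasible completion is automatically optimal because its objective equals the max-flow value --- but then either falls back on solving the resulting LP feasibility problem generically, or runs a combinatorial procedure (Algorithm~\ref{alg:max_flow_Owen_set_check}) that builds the Picard--Queyranne structure, rejects if a non-essential edge is paid, propagates potentials along essential edges by forcing the complementary-slackness equalities $\pi_j=\pi_i+\delta_{ij}$, and finally verifies dual feasibility; its correctness proof (Lemma~\ref{lem:flow_efficient_dual}) leans on the machinery developed for the leximin algorithm. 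You instead observe that once $\delta$ is fixed, the residual system $\pi_i-\pi_j\le\delta_{ij}$, $\pi_s-\pi_t\ge 1$, $\pi\ge 0$ is a difference-constraint system that is feasible iff $d_\delta(s,t)\ge 1$, certified by $\pi_i:=d_\delta(i,t)$. Both directions of your equivalence are sound, and optimality is indeed automatic by strong duality since $\sum_{(i,j)}c_{ij}\delta_{ij}=\sum_{(i,j)}p_{ij}=v(E)$. What your approach buys is a strongly polynomial algorithm (one max-flow computation to get $v(E)$ plus one Dijkstra run) with an elementary correctness proof that needs neither the Picard--Queyranne construction nor any appeal to essential edges or complementary slackness; what the paper's algorithm buys is an explicit structural dual certificate tied to the machinery it reuses for the leximin computation. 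The only loose end in your write-up is vertices from which $t$ is unreachable in the $\delta$-length graph, where $d_\delta(i,t)=\infty$; since any out-neighbor of such a vertex is also unable to reach $t$, assigning all of them one sufficiently large common finite potential preserves every constraint, so this is a routine patch rather than a gap.
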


\begin{proof}
    Trivially, we can modify LP~\ref{eq.flow-dual} to check for the existence of a feasible solution $(\pi, \delta)$, where the distance labels for each edge are defined based on their profit as $\delta_{ij} = \frac{p_{ij}}{c_{ij}}$. Algorithm~\ref{alg:max_flow_Owen_set_check} provides a combinatorial procedure to determine whether a given imputation belongs to the Owen set. The algorithm first constructs a solution by enforcing strict complementarity conditions on a set of essential edges, then verifies whether the resulting solution satisfies all dual LP constraints. The algorithm builds upon ideas from the section~\ref{subsec:flow-leximin} and the proof of the algorithm is provided at the end of the section.

\end{proof}
\begin{figure}[h]
	\begin{wbox}
		\textbf{Deciding if an imputation is in the Owen set of a max-flow game}: \\
		\textbf{Input:} $G=(V,E)$, $s,t\in V$, $c : E \to \mathbb{R}_+$ and an imputation $p= \{p_{ij}\}_{(i,j)\in E}$. 
		\begin{enumerate}[noitemsep,nosep]
            \item If any non-essential edge got paid, return ``No''.
		  \item \textbf{Initialization:}
		  \begin{enumerate}[noitemsep,nosep]
                \item Set $\delta_{ij} = \frac{p_{ij}}{c_{ij}},\forall (i,j)\in E$. 
				\item $G'=(V',E')\leftarrow $ Picard-Queyranne structure of $G$.
                \item $\pi(t') \leftarrow 0$; $\textsc{Fixed} \leftarrow \{t'\}$; $\textsc{Free} \leftarrow V'\setminus \textsc{Fixed}$.
		  \end{enumerate}  
            \item \textbf{While} \textsc{Fixed} $\neq V'$ \textbf{do:}
                \begin{enumerate}[noitemsep,nosep]
                    \item Find an essential edge $e=(i,j)$ going from \textsc{Fixed} to \textsc{Free}.
                    \item Set $\pi_j = \pi_i + \delta_{ij}$.
                    \item Update \textsc{Fixed} and \textsc{Free}.
            \end{enumerate}
            \item Set potentials on $V$ to the potentials on corresponding SCC nodes in $V'$.
            \item \textbf{Output:} ``Yes'' if $(\pi,\delta)$ is a feasible dual solution to LP~\ref{eq.flow-dual}, otherwise ``No''.

		\end{enumerate}
	\end{wbox}
        \caption{Algorithm to decide if an imputation is in the Owen set of a max-flow game}
	\label{alg:max_flow_Owen_set_check} 
\end{figure}

\subsection{Computing the leximin Owen set imputation}
\label{subsec:flow-leximin}

\subsubsection{High level ideas}

Let $G=(V,E),s,t\in V, c: E\rightarrow \mathbb{R}_+$ be an instance of the max-flow game. Computing the leximin Owen set imputation involves three key ideas:
\begin{enumerate}
    \item \textbf{Dual LP and Potentials:} Using dual LP~\ref{eq.flow-dual}, we work with vertex potentials instead of edge distance labels. Optimal potentials corresponding to the leximin imputation are computed iteratively, building distance labels and profits from these potentials.
    \item \textbf{Iterative Optimization:} In each iteration, vertices and their optimal potentials that maximize the next minimum profit share are obtained. The key idea is that the vertices chosen in each iteration will correspond to certain longest paths in a related graph.
    \item \textbf{Efficient Longest Path Computation:} Computing longest paths, like the Hamiltonian path, might take exponential time. To avoid this, we leverage the ``Picard-Queyranne structure'' of $G$. This is a Directed Acyclic Graph(DAG) which retains all essential edges and the max-flow of $G$. This enables us to efficiently compute the required paths.
\end{enumerate}

\subsubsection{Picard-Queyranne Structure}

Our algorithm first starts by creating a Picard Queyranne structure of the graph, defined as follows. 

\begin{definition}
    Let $G=(V,E), s,t\in V$ be a max-flow instance. Let $f$ be the maximum $s-t$ flow in $G$ and $G|f$ be the residual graph of $G$ under $f$\footnote{Recall that the \textit{residual graph} $G|f$ is obtained from $G$ by reducing the capacity of every edge $(i,j)\in E$ from $c_{ij}$ to $c_{ij}-f_{ij}$ and introducing edge $(j,i)$ with capacity $f_{ij}$ for every edge $(i,j)\in E$ with $f_{ij}>0$. Edges with zero capacity are then deleted.}. Construct $G'=(V',E')$ from $G|f$ by shrinking each maximal strongly connected component in $G|f$ into a vertex and removing any self-loops. The resulting graph $G'=(V',E')$ is a \textit{Picard-Queyranne structure} of $G$.
\end{definition}

The resulting graph $G'$ will be an acyclic multigraph. Each vertex $i\in V'$ is a strongly connected component in $G|f$ and let $\phi(i)\subseteq V$ be the vertices in this strongly connected component. There is no path from $s$ to $t$ in $G|f$; hence, $s,t$ are in distinct strongly connected components of $G|f$. Therefore $s'\neq t'$ where $s'=\phi^{-1}(s)$ and $t'=\phi^{-1}(t)$.

\begin{remark}
    $G'$ is exactly the graph described by Picard and Queyranne(\cite{Picard_Queyranne_Structure}), and so, we call it the Picard-Queyranne structure. They prove that every $s'-t'$ cut of the graph $G'$ corresponds to a min $s-t$ cut of the graph $G$. 
\end{remark}

Define $F'$ and $Z'$ as the sets of edges corresponding to the edges that carry the full flow($f_{ij}=c_{ij}$) and the zero flow($f_{ij}=0$) respectively in $G$ under $f$. Formally, $$F'=\set{(j,i)\in E', f_{ij}=c_{ij}}$$ $$Z'=\set{(i,j)\in E', f_{ij}=0}$$
The construction of the graph ensures only these edges are preserved in $G'$ - all other edges will removed when the strongly connected components are shrinked, i.e., 

\begin{restatable}{lemma}{flowsplit}
\label{lem:flow_split_E'}
% ${}^\dagger$
    $E'=F'\cup Z'$.
\end{restatable}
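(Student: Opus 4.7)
The plan is to classify each original edge $(i,j) \in E$ by the value of $f_{ij}$ and track what happens to the corresponding residual edges after the SCC contraction. Since $F'$ and $Z'$ are defined as subsets of $E'$, the inclusion $F' \cup Z' \subseteq E'$ is immediate from their definitions, so the whole content of the lemma is the reverse inclusion $E' \subseteq F' \cup Z'$. Equivalently, I must rule out any third kind of surviving edge in $G'$.

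Recall how $G|f$ is built from $G$: each edge $(i,j) \in E$ contributes a forward residual edge $(i,j)$ with capacity $c_{ij} - f_{ij}$ (present iff $f_{ij} < c_{ij}$) and a backward residual edge $(j,i)$ with capacity $f_{ij}$ (present iff $f_{ij} > 0$). I would split into three cases based on $f_{ij}$:
\begin{enumerate}
\item $f_{ij} = 0$: only the forward edge $(i,j)$ is present in $G|f$. If it survives the SCC contraction, it contributes to $Z'$.
\item $f_{ij} = c_{ij}$: only the backward edge $(j,i)$ is present in $G|f$. If it survives, it contributes to $F'$.
\item $0 < f_{ij} < c_{ij}$: both $(i,j)$ and $(j,i)$ are present in $G|f$.
\end{enumerate}

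The heart of the argument is Case 3. Here the pair $(i,j),(j,i)$ forms a length-two directed cycle in $G|f$, so $i$ and $j$ lie in the same maximal strongly connected component. After contraction, both residual edges become self-loops at the common SCC vertex and are deleted by construction of the Picard--Queyranne structure. Hence no edge of $G'$ arises from a strictly under-saturated, strictly positive-flow edge of $G$. Combined with Cases 1 and 2, every edge of $E'$ comes from an edge with $f_{ij} \in \{0, c_{ij}\}$, giving $E' \subseteq F' \cup Z'$.

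I do not expect a genuine technical obstacle here; the main thing to be careful about is tying the bookkeeping precisely to the given definitions: edges in $F'$ are written in the reversed orientation $(j,i)$ (the direction of the backward residual edge), while edges in $Z'$ keep the original orientation $(i,j)$, and this matches exactly the two surviving cases above. A brief remark can also note that $F'$ and $Z'$ are disjoint as subsets of $E'$: if an edge $(i',j') \in E'$ lay in both, it would require an original edge with $f_{ij} = 0 = c_{ij}$, which is impossible when edges of zero capacity are excluded (or, if they are allowed, such an edge is irrelevant and can be assigned to either set without harm).
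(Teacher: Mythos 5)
Your proposal is correct and follows essentially the same argument as the paper: a three-way case analysis on $f_{ij}$, with the key observation that an edge with $0 < f_{ij} < c_{ij}$ yields a two-cycle in $G|f$, placing its endpoints in the same strongly connected component so that neither residual edge survives in $E'$. The remarks on orientation and disjointness are consistent with the paper's definitions and add nothing problematic.
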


% \flowsplit*
\begin{proof}
% [Proof of Claim~\ref{lem:flow_split_E'}]

    Consider an edge $e=(i,j)\in E$. We have three cases: 
    \begin{itemize}
        \item If $0 <f_e<c_e$ then both $(i,j)$ and $(j,i)$ are in $G|f$ and hence $i$ and $j$ are in the same strongly connected component of $G|f$. Hence \textbf{neither} $(i,j)$, nor $(j,i)$ are in $E'$. 
        \item If $f_e=c_e$ then edge $(j,i)$ is in $G|f$ and if it is not on a cycle, it belongs to $E'$. Note that only the essential edges are not part of any cycle, and hence preserved in $E'$.
        \item If $f_e=0$ then edge $(i,j)$ is in $G|f$ and if it is not on a cycle it belongs to $E'$. Note that only the inessential edges that carry zero flow in every max flow are preserved in $E'$. 
    \end{itemize}

\end{proof}

Note that $F'$ is defined as the set of edges that carry full flow in $G$ i.e., they are the set of essential edges in $G$, although pointed in reverse direction.

\begin{corollary}
    All essential edges of $G$ are preserved in $G'$.
\end{corollary}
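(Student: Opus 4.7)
The plan is to derive the corollary directly from Lemma~\ref{lem:flow_split_E'}, specifically from the characterization of which edges survive the SCC contraction. Given an essential edge $e=(i,j)\in E$, I will first invoke its definition: since $e$ is saturated in every max flow, it is saturated in our chosen $f$, so $f_e=c_e$. Consequently, the forward edge $(i,j)$ is deleted from the residual graph $G|f$ (its residual capacity is $0$), while the reverse edge $(j,i)$ appears in $G|f$ with capacity $c_e$. My goal reduces to showing that $(j,i)$ is not contracted into a single SCC vertex, i.e., that $i$ and $j$ lie in distinct strongly connected components of $G|f$; by the construction of the Picard--Queyranne structure, $(j,i)$ then survives as an inter-component edge of $E'$, which is exactly what ``preserved in $G'$'' means (recall that $F'$ stores these edges in reversed orientation).

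Next, I would establish the SCC separation by contradiction. Suppose $i$ and $j$ lie in a common SCC of $G|f$. Then there is a directed path $P$ from $i$ to $j$ in $G|f$, and concatenating $P$ with the residual edge $(j,i)$ yields a directed cycle $C$ in $G|f$ containing $(j,i)$. The key step is to translate $C$ back into a modification of the flow $f$: pushing a sufficiently small positive amount $\epsilon>0$ of flow along $C$ in the residual graph corresponds to increasing $f$ on the forward edges of $C$ and decreasing $f$ on the backward edges of $C$, all within capacity limits and preserving conservation at every internal vertex. In particular, traversing the residual edge $(j,i)$ corresponds to decreasing $f$ on $(i,j)$ by $\epsilon$. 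The resulting flow $f'$ has the same value as $f$ (hence is again a max flow), but satisfies $f'_e=c_e-\epsilon<c_e$, contradicting the essentiality of $e$.

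Combining these pieces, the residual edge $(j,i)$ cannot lie on any directed cycle of $G|f$, so it survives the SCC contraction and appears in $E'$; thus $e$ is represented in $G'$ via its reverse orientation, placing it in $F'$. The step I expect to require the most care is the translation between the residual cycle $C$ and a legitimate modified flow $f'$: one must verify that capacity bounds are respected (choose $\epsilon$ at most the minimum residual capacity along $C$), that flow conservation is maintained at every vertex of $C$ (automatic since a cycle passes through each vertex equally often on in- and out-sides), and that the modified flow still has the same value as $f$ (no change across any $s$--$t$ cut, since $C$ is a cycle). The remainder of the argument is essentially a repackaging of the third bullet in the proof of Lemma~\ref{lem:flow_split_E'}, and no further machinery should be needed.
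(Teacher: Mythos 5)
Your proposal is correct and follows essentially the same route as the paper: an essential edge is saturated by $f$, so only its reverse appears in $G|f$, and that reverse edge cannot lie on a directed cycle (hence survives the SCC contraction into $F'\subseteq E'$). The only difference is that you explicitly justify, via the cycle-cancellation argument producing a max flow $f'$ with $f'_e<c_e$, the step that the paper's proof of Lemma~\ref{lem:flow_split_E'} merely asserts ("only the essential edges are not part of any cycle"), which is a welcome addition rather than a deviation.
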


\subsubsection{Algorithm}

As mentioned above, the algorithm works in iterations and in each iteration, it fixes the potentials of vertices along some path. Let us first define a few terms in relation to this.

\begin{definition}
    \textsc{Fixed} is the set of all vertices with assigned potentials. \textsc{Free} is the set of remaining vertices in $V'$.
\end{definition}

Initially, the source and sink vertices of $G'$, $s'$ and $t'$ are given a potential of 1 and 0 respectively and are in \textsc{Fixed}. 

\begin{definition}
    A \textit{Free Path} is sequence of vertices $a=v_0,v_1,v_2, \ldots, v_{n-1},v_n=b$ with $(v_i,v_{i+1})\in E', \forall i\in \{0,1,2,\ldots, n-1\}$ such that $a,b\in \textsc{Fixed}$ and $v_1,v_2, \ldots, v_{n-1} \in \textsc{Free}$.
\end{definition}

We first give new length function $l$ to edges in $G'$ such that $l_e =  \frac{1}{c_e}$ if $e\in F'$, and $l_e = 0$ otherwise. Our algorithm, see Figure~\ref{alg:max_flow_leximin}, will run a sequence of iterations and, in each iteration, it finds the longest free path(under lengths $l_e$) between every pair of fixed vertices. It then assigns potentials to vertices on this path such that every essential edge gets equal profit share, under the rule $\delta_{ij}=\max\set{\pi_i-\pi_j,0}$, and the potential difference across every non-essential edge is zero - giving them zero profit.

\textbf{Intuition:} To understand why this algorithm gives us the leximin imputation, consider a free path $P_{uv}$ in $\mathcal{P}_{uv}$ - the set of all free paths between $u$ and $v$ - after $i$ iterations of the algorithm. If the free vertices on this path were the only free vertices left, we would maximize the next minimum profit share by choosing potentials in a way that gives equal profits, say $\alpha_{uv}$ to all the essential edges and zero to the rest. But, the potential difference across this path is fixed at $\pi_v-\pi_u$, which when split across edges gives -$$\pi_v - \pi_u  =  \sum_{(i,j)\in P_{uv}\cap F'} (\pi_j - \pi_i) + \underbrace{\sum_{(i,j)\in P_{uv}\cap Z'} (\pi_j - \pi_i)}_{non-negative} \geq \sum_{(i,j)\in P_{uv}\cap F'} \frac{\alpha_{uv}}{c_{ij}} = \sum_{(i,j)\in P_{uv}} \{\alpha_{uv} \cdot l_{i,j}\}$$For the inequality, We use $\delta_{ij} \geq \set{\pi_i-\pi_j}$ for the essential edges and $\delta_{ij}\geq 0$ for the non-essential edges. So, to maximize the profit, potential difference across the non-essential edges should be zero and potential difference across the essential edges should be split proportional to $1/c_e$. Our length function, $l_e$, was chosen to combine these comstraints, and so, we get $ \alpha_{uv} = \frac{\pi_v - \pi_u}{{\sum_{e\in P_{uv}} l_e}} $, to be the maximum profit to the edges this path can provide. And so, the longest free path in $G'$ is the free path with worst potential constraints that ensures minimum profit, $\alpha_{uv}$, to the edges.

Since $G'$ is acyclic, such a path can be computed efficiently using topological sort. Let $a,b\in \textsc{Fixed}$ after $(i-1)$ iterations and $P \in \mathcal{P}_{ab}$ a path for which the minimum profit is achieved. For vertices $j,k\in P$ let $P[j,k]$ be the subpath of $P$ from $j$ to $k$. For all vertices $j$ on $P$ we assign $\pi_j=\pi_a + \alpha_i\sum_{e\in P[a,j]} w_e$. Iteration $i$ ends with adding vertices of $P$ to $\textsc{Fixed}$ and the algorithm ends when all vertices are in $\textsc{Fixed}$.

\begin{figure}[ht]
	\begin{wbox}
		\textbf{Finding leximin Owen set imputation in max-flow game}: \\
		\textbf{Input:} $G=(V,E)$, $s,t\in V$, $c : E \to \mathbb{R}_+$.  
		\begin{enumerate}[noitemsep,nosep]
		  \item \textbf{Initialization:}
		  \begin{enumerate}[noitemsep,nosep]
				\item $G'=(V',E')\leftarrow $ Picard-Queyranne structure of $G$.
                \item $\pi(s') \leftarrow 1; \pi(t') \leftarrow 0$.
                \item $\textsc{Fixed} \leftarrow \{s',t'\}$; $\textsc{Free} \leftarrow V'\setminus \textsc{Fixed}$.
                \item $\forall e \in E', l_e = \frac{1}{c_e}$ if $e$ is essential and $0$ otherwise.
		  \end{enumerate}  
            \item \textbf{While} \textsc{Fixed} $\neq V'$ \textbf{do:}
                \begin{enumerate}[noitemsep,nosep]
                    \item \textbf{For} every pair $u,v\in \textsc{Fixed}$ \textbf{do:} 
                    \begin{enumerate}
                        \item Compute longest free path, $P_{uv}$, between them.
                        \item Compute profits of essential edges, $\alpha_{uv} = \frac{\pi_v-\pi_u}{{\sum_{e\in P_{uv}} l_e}}$, on $P_{uv}$.
                    \end{enumerate}
                    \item Set potentials of vertices on free path $P_{ab}$ with the minimum profit $\alpha_{ab}$.
                    \item Update \textsc{Fixed} and \textsc{Free}.
            \end{enumerate}
            \item \textbf{Output:} The imputation $p$, where $\forall (i,j)\in E, p_{ij} = c_{ij} \cdot \max\set{\pi_i-\pi_j,0}$.            
		\end{enumerate}
	\end{wbox}
        \caption{Algorithm to find the leximin Owen set imputation in a max-flow game.}
	\label{alg:max_flow_leximin} 
\end{figure}

The proofs of the below results will require some notation. Let $V_i'\in V'$ be the set \textsc{Fixed} after $i$ iterations, i.e., the set of vertices with fixed potentials after $i$ iterations. For $a,b\in V'_{i-1}$, $\pi_a <\pi_b$, let $\mathcal{P}_{ab}$ be the set of free paths from $a$ to $b$ in iteration $i$. Then the profit assigned to edges during iteration $i$ can also be written as 
$$ \alpha_i=\min_{a,b\in V'_{i-1}} \min_{P\in\mathcal{P}_{ab}} 
\frac{\pi_b-\pi_a}{\sum_{e\in P} l_e}.$$

It is straightforward to see that the profits assigned, $\alpha_i$'s, increase with each iteration.

First, we show that the potentials conform to the dual by proving a useful property of $\pi$. 
\begin{restatable}{claim}{clorient}\label{cl:orient}
% ${}^\dagger$
    $\forall (i,j)\in E'$, $\pi_i\le\pi_j$.
\end{restatable}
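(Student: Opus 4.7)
Plan: I would prove the invariant iteration-by-iteration: after iteration $k$, every edge $(u,v)\in E'$ with both endpoints in $\textsc{Fixed}$ satisfies $\pi_u\le\pi_v$. The base case $\textsc{Fixed}=\{s',t'\}$ is immediate, since $f$ being max-flow forces $(s',t')\notin E'$, and the only possible edge $(t',s')\in E'$ satisfies $0=\pi_{t'}\le\pi_{s'}=1$.

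For the inductive step at iteration $k$, let $P_{ab}$ (with $\pi_a<\pi_b$) be the chosen free path, and recall that new potentials are set by $\pi_v=\pi_a+\alpha_k\sum_{e\in P_{ab}[a,v]}l_e$. I would check only the edges of $E'$ that newly have both endpoints in $\textsc{Fixed}$. If both endpoints of such an edge lie on $P_{ab}$, then acyclicity of the Picard-Queyranne DAG $G'$ forces the edge to point forward along $P_{ab}$ (otherwise it closes a directed cycle together with a sub-path of $P_{ab}$), and the linear potential formula immediately gives $\pi_u\le\pi_v$.

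If exactly one endpoint of an edge $(u,v)\in E'$ is newly fixed, I would splice $(u,v)$ with the relevant portion of $P_{ab}$ to form a hybrid free path $Q$ whose endpoints are both in $\textsc{Fixed}$ at the start of iteration $k$ and whose interior vertices are all in $\textsc{Free}$ at that moment. Assuming $\pi_u>\pi_v$ for contradiction, whenever the endpoints of $Q$ are ordered correctly in potential, $Q$ was one of the algorithm's candidates at iteration $k$, so minimality of $\alpha_k$ gives $\alpha_k\le\alpha(Q)$. Substituting the linear potential identity along $P_{ab}$ into this inequality and canceling (using $\alpha_k\cdot l_{P_{ab}[v,b]}=\pi_b-\pi_v$ when $v$ is newly fixed, or the symmetric $\alpha_k\cdot l_{P_{ab}[a,u]}=\pi_u-\pi_a$ when $u$ is newly fixed) collapses to $\pi_u\le\pi_v$, which is the desired contradiction.

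The main obstacle I anticipate: the hybrid path $Q$ only lies in the algorithm's candidate family when the already-fixed endpoint's potential lies in the interval $[\pi_a,\pi_b]$, and the argument above fails when it sits outside. My plan for this residual case is a secondary downward induction on the iteration at which the offending endpoint was itself fixed; at that earlier iteration the same splicing idea applies with respect to the path chosen there, and repeated descent terminates at the base configuration $\textsc{Fixed}=\{s',t'\}$ where no such offending free path exists in $G'$. Making this nested induction carry the invariant uniformly across all potential configurations is the most delicate step of the proof.
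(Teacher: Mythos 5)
Your proposal takes essentially the same route as the paper's proof: both dispose of an edge whose endpoints are fixed in the same iteration via its forward orientation along the chosen path, and for an edge with one newly fixed endpoint both splice the edge onto the relevant tail of $P_{ab}$ to obtain a candidate free path, invoke minimality of $\alpha_k$, and cancel against the linear potential formula to conclude $\pi_i\le\pi_j$. The ``residual case'' you flag (the already-fixed endpoint's potential lying outside the range that makes the spliced path one of the algorithm's candidates) is not treated in the paper either --- its proof writes out only the case where the earlier-fixed endpoint is the tail $i$ and tacitly assumes the spliced path $P'\in\mathcal{P}_{ib}$ was compared against $P$ --- so your extra caution is a fair critique of the published argument rather than a gap relative to it.
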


% \clorient*
\begin{proof}
% [Proof of Claim~\ref{cl:orient}]
\begin{figure}[ht]
    \centering
    \begin{tikzpicture}

    % Coordinates for the vertices
    \coordinate (a) at (0, 0);
    \coordinate (j) at (4, 0);
    \coordinate (b) at (8, 0);
    \coordinate (i) at (3, -1.5);
    
    % Drawing the vertices as small dark circles
    \filldraw[black] (a) circle (2pt) node[above] {$a$($\pi_a$)};
    \filldraw[black] (b) circle (2pt) node[above] {$b$($\pi_b$)};
    \filldraw[black] (j) circle (2pt) node[above] {$j$($\pi_j$)};
    \filldraw[black] (i) circle (2pt) node[below right] {$i$($\pi_i$)};
    
    % Drawing the directed arcs with labels
    \draw[->] (a) to[bend left=20, dotted] node[midway, above] {$P[a,j]$} (j);
    \draw[->] (j) to[bend right=20, dotted] node[midway, above] {$P[j,b]$} (b);
    \draw[->] (i) to node[midway, right] {$\gamma$} (j);
    
    \end{tikzpicture}
    \caption{Illustration of claim~\ref{cl:orient} }
    \label{fig:dual_are_decreasing}
    \end{figure}
    
Consider an edge $(i,j)\in E'$. If vertices $i,j$ were assigned potentials in the same iteration of the algorithm then vertices $i,j$ lie on the path $P$ which achieves the minimum in this iteration and the claim is trivially true. 
We first consider the case when vertex $i$ is assigned a potential before vertex $j$. 
Suppose $j$ was assigned a potential in iteration $k$ and $a,b\in V'_{k-1}$ were the vertices and $P$ the path from $a$ to $b$ for which the minimum ratio $\alpha_k$ was achieved. Vertex $j$ lies on $P$ and let $P'$ be the path from $i$ to $b$ consisting of the edge $(i,j)$ followed by $P[j,b]$. Since $P$ is a free path in iteration $k$, $P'$ is also free in this iteration and hence $P'\in\mathcal{P}_{ib}$.

In iteration $k$ the algorithm chose $P$ over $P'$ and hence
$$\pi_b-\pi_i \geq \alpha_k\left( \gamma +\sum_{e\in P[j,b]\cap F'} c_e^{-1}\right)$$
where $\gamma=1/c_{ij}$ if $(i,j)\in F'$ and is 0 otherwise. This implies 
$$ \pi_i \le \pi_b-\alpha_k \left(\sum_{e\in P[j,b]\cap F'} c_e^{-1}\right) - \gamma\alpha_k = \pi_j - \gamma\alpha_k$$
Since $\alpha_k,\gamma\ge 0$, $\pi_i\le \pi_j$, proving the claim. 
\end{proof}

The potential assigned to vertices in $V'$ is extended to vertices in $V$ by assigning all vertices in $\phi(i)$ the potential $\pi_i$. 
The potentials on vertices in $V$ are in turn used to assign lengths to edges in $E$; for edge $(i,j)\in E$, let $\delta_{ij}=\max\set{0,\pi_i-\pi_j}$. The following claim~\ref{cl:flow_opt} proves that the profit share is an Owen set imputation and then lemma~\ref{lem:flow_leximin} shows that it is also the leximin imputation. Theorem~\ref{thm:flow_DCC_theorem} shows that the algorithm~\ref{alg:max_flow_leximin} is efficient.  

\begin{restatable}{claim}{clflowopt}
\label{cl:flow_opt}
% ${}^\dagger$
    $(\pi,\delta)$ is an optimum solution to LP \ref{eq.flow-dual}.
\end{restatable}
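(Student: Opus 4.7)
The plan is to prove that $(\pi,\delta)$ is optimal for LP~\ref{eq.flow-dual} by first checking feasibility and then showing that its dual objective value equals $|f|$ for any maximum $s$-$t$ flow $f$; optimality will then follow from weak LP duality.

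For feasibility, the constraints $\delta_{ij} \geq 0$ and $\delta_{ij} - \pi_i + \pi_j \geq 0$ are immediate from the definition $\delta_{ij} = \max\{0,\pi_i - \pi_j\}$. The constraint $\pi_s - \pi_t \geq 1$ follows from the initialization $\pi_{s'}=1,\ \pi_{t'}=0$ together with the rule that each $v\in V$ inherits the potential of its SCC representative in $V'$, so $\pi_s = 1$ and $\pi_t = 0$. For $\pi_v \geq 0$ I would argue inductively on iterations: each iteration places the potentials of the internal vertices of its chosen free path $P_{ab}$ strictly between $\pi_a$ and $\pi_b$, since the increments $\alpha_{ab}/c_e$ on essential edges of the path are nonnegative and all other increments are zero; starting from $\pi_{s'}=1,\pi_{t'}=0$, all computed potentials therefore lie in $[0,1]$.

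For optimality, I would fix an arbitrary maximum flow $f$ and compute $\sum_{(i,j)\in E} c_{ij}\delta_{ij}$ via a case analysis on each edge $(i,j)\in E$ driven by Lemma~\ref{lem:flow_split_E'}. If $(i,j)$ is essential, then its reverse $(j',i')$ lies in $F' \subseteq E'$, so Claim~\ref{cl:orient} gives $\pi_{j'} \leq \pi_{i'}$, hence $\pi_j \leq \pi_i$, so $\delta_{ij} = \pi_i - \pi_j$ and $c_{ij} = f_{ij}$. If $0 < f_{ij} < c_{ij}$, then both $(i,j)$ and $(j,i)$ appear in $G|f$, placing $i,j$ in the same SCC; thus $\pi_i = \pi_j$ and $\delta_{ij}=0$. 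If $(i,j)$ is not essential and $f_{ij}=0$, then either $(i',j') \in Z' \subseteq E'$ (so Claim~\ref{cl:orient} gives $\pi_i \leq \pi_j$ and hence $\delta_{ij}=0$) or $i,j$ share an SCC and $\pi_i = \pi_j$. In every case $c_{ij}\delta_{ij} = f_{ij}(\pi_i - \pi_j)$.

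The remainder is routine: regrouping the sum by vertex and invoking flow conservation at every $v \notin \{s,t\}$ yields
\[
\sum_{(i,j)\in E} c_{ij}\delta_{ij} \;=\; \sum_{(i,j)\in E} f_{ij}(\pi_i - \pi_j) \;=\; (\pi_s - \pi_t)\,|f| \;=\; |f|,
\]
so since $f$ is primal feasible with value $|f|$ equal to the primal optimum, weak duality forces $(\pi,\delta)$ to be a dual optimum. I expect the main subtlety to be keeping careful track of the orientation reversal between an essential edge $(i,j) \in E$ and its image $(j',i') \in F' \subseteq E'$ so that Claim~\ref{cl:orient} is applied in the correct direction, together with the lift of potentials from $V'$ to vertices of $V$ that share an SCC (which is what collapses $\delta_{ij}$ to zero on all non-essential edges).
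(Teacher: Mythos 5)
Your proposal is correct, and its optimality argument takes a genuinely different route from the paper's. The paper proves $\sum_{(i,j)\in E}\delta_{ij}c_{ij}=|f|$ by a threshold-cut argument: for each $x\in(0,1]$ it forms the cut $(S_x,T_x)$ with $S_x=\{v:\pi_v\ge x\}$, uses Claim~\ref{cl:orient} together with $E'=F'\cup Z'$ to show every such cut is saturated forward and carries zero flow backward (hence $C(x)=|f|$ for all $x$), and then observes that $\int_0^1 C(x)\,dx$ equals the dual objective. You instead establish the pointwise identity $c_{ij}\delta_{ij}=f_{ij}(\pi_i-\pi_j)$ edge by edge and regroup by vertex via flow conservation --- essentially a direct complementary-slackness computation. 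Both arguments lean on exactly the same two ingredients (Claim~\ref{cl:orient} and the classification of edges of $E'$ into $F'$ and $Z'$); yours is more elementary and avoids the integral, while the paper's gives the pleasant ``continuum of min-cuts'' picture and sidesteps any per-edge case analysis. You are also more careful than the paper on feasibility, supplying the interpolation argument for $\pi_v\in[0,1]$ that the paper leaves implicit. One small hole to patch: your case analysis omits edges that are saturated by $f$ but \emph{not} essential (so $f_{ij}=c_{ij}>0$ yet $(j,i)\notin F'$); by Lemma~\ref{lem:flow_split_E'} the residual reverse of such an edge lies on a cycle of $G|f$, so $i$ and $j$ share an SCC, $\pi_i=\pi_j$, and the identity $c_{ij}\delta_{ij}=0=f_{ij}(\pi_i-\pi_j)$ still holds --- the same mechanism as your case $0<f_{ij}<c_{ij}$, so nothing breaks, but the case should be stated.
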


% \clflowopt*
\begin{proof}
% [Proof of Claim~\ref{cl:flow_opt}]
From our definition of $\delta$ and the fact that $\pi_s-\pi_t=1$ it follows that assignments $\pi:V\to\mathbb{R}_+$ and $\delta:E\to\mathbb{R}_+$ form a feasible solution to LP \ref{eq.flow-dual}. Hence we only need to argue that this solution is an optimum solution and to do so we show that the value of this solution, $\sum_{(i,j)\in E} \delta_{ij}c_{ij}$, equals the maximum $s$-$t$ flow.

For $x\in(0,1]$ let $S_x=\set{x\in V, \pi_v\ge x}$ and $T_x=\set{x\in V, \pi_v <x}$. Since $\forall x\in(0,1]$, $s\in S_x$ and $t\in T_x$, $(S_x,T_x)$ is an $s$-$t$ cut. Let $C(x)$ be the capacity of the cut $(S_x, T_x)$; recall that the capacity of an $s$-$t$ cut is defined as the total capacity of edges from the $s$-side to the $t$-side of the cut. 

Consider the $s$-$t$ maxflow $f$, edge $(i,j)\in E$ and $x\in(0,1]$. 
\begin{enumerate}
\item If $i\in S_x, j\in T_x$ then $\pi_i\ge x > \pi_j$ and hence by Claim~\ref{cl:orient} $(j,i)\in E'$. $(i,j)\in E$ and $(j,i)\in E'$ implies $(j,i)\in F'$ and hence $f_{ij}=c_{ij}$. Thus all edges from $S_x$ to $T_x$ are saturated by $f$. 
\item If $i\in T_x, j\in S_x$ then $\pi_j\ge x > \pi_i$ and hence by Claim~\ref{cl:orient} $(i,j)\in E'$. $(i,j)\in E$ and $(i,j)\in E'$ implies $(i,j)\in Z'$ and hence $f_{ij}=0$. Thus all edges from $T_x$ to $S_x$ carry zero flow in $f$.
\end{enumerate}
Thus the net flow from $S_x$ to $T_x$ which is the value of the flow $f$ and denoted by $|f|$ equals, $C(x)$, the capacity of the cut $(S_x,T_x)$. Since this is true for all $x\in(0,1]$ it follows that $\int_0^1 C(x) dx=|f|$. 

Recall that for $(i,j)\in E$, $\delta_{ij}=\pi_i-\pi_j$ if $\pi_i > \pi_j$ and is 0 if $\pi_i\le \pi_j$. If $\pi_i>\pi_j$ then edge $(i,j)$ contributes $(\pi_i-\pi_j)c_{ij}$ to $\int_0^1 C(x) dx$ while if $\pi_i \le \pi_j$ it contributes 0 to $\int_0^1 C(x) dx$. Therefore the contribution of $(i,j)\in E$ to $\int_0^1 C(x) dx$ equals $\delta_{ij}c_{ij}$ and hence $|f|=\int_0^1 C(x) dx = \sum_{(i,j)\in E} \delta_{ij}c_{ij}$ which implies that the solution $(\pi,\delta)$ is optimum.
\end{proof}

The following establishes the correctness of our algorithm. 
\begin{restatable}{lemma}{lemflowleximin}\label{lem:flow_leximin}
% ${}^\dagger$
The imputation defined by the solution $(\pi,\delta)$ is the leximin Owen set imputation for the $s$-$t$ max-flow game.  
\end{restatable}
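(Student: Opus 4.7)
The plan is to leverage the parameterization of the Owen set by vertex potentials on $V'$ to show, in three coordinated steps, that Algorithm~\ref{alg:max_flow_leximin} outputs the leximin imputation. By Claim~\ref{cl:flow_opt} every Owen set imputation arises from some $\pi':V'\to\mathbb{R}_{\ge 0}$ with $\pi'_{s'}=1$, $\pi'_{t'}=0$, and $\pi'_i\le\pi'_j$ for every $(i,j)\in E'$; the profit of the essential edge corresponding to $(i,j)\in F'$ is then $c_e(\pi'_j-\pi'_i)$, while non-essential edges always receive profit $0$ by complementary slackness. Thus the leximin comparison reduces to the profits on essential edges.

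The first step would be to verify the monotonicity $\alpha_1\le\alpha_2\le\cdots$. If in iteration $i$ the algorithm selects path $P_i$ from $c$ to $d$, and (say) $c$ was first fixed in iteration $i-1$ on some path $P_{i-1}$ from $a_{i-1}$ to $b_{i-1}$, then the concatenation $P_{i-1}[a_{i-1},c]\cdot P_i$ (possibly extended by $P_{i-1}[d,b_{i-1}]$ if $d$ also lies on $P_{i-1}$) is a free path available in iteration $i-1$. Combined with the identity $\pi_c-\pi_{a_{i-1}}=\alpha_{i-1}\sum_{e\in P_{i-1}[a_{i-1},c]}l_e$ and the minimality of $\alpha_{i-1}$ over such free paths, a short calculation yields $\alpha_i\ge\alpha_{i-1}$. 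The second step is the key upper bound: for any Owen set imputation $q$ with potentials $\pi'$ and any free path $P$ from $a$ to $b$ considered by the algorithm,
\[
\sum_{e=(i,j)\in P\cap F'} q_e\cdot l_e \;=\; \sum_{e=(i,j)\in P\cap F'}(\pi'_j-\pi'_i) \;\le\; \pi'_b-\pi'_a,
\]
since $Z'$ edges contribute non-negative drops to the telescoping sum. Hence $\min_{e\in P\cap F'} q_e\le(\pi'_b-\pi'_a)/\sum_{e\in P}l_e$, and in particular for $P_1$ we get $\min_{e\in P_1\cap F'} q_e\le 1/\sum_{e\in P_1}l_e=\alpha_1$, matching the algorithm's smallest profit.

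The third step extends this bound across iterations via an averaging argument in the spirit of Claim~\ref{cl:unique}, and is the main obstacle. Suppose, for contradiction, that an Owen set imputation $q$ is lex larger than $p$ on their sorted profit vectors. Let $F'_i$ denote the set of essential edges whose profit the algorithm fixes in iteration $i$, and let $i$ be the least index at which $q$'s profits on $F'_i$ collectively exceed $\alpha_i$. The convex combination $r=(p+q)/2$ still lies in the Owen set; by minimality of $i$, its sorted profit vector agrees with $p$'s on the first $|F'_1|+\cdots+|F'_{i-1}|$ positions and strictly exceeds $p$ on the next block. Applying the Step~2 bound to $r$ along the specific path $P_i$ yields $\min_{e\in P_i\cap F'} r_e\le\alpha_i$, contradicting the assumption that every profit of $r$ on $F'_i$ exceeds $\alpha_i$. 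The obstacle is precisely that $q$'s potentials on $V'_{i-1}$ need not coincide with $p$'s; the averaging construction sidesteps this, because by convexity $r$ still satisfies all path-wise drop constraints, yet witnesses a feasible improvement that violates the minimality of $\alpha_i$ certified by the longest free path chosen by the algorithm in iteration $i$.
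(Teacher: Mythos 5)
Your overall strategy (a telescoping bound along free paths plus an induction over the algorithm's iterations) is the same as the paper's, but two steps do not go through as written. First, you assert that ``by Claim~\ref{cl:flow_opt} every Owen set imputation arises from some $\pi':V'\to\mathbb{R}_{\ge 0}$''; that claim only says the algorithm's own output is dual-optimal. The fact that an arbitrary competing optimal dual $\pi^*$ is constant on each strongly connected component of $G|f$ (so that it descends to $V'$ at all) requires a separate argument --- the paper perturbs the max-flow by $\epsilon$ along a cycle of the component and invokes complementary slackness to force $\pi^*_i=\pi^*_j$ across every cycle edge. Relatedly, your blanket assertion that $\pi'_i\le\pi'_j$ for every $(i,j)\in E'$ is only automatic for $Z'$ edges (via $\delta^*=0$ on unsaturated edges); for $F'$ edges nothing in the dual forbids a negative drop, and your Step~2 identity $\sum_{e\in P\cap F'}q_e l_e=\sum(\pi'_j-\pi'_i)$ silently assumes the drops are nonnegative. (The conclusion of Step~2 can be salvaged, since a negative drop forces $q_e=0$, but the derivation as stated is not correct.)

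The more serious gap is in Step~3. You index the induction by blocks of the \emph{sorted profit vector} and claim the averaging trick ``sidesteps'' the fact that $q$'s potentials need not agree with $p$'s on $V'_{i-1}$. It does not: the Step~2 bound applied to $r=(p+q)/2$ along $P_i$ reads $\min_{e\in P_i\cap F'} r_e\le(\pi''_b-\pi''_a)/\sum_{e\in P_i}l_e$ with $\pi''=(\pi+\pi')/2$, and to conclude $\min r_e\le\alpha_i$ you need $\pi''_b-\pi''_a\le\pi_b-\pi_a$, i.e.\ $\pi'_b-\pi'_a\le\pi_b-\pi_a$ for the endpoints $a,b$ --- exactly the statement you were trying to avoid proving. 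Agreement of a prefix of sorted profits does not imply agreement of potentials on the fixed vertices, so the induction has no base to stand on. The paper's proof avoids this by inducting on potentials directly: let $k$ be the \emph{earliest iteration in which the algorithm assigns some vertex a potential different from $\pi^*$}; then by definition the endpoints of the path $P$ chosen in iteration $k$ satisfy $\pi^*_a=\pi_a$ and $\pi^*_b=\pi_b$, the deviation at the first differing vertex $v$ forces some $F'$ edge on $P[a,v]$ or $P[v,b]$ to receive profit strictly below $\alpha_k$ under $\pi^*$, and since all profits below $\alpha_k$ were fixed identically in iterations $1,\dots,k-1$, the imputation from $\pi$ is lexicographically larger --- a contradiction. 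You would need to restructure Step~3 along these lines.
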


% \lemflowleximin*
\begin{proof}
% [Proof of Lemma~\ref{lem:flow_leximin}]
For contradiction assume $(\pi^*,\delta^*)$ is a dual solution to LP \ref{eq.flow-dual}, different from the solution $(\pi,\delta)$ such that the imputation, $p$, defined as $p(e)=\delta^*_ec_e$ is the leximin Owen set imputation. Since $\delta^*_{ij}\ge \pi^*_i-\pi^*_j$, and the dual objective is minimized when $\delta^*_{ij}$ is as small as possible, it must be the case that $\delta^*_{ij}=\max\set{0,\pi^*_i-\pi^*_j}$. Hence the dual solution is completely specified by $\pi^*$.

Recall that $\phi(i)\subset V$ are the vertices corresponding to the vertex $i\in V'$ and all vertices in $\phi(i)$ have identical potential in the assignment $\pi$. We first argue that $\forall i\in V'$, $\forall j,k\in\phi(i)$, $\pi^*_j=\pi^*_k$. Since $j,k\in \phi(i)$ these vertices lie on a cycle, say $C$, in the graph $G|f$. Since all edges on $C$ have non-zero residual capacity in $G|f$, we can increase flow on these edges by an $\epsilon >0$ thus introducing reverse edges (if they did not already exist) for every edge of $C$. Let $f'$ be this $s$-$t$ max-flow and consider the primal-dual pair of solutions $f',(\pi^*,\delta^*)$ and edge $(i,j)\in C$.
\begin{enumerate}
\item If $(i,j)\in E$ then $0 < f'_{ij}<c_{ij}$ and hence by complementary slackness $\delta^*_{ij}=0$ and $\delta^*_{ij}=\pi^*_i-\pi^*_j$ which implies $\pi^*_i=\pi^*_j$.
\item If $(j,i)\in E$ then $0 < f'_{ji}<c_{ji}$ and hence by complementary slackness $\delta^*_{ji}=0$ and $\delta^*_{ji}=\pi^*_j-\pi^*_i$ which implies $\pi^*_i=\pi^*_j$.
\end{enumerate}
Hence all vertices on $C$ and therefore all vertices in $\phi(i)$ have the same potential in the assignment $\pi^*$.

The above argument allows us to "project" the assignment $\pi^*:V\to\mathbb{R}_+$ to assign potentials to vertices in $V'$. This assignment, also denoted by $\pi^*$, is defined as: for $i\in V'$, let $\pi^*_i=\pi^*_j$ where $j$ is an arbitrary vertex in $\phi(i)$. 

We will now argue that the assignments $\pi^*$ and $\pi$ are identical. Note that $\pi^*_{s'}=1=\pi_{s'}$ and $\pi^*_{t'}=0=\pi_{t'}$. Let $k$ be the earliest iteration of our algorithm in which we assign a potential to a vertex $v$ that is different from $\pi^*_v$. Let $a,b\in V'_{k-1}$ be the end-vertices and $P\in\mathcal{P}_{ab}$ the path  which achieves the ratio $\alpha_k$ in iteration $k$. If $(i,j)\in P\cap Z'$ then $f_{ij}=0 < c_{ij}$ and by complementary slackness it follows that $\delta^*_{ij}=0$ and hence $\pi^*_i\le\pi^*_j$.

The potential drop between the endpoints of $P$ is $\pi_b-\pi_a = \pi^*_b-\pi^*_a$ and our algorithm distributes this across the edges of $P\setminus Z'$, in such a manner that all these edges get an equal profit-share. If $\pi^*_v < \pi_v$ then the potential drop across $P[a,v]$ for assignment $\pi^*$ is less than that for assignment $\pi$ and hence some edge in $P[a,v]$, will have a lower profit-share under $\pi^*$ than what it had under assignment $\pi$. Similarly, if $\pi^*_v >  \pi_v$ then some edge in $P[v,b]$, will have a lower profit-share under $\pi^*$ than what it had under assignment $\pi$. Let $e$ be this edge. Edges with a lower profit-share than $e$ got their profit-share in iterations 1 through $k-1$ of our algorithm and these were identical for both the solutions $\pi,\pi^*$. Further in subsequent iterations, our algorithm only assigns a larger profit-share to edges than $e$. Hence the imputation defined by the solution $\pi$ is lexicographically larger than the imputation defined by $\pi^*$, contradicting our assumption.

\end{proof}

\begin{restatable}{theorem}{flowDCC}
% ${}^\dagger$
\label{thm:flow_DCC_theorem}
    There exist $O(mn^2)$ run time algorithms to find the leximin and leximax Owen set imputations of the max-flow game on a graph with $n$ nodes and $m$ edges.
\end{restatable}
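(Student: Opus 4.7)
The correctness of Algorithm~\ref{alg:max_flow_leximin} for leximin is already established by Lemma~\ref{lem:flow_leximin}, so the remaining task is the runtime analysis. The plan is to separately bound the one-time pre-processing, the number of outer iterations, and the cost of each iteration.

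For pre-processing, I would compute a max-flow $f$ on $G$, build the residual graph $G|f$, contract its maximal strongly connected components via Tarjan's algorithm to obtain the Picard-Queyranne DAG $G'=(V',E')$, and fix a topological order on $V'$. Using a standard augmenting-path or blocking-flow max-flow routine this costs $O(mn)$, while the SCC contraction and topological sort each cost $O(m+n)$, so the pre-processing fits comfortably inside $O(mn^2)$. Initializing $\pi(s')=1,\ \pi(t')=0$ and the edge lengths $l_e$ costs $O(m)$.

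The main loop iterates at most $n$ times: in each iteration, at least one new vertex of $V'$ is added to \textsc{Fixed}, and $|V'|\le n$. The crux of the analysis is bounding one iteration, which must, for every pair $u,v\in\textsc{Fixed}$, compute the longest free path length $L(u,v)=\max\{\sum_{e\in P}l_e : P\in\mathcal{P}_{uv}\}$ and then choose the pair minimizing $\alpha_{uv}=(\pi_v-\pi_u)/L(u,v)$. Since $G'$ is acyclic, for any single source $u\in\textsc{Fixed}$ I would compute $L(u,\cdot)$ by one topological-order sweep: initialize $d[u]\leftarrow 0$, $d[x]\leftarrow-\infty$ for $x\ne u$, and, scanning in topological order, relax every outgoing edge $(x,y)$ of $x$ whenever $x$ is either $u$ or a free vertex (vertices in $\textsc{Fixed}\setminus\{u\}$ are terminal and do not propagate). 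The resulting values $d[v]$ at the fixed vertices $v$ are exactly the longest free $u$-$v$ path lengths. Each such sweep costs $O(m+n)$, and there are at most $n$ choices of source $u$, giving $O(n(m+n))$ per outer iteration and $O(n\cdot n(m+n))=O(mn^2)$ in total, assuming without loss of generality $m\ge n-1$ by dropping vertices isolated in $G'$.

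For the leximax variant, the algorithm mirrors the leximin procedure with the natural dualization: one instead searches, in each iteration, for the pair $(a,b)\in\textsc{Fixed}$ and free path $P\in\mathcal{P}_{ab}$ that \emph{maximizes} the shared edge-profit $\alpha_{ab}$ (so that the largest profit-share is pushed down as far as possible), and fixes the potentials along that path. Correctness follows by an argument symmetric to Lemma~\ref{lem:flow_leximin}, and the runtime analysis is identical since it rests only on DAG longest-path sweeps. The main obstacle one would worry about is efficient longest-path computation, which is NP-hard in general graphs; however, the Picard-Queyranne construction renders $G'$ acyclic, so each sweep is linear-time and the overall bound $O(mn^2)$ follows.
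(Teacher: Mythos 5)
Your proposal is correct and follows essentially the same route as the paper: correctness is delegated to Lemma~\ref{lem:flow_leximin}, the runtime is bounded by at most $n$ outer iterations, each costing $O(mn)$ via single-source longest-path sweeps in topological order over the acyclic Picard-Queyranne graph, and the leximax case is handled by the symmetric argument. Your version is if anything slightly more careful than the paper's, since you explicitly restrict the sweep to propagate only through free vertices (which is needed to compute longest \emph{free} paths) and account for the pre-processing cost, but these are refinements of the same analysis rather than a different proof.
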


% \flowDCC*
\begin{proof}
% [Proof of Theorem~\ref{thm:flow_DCC_theorem}]

Lemma~\ref{lem:flow_leximin} establishes the correctness of the algorithm~\ref{alg:max_flow_leximin}. To verify the time complexity, let $n$ denote the number of vertices, and $m$ the number of edges in a graph $G(V,E)$. The algorithm discussed involves, in each iteration, finding an optimal free path originating from a fixed vertex. Note that, the Picard-Queyranne structure is a Directed Acyclic Graph and so, the longest path from a source to all vertices can be found in linear time. 

To find the longest path from vertex $a$ to all other vertices, first set the longest distance to $a$ as 0 and every other vertex as $-\infty$. Now, create a topological ordering of the vertices and for every vertex $b$ in the topological ordering, update the distances to its neighbours by comparing their current distances and the distances using edge from $b$. This gives us the length of longest path from $a$ to every reachable vertex. Repeating this procedure from each vertex, we get the longest paths between every pair of vertices. 

Since the procedure to find single-source longest paths is linear, i.e., $O(m)$, an iteration to find the optimal free path takes $O(mn)$ time. As the algorithm guarantees the fixation of the potential of at least one vertex per iteration, it performs at most $O(n)$ iterations. Consequently, the total time complexity of the algorithm is $O(mn^2)$. 
Analogous analysis for the leximax algorithm completes the proof. 
    
\end{proof}

We can use the above results to also prove the correctness of Algorithm~\ref{alg:max_flow_Owen_set_check}.
    
\begin{restatable}{lemma}{flowdual}
\label{lem:flow_efficient_dual}
% ${}^\dagger$ 
The imputation $p$ belongs to the Owen set of the max-flow game on $G$ if and only if Algorithm~\ref{alg:max_flow_Owen_set_check} generates a feasible dual solution.
\end{restatable}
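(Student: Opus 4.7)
The plan is to prove both directions of the biconditional, with the forward direction essentially immediate and the reverse direction hinging on a structural property of the Picard--Queyranne graph developed earlier in the section.

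For the direction ``algorithm returns `Yes' $\Rightarrow$ $p$ is in the Owen set'', I would note that by step 2(a) the algorithm sets $\delta_{ij} = p_{ij}/c_{ij}$, so $p_{ij} = c_{ij}\delta_{ij}$ by construction. Since $p$ is an imputation, $\sum_{(i,j)\in E} p_{ij} = v(E)$ equals the max-flow value, so the dual objective $\sum c_{ij}\delta_{ij}$ attains the max-flow value. Combined with the feasibility check in step 5, $(\pi,\delta)$ is an optimal dual solution, exhibiting $p$ as an Owen-set imputation.

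For the reverse direction, fix an optimal dual witness $(\pi^*, \delta^*)$ with $p_{ij} = c_{ij}\delta^*_{ij}$. Complementary slackness gives: if $\delta^*_{ij} > 0$ then $f_{ij}=c_{ij}$ in every max-flow, so the edge is essential; equivalently, non-essential edges carry $p_{ij}=0$ and step 1 does not reject. The algorithm then sets $\delta_{ij} = p_{ij}/c_{ij} = \delta^*_{ij}$, so its $\delta$ agrees with $\delta^*$. It remains to show (a) the propagation in step 3 can always find an essential edge crossing from \textsc{Fixed} to \textsc{Free} until $\textsc{Fixed}=V'$, and (b) the resulting $\pi$ satisfies all dual constraints checked in step 5.

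For (a), the structural fact I would establish is that the set $F'$ of essential edges in $G'$ forms an undirected spanning subgraph rooted at $t'$: every vertex of $V'$ is reachable from $t'$ using only $F'$ edges. This should follow from the Picard--Queyranne min-cut characterization, since any bipartition of $V'$ isolating a subset from $t'$ with no crossing $F'$-edge would contradict the saturation structure of the max-flow $f$ on $G$. The update rule $\pi_j = \pi_i + \delta_{ij}$ then encodes the complementary-slackness equality $\pi^*_j - \pi^*_i = \delta^*_{ji}$ on the corresponding essential $G$-edge, so the algorithm's $\pi$ agrees with $\pi^*$ up to the global shift fixed by $\pi_{t'}=0$. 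For (b), I would verify the dual constraint edge-by-edge: it holds with equality on essential edges by the update rule; on $Z'$ edges $p_{ij}=0$ forces $\delta_{ij}=0$, and the required $\pi_i\le\pi_j$ follows by the same reasoning as Claim~\ref{cl:orient} because propagation from $t'$ only increases potentials along $F'$-paths; on edges whose endpoints lie inside a single SCC of $G|f$, step 4 assigns both endpoints equal potential, so $\delta_{ij}\ge 0=\pi_i-\pi_j$ trivially; and $\pi_s - \pi_t \ge 1$ transfers from $\pi^*$ via the agreement noted above. The main obstacle is the reachability claim in (a); once established, the remaining verification is routine bookkeeping on top of Lemma~\ref{lem:flow_split_E'} and Claim~\ref{cl:orient}.
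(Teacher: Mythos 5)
Your overall structure matches the paper's: the forward direction is the same (feasibility plus the fact that $\sum_{(i,j)} c_{ij}\delta_{ij} = \sum_{(i,j)} p_{ij}$ equals the max-flow value gives optimality, hence Owen-set membership — you are actually more explicit about this than the paper), and the reverse direction likewise proceeds by fixing a witness $(\pi^*,\delta^*)$, using complementary slackness to see that step 1 does not reject and that tightness on essential edges forces the propagation rule $\pi_j=\pi_i+\delta_{ij}$. The one place you diverge is the termination claim you flag as ``the main obstacle.'' The paper does not prove your spanning-subgraph statement about $F'$; instead it argues via flow conservation: if the max-flow is nonzero then $s'$ and $t'$ are incident to essential ($F'$) edges, and conservation at every other node forces each remaining vertex of $G'$ to have at least one essential edge entering and one leaving it, so the only way the propagation from $t'$ could stall with \textsc{Free} nonempty is if the essential edges incident to the stuck vertices all lay on a cycle — impossible since $G'$ is a DAG. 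This is a local degree argument rather than your global cut argument; note that your cut argument as stated does not actually yield a contradiction for a set $S$ whose vertices carry no flow at all (all boundary edges in $Z'$ is consistent with conservation when $s,t\notin\phi(S)$), so it needs the same implicit non-degeneracy assumption (every vertex lies on some flow-carrying $s$--$t$ path) that the paper's argument also relies on. Your edge-by-edge verification in (b) is consistent with the paper's final feasibility check; the paper simply delegates all of it to step 5 of the algorithm and observes that a failed constraint is a no-certificate.
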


% \flowdual*
\begin{proof}
% [Proof of Claim~\ref{lem:flow_efficient_dual}]

    Firstly, the algorithm~\ref{alg:max_flow_Owen_set_check} directly returns ``No'' if any non-essential edge gets paid owing to Remark~\ref{flow:negotiating_power}. The algorithm then constructs distance labels of edges from profits as $\delta_{ij}=\frac{p_{ij}}{p_{ij}}$. It then constructs a Picard-Queyranne structure, $G'$, of the graph $G$. Note that this graph preserves all the essential edges and merges strongly connected components into single vertices. The final potentials on vertices in $G$ will be the potentials of corresponding strongly connected component nodes in $G'$. Potential of $t'$, the sink vertex, is initialized to 0 as all dual optimal solutions must satisfy that.
    
    The correctness of the algorithm follows from two simple facts -

    \begin{itemize}
        \item Essential edges always carry full flow, meaning $f_{ij}>0$ for these edges. Strict complementarity conditions ensure that the corresponding dual constraint on these essential edges must always remain tight. So, if we have an essential edge $e=(i,j)$ from \textsc{Fixed} to \textsc{Free}, the potential of the free vertex should be set according to $\delta_{ij} = \pi_i-\pi_i$. Note that, essential edges in $G'$ and $G$ have reverse directionality.  
        \item 
        If the max-flow is non-zero, vertices $s'$ and $t'$ must have essential edges flowing in to and out of them, respectively. And, since flow is conserved at every other node, all of them must have at least one essential edge into and out of them. So, the only way the algorithm doesn't assign potentials on some vertices is if all the essential edges incident on them contains a cycle, but $G'$ is a Directed Acyclic Graph.      
        
    \end{itemize}

    So, potentials to all the vertices will be assigned within $n$ iterations. These potentials will be extended to vertices in $G$ as explained above. Now, since we did not consider all the edges to build $(\pi,\delta)$, we check the feasibility of all dual constraints, including $\pi_s-\pi_t \geq 0$. If the solution is dual feasible, it gives the exact dual solution that produces this Owen set imputation. If it is not, then the edges that we have chosen when building the solution $(\pi,\delta)$ and the dual constraint that failed gives us a no-certificate.

\end{proof}

\section{MST Game and Min-cost Branching Game}
\label{sec:branching_game}
\newcommand{\abs}[1]{\left|#1\right|}

In this section, we will solve the general case of the min-cost branching game. The undirected version, the min-cost spanning tree game is a special case of this and can be solved easily by replacing each undirected edge with two directed edges. Note that, unlike the max-flow game, we deal with costs instead of profits in these games. 
Firstly, we would like to obtain the leximin(/leximax) fair imputations in the core of these games. But, a corollary of the following theorem is that finding these imputations is NP-hard.

\begin{restatable}{theorem}{mstleximinhardness}
\label{thm:mst-leximin-hardness}
${}^\dagger$
    Finding a core imputation that maximizes the minimum cost-share of any vertex in an MST game or a min-cost branching game is NP-hard. Similarly, it is NP-hard to find a core imputation that minimizes the maximum cost-share of any vertex.
\end{restatable}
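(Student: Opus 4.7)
My plan is to prove this by reduction from the core-membership problem for MST games, shown to be co-NP-hard by \cite{faigle1997complexity}, in parallel with the strategy that presumably underlies the max-flow analog in Theorem~\ref{thm:flow-leximin-hardness}. The underlying intuition is that an algorithm which computes a core imputation maximizing the minimum cost-share can be used to decide core membership in polynomial time, which would collapse co-NP to P.

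First I would set up a reduction that takes an MST core-membership instance --- a graph $(G=(V,E), c)$ with root $r$ and a candidate imputation $q$ --- and produces in polynomial time a larger MST instance $(G', c')$ together with a distinguished ``witness'' vertex $w$ and a threshold $\tau$. The construction would attach auxiliary vertices and edges whose weights are chosen to be large but polynomially bounded, so that every core imputation of $G'$ must approximately reproduce $q$ on $V\setminus\set{r}$; any slack created by a sub-coalition $S$ that could refute $q\in\mathrm{core}(G)$ gets absorbed into the cost-share of $w$. The intended correspondence is that the maxmin value of the core of $G'$ exceeds $\tau$ if and only if there is no violating coalition for $q$ in $G$, i.e., if and only if $q\in\mathrm{core}(G)$. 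Given such a reduction, a polynomial-time algorithm for the maxmin (hence leximin) core imputation of $G'$ would decide the co-NP-hard core-membership question, giving the claimed NP-hardness. An analogous gadget in which $w$'s share is \emph{pushed down} by a minmax/leximax objective yields NP-hardness for the second statement.

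The main obstacle I anticipate is the design of the witness gadget. Because $v(S)$ in the MST game is defined through the MST on the induced subgraph $S\cup\set{r}$, the added edges and vertices must be calibrated so that (i) they do not create spurious tight coalitions in $G'$ that would dominate $w$'s behavior, (ii) they preserve the coalition in $G$ that witnesses a violation of $q\in\mathrm{core}(G)$ as a tight coalition in $G'$, and (iii) the entire construction remains of polynomial size. A natural starting point is to make $w$ a ``fresh'' vertex adjacent only to the root via a cheap edge, so that $w$'s cost-share floats between bounds determined by the rest of the imputation, while using auxiliary leaves attached by heavy edges to pin down the shares of the original vertices near $q$. Once such a gadget is engineered, the min-cost branching case follows by the standard reduction from undirected MST games to branching games mentioned at the start of Section~\ref{sec:branching_game}, obtained by replacing each undirected edge with two anti-parallel directed edges of the same cost.
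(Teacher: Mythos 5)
There is a genuine gap: your reduction is only a plan, and the one concrete construction you sketch (a floating witness vertex $w$ plus heavy auxiliary leaves that ``pin'' the original vertices' shares near $q$) is not carried out and is unlikely to work as stated. In an MST game the core is a polytope, and attaching leaves or a cheap root-edge to $w$ only yields one-sided coalition inequalities; it does not force every core imputation of $G'$ to approximately reproduce an arbitrary target vector $q$ on $V\setminus\{r\}$, nor does it make the maxmin value of the core a clean indicator of whether $q$ is in the core. Without an explicit gadget and a proof of the claimed equivalence ``maxmin $>\tau$ iff $q\in\mathrm{core}(G)$,'' the argument does not go through.

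The paper avoids this difficulty entirely by not reducing from an \emph{arbitrary} core-membership instance. It takes the \emph{specific} X3C-derived instance $(G,c)$ and imputation $s$ of \cite{faigle1997complexity} and transforms them so that the candidate imputation becomes uniform: first add $1$ to every edge cost and to every vertex's cost-share (this preserves the equivalence with X3C because both $s(S)$ and $v(S)$ increase by exactly $|S|$), so all shares become positive integers; then replace each vertex $v$ by $s'(v)$ copies joined by zero-cost edges, so that the induced imputation $s^*$ assigns every vertex exactly $1$. Since $s^*$ is the unique imputation maximizing the minimum share (and minimizing the maximum share) over \emph{all} imputations, an algorithm returning the maximin (or minimax, leximin, leximax) core imputation immediately decides whether $s^*$ is in the core --- its optimum equals $1$ iff $s^*$ is in the core --- which is NP-hard by the chain of equivalences back to X3C. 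This ``make the hard imputation uniform by splitting agents'' step is the key idea missing from your proposal; if you want to salvage your approach, you should adopt it rather than trying to engineer a general witness gadget.
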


\subsection{Owen set for the min-cost branching game}
\label{sec:MST_dual_consisitent_imputations}
Let $G=(V, E)$ be a directed graph, $c: E\rightarrow\mathbb{R}_+$ a cost function on the edge and $r\in V$ a root vertex. The value/worth of a set $S\subseteq \Vr$ , $v(S)$, corresponds to the minimum cost branching in the subgraph of $G$ restricted to $(S\cup \{r\})$ and defines the characteristic function of the game. 

The problem of finding a minimum-cost branching can be formulated as an integer program. Let $x(e)=1$ if $e\in E$ is included in the branching and is 0 otherwise. Let $\btd(S)\subseteq E$ be the edges with tail in $S$ and head in $\overline{S}$. For every $S\subseteq\Vr$, at least one edge from the set $\btd(S)$ should be contained in any branching and hence $\sum_{e\in\btd(S)} x(e)\ge 1, \forall S\subseteq\Vr$. The integer program which minimizes $\sum_{e\in E} x(e)c(e)$ subject to the above constraint yields a minimum cost branching in $G$. The LP-relaxation of the integer program replaces the integrality constraint $x(e)\in\set{0,1}$ with $x(e)\ge 0$. 

\begin{mini}
		{} {\sum_{e\in E} x(e)c(e)}
			{\label{eq.mst-org-primal}}
		{}
        \addConstraint{\sum_{e\in\btd(S)} x(e)}{\ge 1}{\quad\forall S\subseteq\Vr}
		\addConstraint{x_e}{\geq 0}{\quad\forall e \in E}
\end{mini} 
The dual of this LP has a non-negative variable $y(S)$ for $S\subseteq\Vr$. 
\begin{maxi}
		{} {\sum_{S\subseteq\Vr} y(S)}
			{\label{eq.mst-org-dual}}
		{}
        \addConstraint{\sum_{S:e\in\btd(S)} y(S)}{\le c(e)}{\quad\forall e\in E}
		\addConstraint{y(S)}{\geq 0}{\quad\forall S\subseteq\Vr}
\end{maxi}

Let $y$ be a feasible dual solution. The function $y':2^V\times V\rightarrow\mathbb{R}_+$ is a {\em split} of $y$ if $\forall S\subseteq\Vr$, $\sum_{v\in S} y'(S,v)=y(S)$ and $y'(S,v) > 0$ implies $v\in S$. A cost-share $s$ is {\em consistent with the dual solution $y$} if there exists a split $y'$ of $y$ such that for all $v\in\Vr$, $s(v)=\sum_{S:v\in S} y'(S,v)$. We say that a cost-share(/imputation), $s$, is in the Owen set of Min-cost branching game if there exists a feasible dual solution $y$ such that $s$ is consistent with $y$. 

\begin{lemma}
    For the min-cost branching game, the Owen set is a subset of the core.
\end{lemma}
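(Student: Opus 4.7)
The plan is to reduce the core inequality for a sub-coalition $S$ to a weak-duality statement for a suitably ``projected'' dual solution on the sub-game over $S\cup\{r\}$. Fix an Owen-set imputation $s$ with witnessing feasible dual $y$ and split $y'$. First I would observe that because $s$ is required to be an imputation, $v(\Vr)=\sum_v s(v)=\sum_v\sum_{S:v\in S}y'(S,v)=\sum_S y(S)$, so by weak duality $y$ must actually be an \emph{optimal} dual solution (this is where one invokes strong LP duality together with the well-known integrality of LP \eqref{eq.mst-org-primal} for min-cost branchings). That handles budget-balance for the grand coalition; the real work is the core inequality for a proper $S\subseteq\Vr$.

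For an arbitrary $S\subseteq\Vr$, the key idea is to build a dual-feasible solution $\hat y$ for the min-cost branching LP of the induced sub-instance $G(S\cup\{r\})$ directly from $y$. Define, for every nonempty $A\subseteq S$,
\[
\hat y(A)\;:=\;\sum_{S'\subseteq\Vr\,:\,S'\cap S\,=\,A} y(S').
\]
The central calculation is to show $\hat y$ is feasible for the dual LP of the sub-game. Given an edge $e\in E(S\cup\{r\})$ with tail $u\in S$ and head $w\in(S\setminus\{u\})\cup\{r\}$, one checks that an original set $S'$ contributes to $\sum_{A:\,e\in\btd_S(A)}\hat y(A)$ precisely when its intersection $A=S'\cap S$ satisfies $u\in A$ and $w\notin A$; case-splitting on whether $w=r$ or $w\in S$ shows this is equivalent to $e\in\btd(S')$ in the original graph. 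Hence
\[
\sum_{A:\,e\in\btd_S(A)}\hat y(A)\;=\;\sum_{S'\,:\,e\in\btd(S')} y(S')\;\le\;c(e),
\]
using the dual feasibility of $y$. So $\hat y$ is feasible for the dual of the sub-game.

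Applying weak duality to $\hat y$ against any feasible primal of the sub-game (in particular the min-cost branching of $G(S\cup\{r\})$) gives
\[
\sum_{\emptyset\neq A\subseteq S}\hat y(A)\;\le\;v(S).
\]
Unfolding the definition of $\hat y$, the left side equals $\sum_{S'\,:\,S'\cap S\neq\emptyset} y(S')$. Finally I bound the coalition's total share by swapping sums and exploiting nonnegativity of the split:
\[
\sum_{v\in S} s(v)\;=\;\sum_{S'}\sum_{v\in S'\cap S} y'(S',v)\;\le\;\sum_{S'\,:\,S'\cap S\neq\emptyset} y(S')\;\le\;v(S),
\]
which is exactly the core inequality for cost games.

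The main obstacle is the bookkeeping step showing $\hat y$ is dual-feasible, since one must verify that aggregating the $y(S')$'s along the equivalence classes $\{S':S'\cap S=A\}$ does not overshoot an edge's capacity; the case analysis at $w=r$ versus $w\in S\setminus\{r\}$ is precisely what makes this work, and is the only place the head-vs-tail convention for $\btd(\cdot)$ matters. Everything else is accounting that relies only on nonnegativity of $y'$ and the fact that a cost-share is budget-balanced, which we already pinned down at the start.
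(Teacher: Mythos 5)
Your proof is correct and follows the same skeleton as the paper's: both arguments reduce the core inequality to the two-step bound $\sum_{v\in S}s(v)\le\sum_{X:\,X\cap S\neq\emptyset}y(X)\le v(S)$, with the first step being the identical swap-and-drop computation using nonnegativity of the split. The only difference is in how the second step is justified: the paper bounds $\sum_{X\cap S\neq\emptyset}y(X)$ directly by observing that the min-cost branching $T$ of $G(S\cup\{r\})$ satisfies $|T\cap\btd(X)|\ge 1$ for every $X$ meeting $S$ (so $\sum_{e\in T}c(e)\ge\sum_X y(X)\cdot|T\cap\btd(X)|\ge\sum_{X\cap S\neq\emptyset}y(X)$), whereas you package the same computation as feasibility of the projected dual $\hat y$ for the sub-instance followed by weak duality. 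Your $\hat y$-feasibility check is exactly the paper's per-edge bound grouped by the equivalence classes $\{S':S'\cap S=A\}$, and your weak-duality step is exactly its crossing count, so the two are the same argument in different clothing; your version is slightly more modular and would generalize to other covering LPs, at the cost of the extra bookkeeping you flag. One minor remark: the opening digression about optimality of $y$ and integrality of the branching LP is unnecessary --- budget balance is already built into the definition of an imputation, and in any case $\sum_X y(X)=\opt$ plus weak duality alone certifies optimality without invoking integrality.
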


\begin{proof}
    Let $s$ be an Owen set cost-share , i.e., there is a split $y'$ of a feasible dual solution $y$ such that $s$ is consistent with $y$. To show that it is also in the core, we need to show that the total cost of any sub-coalition $S$ of vertices is at most the cost of a min-cost branching of $S\cup\{r\}$, say $T$. We will show that
    
    $$\forall S\subseteq V\setminus \{r\}, \sum_{v\in S} s(v) \leq \sum_{X\cap S \neq \phi} y_X \leq \sum_{e\in T} c_e$$

    Both the inequalities follow from the definitions of Owen set cost-share and rearrangement of terms as below. For the first inequality, see that
    $$ \sum_{v\in S} s(v) 
    = \sum_{v\in S} \sum_{X:v\in X} y'(X,v) 
    \leq \sum_{X\cap S \neq \phi} \sum_{v\in X}y'(X,v) 
    = \sum_{X\cap S \neq \phi} y_X $$

    The second inequality can be derived as follows
    $$\sum_{e\in T} c_e 
    \geq \sum_{e\in T} \sum_{X:e\in\btd(X)} y_X 
    = \sum_{X} y_X\cdot |T\cap\btd(X)| 
    \geq \sum_{X: X\cap S \neq \phi} y_X\cdot |T\cap\btd(X)|
    \geq \sum_{X: X\cap S \neq \phi} y_X $$
     
    Thus all Owen set imputations are in the core.   
\end{proof}
However, not all core imputations are in Owen set and the following example (see Figure~\ref{fig:tree}) establishes this. Consider a graph with vertices $V=\set{v_1,v_2,v_3,u_1,u_2,u_3,a,b,r}$ and edges $E=E_1\cup E_2\cup E_3\cup\set{(a,b),(b,r)}$, where $E_1=\{(v_1,u_1),(v_1,u_2),(v_2,u_2),(v_2,u_3),\allowbreak(v_3,u_3),(v_3,u_1)\}$, $E_2=\set{(u_1,a),(u_2,a),(u_3,a)}$ and $E_3=\set{(u_1,b),(u_2,b),(u_3,b)}$. All edges in $E_2$ have cost 0 and edge $(a,b)$ has cost 2. All other edges have unit cost. The root is $r$. 
\begin{figure}
    \centering
    \includegraphics[scale=0.5]{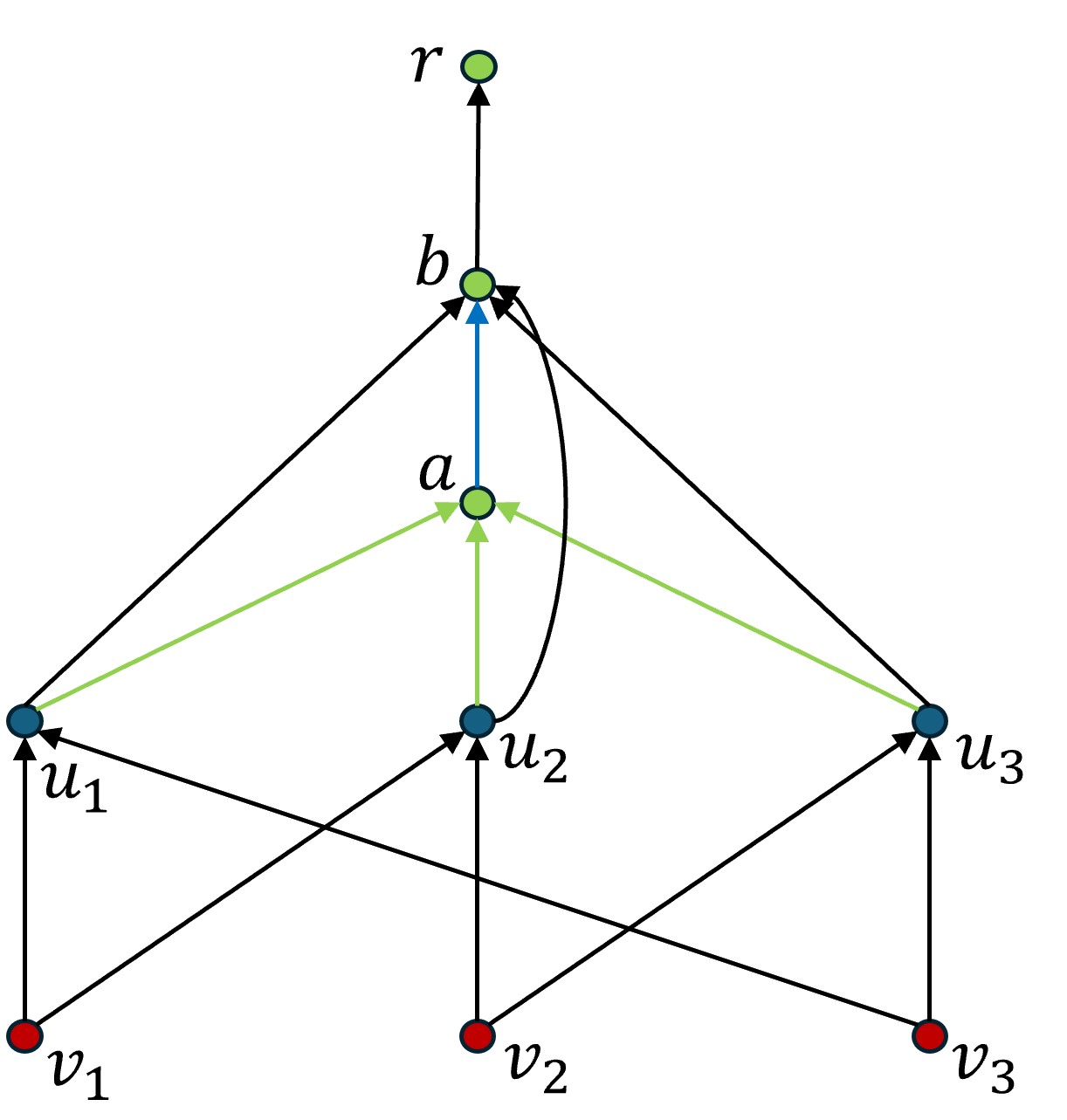}
    \caption{The green edges have cost 0 and the black edges are unit cost. The blue edge has cost 2.}
    \label{fig:tree}
\end{figure}

It is easy to check that a cost-share of 2 to each vertex in the set $X=\set{v_1, v_2, v_3}$ and 0 to all other vertices is a core imputation. However, this cost-share is not in the Owen set. Note that the minimum cost branching in this instance has cost 6 and this is also the total cost-share of vertices in $X$. Hence, if there exists a feasible dual solution $y$ which can be split to obtain this cost-share, then $y(S) > 0$ implies $S\cap X\neq\phi$. 

Note that if $S\cap X\neq\phi$ and $b\not\in S$ then $\abs{\btd(S)\cap (E_1\cup E_3)}\ge 2$. Since total cost of edges in $E_1\cup E_3$ is 9, the total $y$-value of sets $S$ such that $S\cap X\neq\phi, b\not\in S$ is at most 4.5. The total $y$-value of sets $S$ such that $S\cap X\neq\phi, b\in S$ is at most 1 since $(b,r)\in\btd(S)$. Thus the total y-value of sets $S$ such that $S\cap X\neq\phi$ is at most 5.5 which implies that the total cost-share of vertices in $T$ in any Owen set imputation is at most 5.5. Thus the imputation which assigns a cost-share of 2 to each vertex in $T$ is not in Owen set.

\begin{lemma}
    Given an imputation, we can efficiently decide if it is in the Owen set by checking if their exists a corresponding dual optimal solution. 
\end{lemma}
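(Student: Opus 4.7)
The plan is to formulate Owen-set membership as a linear-programming feasibility question in the split variables $y'(S,v)$, and to solve it by applying the ellipsoid method to a polynomial-sized dual whose separation oracle is a minimum $v$-to-$r$ cut computation.

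First, given the imputation $s$, I would write the feasibility LP asking for nonnegative variables $y'(S,v)$ indexed by $v \in S \subseteq \Vr$ that satisfy $\sum_{S \ni v} y'(S,v) = s(v)$ for every $v \in \Vr$ and $\sum_{S:\, v \in S,\, e \in \btd(S)} y'(S,v) \le c(e)$ for every edge $e$. Any feasible $y'$ yields the dual solution $y(S) := \sum_{v \in S} y'(S,v)$, whose objective $\sum_S y(S) = \sum_v s(v) = c(T)$ is automatic and therefore dual-optimal; conversely any dual-optimal $y$ with a split compatible with $s$ is a feasible point of this LP. So Owen-set membership of $s$ is equivalent to feasibility of this LP.

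Next, the LP has exponentially many variables but only $(n-1) + m$ nontrivial constraints, so I would pass to its Farkas dual. Introducing multipliers $\pi_v$ for the equalities and $z_e \ge 0$ for the edge inequalities, Farkas' lemma tells us that the primal is feasible iff
$$\max \Bigl\{ \sum_{v \in \Vr} \pi_v\, s(v) - \sum_{e \in E} z_e\, c(e) \;:\; z \ge 0,\; \pi_v \le \sum_{e \in \btd(S)} z_e \;\; \forall\, v \in S \subseteq \Vr \Bigr\} \le 0.$$
The exponential family of constraints collapses per vertex: for fixed $v$ the binding one is $\pi_v \le \min_{S \ni v,\, r \notin S} \sum_{e \in \btd(S)} z_e$, which is precisely the directed minimum $v$-to-$r$ cut under edge capacities $z_e$. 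This min cut is computable in strongly polynomial time via any max-flow algorithm, so checking a candidate $(\pi, z)$ and producing a violated cut $S$ as a separating hyperplane are polynomial-time tasks.

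Finally, I would invoke the Gr\"otschel--Lov\'asz--Schrijver framework: the ellipsoid method with this oracle solves the dual in polynomial time. If its optimum is strictly positive, the imputation is not in the Owen set; otherwise it is, and to recover an explicit split I would restrict the original primal to the polynomially many variables $y'(S,v)$ whose index set $S$ was produced by the oracle during the ellipsoid run, and solve this polynomial-size LP directly. The main obstacle is the standard but delicate equivalence-of-separation-and-optimization step: arguing that the cuts uncovered by ellipsoid already carry enough primal variables to witness feasibility whenever the dual optimum equals zero. The remaining ingredients (the min-cut separation oracle and the Farkas transformation) are routine.
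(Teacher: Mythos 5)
Your proposal is correct and follows essentially the same route as the paper: both formulate Owen-set membership as feasibility of the exponential-variable split LP in which the vertex constraints are set equal to the given imputation, and both solve the resulting dual by the ellipsoid method with a minimum $v$-to-$r$ cut separation oracle under capacities $z(e)$. Your explicit Farkas-lemma derivation and the remarks on recovering a witness split merely spell out details the paper leaves implicit.
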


Let $p=\{p_v\}_{v\in \Vr}$ be the imputation. Consider LP~\ref{eq.mst-primal}. The feasibility version of this LP where the constraint on the vertices is replaced with $\sum_{S\subseteq\Vr:v\in S} y(S,v) = p_v, \forall v\in\Vr$. This LP is feasible if and only if the imputation is in the Owen set. While this LP has exponentially many variables, its dual can be solved efficiently using the ellipsoid method. The separation oracle is same for the LP required to obtain the leximin Owen set imputation and is described in the following section.

\subsection{Finding leximin Owen set imputation using linear programming}
\label{section:leximinMST}

We begin by formulating a linear program(LP) for finding an Owen set imputation that maximizes the minimum cost-share of any vertex. Our LP has non-negative variables $y(S,v), v\in S, S\subseteq\Vr$, which is the share of vertex $v$ in the dual variable $y(S)$. Thus $y(S)=\sum_{v\in S} y(S,v)$ and hence the feasibility of the dual solution can be captured by the constraints $\sum_{S:e\in\btd(S)}\sum_{v\in S} y(S,v)\le c(e), \forall e\in E$. The cost-share of a vertex $v$ equals $\sum_{S:v\in S} y(S,v)$ and since we wish to maximize the minimum cost we should include the constraint $\sum_{S:v\in S} y(S,v)\ge \lambda, \forall v\in\Vr$, and maximize $\lambda$.

To ensure that $p(v)=\sum_{S:v\in S} y(S,v)$ is an imputation we require that $$\sum_{v\in\Vr} p(v)=\sum_{v\in\Vr} \sum_{S:v\in S} y(S,v) = \sum_{S\subseteq\Vr} y(S)=\opt$$ where $\opt$ is the cost of the minimum cost branching in $G$.

We now formulate a series of LPs designed to return a leximin Owen set imputation. A corresponding set of LPs for finding a leximax Owen set imputation, along with their respective separation oracle, can be found in Appendix~\ref{app:leximaxMST}.\\

Consider the following LP that returns an Owen set imputation that maximizes the minimum cost-share. 

\begin{maxi}
		{} {\lambda}
			{\label{eq.mst-primal}}
		{}
        \addConstraint{\sum_{S\subseteq\Vr:e\in\btd(S)}\sum_{v\in S} y(S,v)}{\le c(e)}{\quad\forall e\in E}
		\addConstraint{\sum_{S\subseteq\Vr:v\in S} y(S,v)}{\ge \lambda}{\quad\forall v\in\Vr}
        \addConstraint{\sum_{S\subseteq\Vr} \sum_{v\in S} y(S,v)}{=\opt}{}
        \addConstraint{y(S,v)} {\geq 0}{\quad\forall S\subseteq\Vr, \forall v\in S}
\end{maxi} 
Since the number of variables in this LP is exponentially large we consider the dual of this LP that has variables $z(e), e\in E$, $z(v), v\in\Vr$, and $\beta$. 

\begin{mini}
		{} {\sum_{e\in E} z(e)c(e)-\beta\opt}
			{\label{eq.mst-dual}}
		{}
        \addConstraint{\sum_{e\in\btd(S)} z(e)}{\ge z(v)+\beta}{\quad\forall S\subseteq\Vr, \forall v\in S}
        \addConstraint{\sum_{v\in\Vr} z(v)}{=1}
		\addConstraint{z(e)}{\geq 0}{\forall e \in E}
        \addConstraint{z(v)}{\geq 0}{\forall v \in \Vr}
	\end{mini} 

Although LP \ref{eq.mst-dual} has exponentially many constraints, it can be solved in polynomial-time using the ellipsoid method since there is a polynomial-time separation oracle to identify a violating constraint. Given an assignment of values to edges $z(e),e\in E$ and vertices $z(v), v\in V, \sum_{v\in\Vr} z(v)=1$ and a value $\beta$, we should evaluate whether the first constraint from LP~\ref{eq.mst-dual} corresponding to some set $S\subseteq\Vr$ and some $v\in S$ fails. 

\textbf{Seperation Oracle:} Let $\alpha(v)$ be the maximum flow from $v$ to $r$ in the graph $G$ with edge capacities given by $z(e), e\in E$. If $\alpha(v)$ is less than $z(v)+\beta$ then by the max-flow min-cut theorem there exists a set $S\subseteq\Vr, v\in S$ such that $\sum_{e\in\btd(S)} z(e) < z(v)+\beta$. The minimum cut separating $v$ and $r$ then identifies the violated inequality. If for all $v\in\Vr$, $\alpha(v)\ge z(v)+\beta$ then no constraint is being violated.

% Using this separation oracle, we can find the maximin $\lambda$ value using the ellipsoid method. Using the technique from~\Cref{sec:lp_leximin_solution}, we can also find a set of vertices that attain this value in the leximin solution - all these vertices must have a positive $z(v)$ and the constraint $\sum_{v\in\Vr} z(v)=1$ implies there is at least one such vertex. We can now iteratively update the primal and dual LPs to maximize the second minimum, third minimum and so on till we obtain the leximin imputation. A complete overview of this procedure and the extension leximax core impuation are discussed in Appendix~\ref{app:leximaxMST}.

Given this separation oracle, we can solve the dual LP \ref{eq.mst-dual} in polynomial-time to obtain the optimal dual solution $ z^* $. By strong duality, the optimum primal objective, denoted by $ \alpha^* $, equates to the optimum dual objective. Let $ y^* $ denote any optimum solution of the primal LP \ref{eq.mst-primal}. By the complementary slackness condition, for all vertices $ v \in V $ such that $ z_v^* > 0 $, it holds that $ \sum_{S : v \in S} y^*(S, v) = \alpha^* $. This implies that the set of vertices with positive duals, denoted as $ V_{\alpha^*} $, are assigned the cost value $ \alpha^* $ in every feasible max-min solution; hence, these vertices form a subset of the minimum set of vertices that must be fixed at $ \alpha^* $. Given that the dual optimum solutions satisfy the feasibility constraint $ \sum_{v \in V \setminus \{r\}} z^*(v) = 1 $, there exists at least one vertex with a positive dual, implying that $ V_{\alpha^*} $ is non-empty. 

Let $V_F$ denote the set of vertices whose values have been fixed so far. This set may include all vertices except $r$. The mapping $m: V \setminus \{r\} \to \mathbb{R}_+$ assigns the fixed value to vertices in $V_F$ and is zero for all others.
Next, the vertices in $ V_{\alpha^*} $ are fixed, and the set of fixed variables $V_F$ along with the mapping $m(\cdot)$ are updated accordingly. 
The procedure iterates as long as there is a vertex in $V\setminus\{r\}$ not in $V_F$.  Since in every iteration at least the cost-share of one vertex is fixed, there are at most $ n $ iterations, ensuring that the problem is solvable in polynomial-time. Note that only vertices are paying costs, and each vertex pays a total of $\sum_{S:v\in S} y(S,v)$. Therefore at every iteration we find the max-min cost that should be assigned, and we assign it to a minimum set of vertices. 

The following set of LPs return the leximin Owen set imputation in the min-cost branching game. The procedure stops when all vertex costs are fixed. 
For the specific problem of finding the leximin Owen set imputation, the constraint $\sum_{S \subseteq \Vr} \sum_{v \in S} y(S,v) = \opt$ and the variables $y(S,v)$ for the exponentially many non-contiguous sets $S$ can be eliminated to further simplify the series of s. However, the resulting set of LPs necessitates a more complicated separation oracle, which is discussed in Appendix~\ref{app:MST_complex_LPs}.

\begin{maxi}
		{} {\alpha}
			{\label{eq.mst-primal-}}
		{}
        \addConstraint{\sum_{S \subseteq\Vr:e\in\btd(S)}\sum_{v\in S} y(S,v)}{\le c(e)}{\quad\forall e\in E}
		\addConstraint{\sum_{S \subseteq\Vr:v\in S} y(S,v)}{\ge \alpha}{\quad\forall v\notin V_F, v\neq {r}}
  		\addConstraint{\sum_{S \subseteq\Vr:v\in S} y(S,v)}{=  m(v)}{\quad\forall v\in V_F}
        \addConstraint{\sum_{S \subseteq \Vr} \sum_{v\in S} y(S,v)}{=\opt}{}
        \addConstraint{y(S,v)}{\geq 0}{\quad\forall S\subseteq\Vr,\forall v\in S}
	\end{maxi} 

The dual program is the following:
\begin{mini}
		{} {\sum_{e\in E} z(e)c(e) - \sum_{v \in V_F } z(v) \text{m}(v)-\beta\opt}
			{\label{eq.mst-dual-}}
		{}
        \addConstraint{\sum_{e\in\btd(S)} z(e)}{\ge z(v)+\beta}{\quad\forall S\subseteq\Vr, \forall v\in S}
        \addConstraint{\sum_{v\notin V_F, v\neq r} z(v)}{=1}
		\addConstraint{z(e)}{\geq 0}{\quad\forall e \in E}
        \addConstraint{z(v)}{\geq 0}{\quad\forall v \notin V_F, v\neq r }
\end{mini} 

\begin{theorem}
\label{thm:leximinminbranching}
    The leximin and leximax Owen set imputations for the min-cost branching game can be found efficiently.
\end{theorem}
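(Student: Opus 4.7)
The plan is to assemble the two pieces already developed in the excerpt: the iterative LP-based leximin framework of Section~\ref{sec:lp_leximin_solution}, and the polynomial-time solvability of the per-iteration dual LPs via the ellipsoid method with a max-flow separation oracle. The Owen set is convex (a projection of a polyhedron defined by the $y(S,v)$'s), so by Lemma~3 of \cite{Vazirani-leximin} (invoked earlier in the excerpt), the leximin and leximax Owen set imputations are unique; it remains only to compute them efficiently.

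First I would argue that LP~\ref{eq.mst-dual-} is polynomial-time solvable in every iteration. The only structural changes relative to LP~\ref{eq.mst-dual} are the substitution of the constants $m(v)$ for some of the $z(v)$ in the objective, and the restriction of the normalizing constraint $\sum z(v) = 1$ to the currently unfixed vertices. Crucially, the exponentially large family of constraints $\sum_{e \in \btd(S)} z(e) \ge z(v) + \beta$ is unchanged. Therefore the same separation oracle described for LP~\ref{eq.mst-dual} applies verbatim: for each candidate $v \in \Vr$, compute the maximum flow $\alpha(v)$ from $v$ to $r$ under capacities $z(e)$, and compare against $z(v) + \beta$; the min-cut realizing $\alpha(v) < z(v) + \beta$ exhibits a violated inequality. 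Consequently the ellipsoid method solves LP~\ref{eq.mst-dual-} in polynomial time, and by LP duality so can LP~\ref{eq.mst-primal-} together with an explicit primal optimum supported on a polynomial number of sets $S$.

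Next I would invoke the iterative machinery of Section~\ref{sec:lp_leximin_solution} verbatim. At each step, solve LP~\ref{eq.mst-dual-} to obtain an optimal $z^*$ and optimal value $\alpha^*$; let $V_{\alpha^*} = \{v \notin V_F : z^*(v) > 0\}$. By the normalization $\sum_{v \notin V_F, v \neq r} z(v) = 1$ the set $V_{\alpha^*}$ is non-empty, and by complementary slackness every optimal primal $y^*$ satisfies $\sum_{S : v \in S} y^*(S,v) = \alpha^*$ for every $v \in V_{\alpha^*}$. Fix these vertices by updating $V_F \leftarrow V_F \cup V_{\alpha^*}$ and $m(v) \leftarrow \alpha^*$, and iterate. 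Because at least one vertex is fixed per iteration there are at most $|V|-1$ iterations, and the sequence of fixed values $\alpha^*$ is non-decreasing (a feasible primal from the previous step remains feasible). The resulting imputation is leximin by the same argument used in Lemma~\ref{lem:leximin_optimum}.

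For the leximax direction, the analogous LP formulation and separation oracle (deferred to Appendix~\ref{app:leximaxMST}) minimize a shared upper bound on the cost-shares, iteratively pin down vertices forced to that bound, and terminate in at most $n-1$ rounds by the same reasoning. The main technical obstacle is ensuring that the separation oracle continues to work in every iteration, not only the first; but since fixing vertices alters only the objective and the normalizing constraint, and leaves the exponential branching-cover family untouched, the max-flow oracle suffices throughout, and polynomial-time tractability is preserved.
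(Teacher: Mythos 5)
Your proposal is correct and follows essentially the same route as the paper: the ellipsoid method applied to LP~\ref{eq.mst-dual-} with the max-flow/min-cut separation oracle (which, as you note, is unaffected by the fixing of vertices), combined with the iterative complementary-slackness fixing scheme of Section~\ref{sec:lp_leximin_solution} and its correctness argument from Lemma~\ref{lem:leximin_optimum}, with the leximax case handled symmetrically as in Appendix~\ref{app:leximaxMST}. No gaps.
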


While a combinatorial algorithm to find the leximin Owen set imputation might exist, we prove that such an algorithm a slightly generalized problem - given a set $T$ of vertices, find an Owen set imputation which maximizes the minimum cost-share of a vertex in $T$ - will yield a combinatorial algorithm for finding the optimum fractional setcover. The latter problem is well-researched and while there are combinatorial algorithms known to compute a $(1+\epsilon)$ approximation to the optimum value, the only algorithm known for determining the exact value is to solve the LP. 

\begin{restatable}{theorem}{mstleximincombi}
\label{thm:mst_leximin_combi}
${}^\dagger$
A combinatorial algorithm that maximises the minimum cost-share of a vertex in $T$ in an Owen set imputation would yield a combinatorial algorithm for finding the optimum fractional set cover.  
\end{restatable}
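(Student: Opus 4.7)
The plan is to give a polynomial-time, combinatorial reduction from optimum fractional set cover to the generalized problem of finding an Owen set imputation that maximizes the minimum cost-share over a prescribed subset $T$ of vertices, so that any combinatorial algorithm for the latter transfers back to solve the former. Given an instance of fractional set cover with elements $\{e_1,\dots,e_m\}$, sets $S_1,\dots,S_n$, and costs $c_1,\dots,c_n$, I would build a directed graph $G$ with root $r$, an "element vertex" $u_j$ for each $e_j$, an "edge-gadget" containing a single cost-$c_i$ edge $\epsilon_i$ for each $S_i$, and further auxiliary zero-cost structure whose role is to designate, for each set $S_i$, a particular subset $A_i \subseteq V\setminus\{r\}$ with $A_i \cap \{u_1,\dots,u_m\} = \{u_j : e_j \in S_i\}$ and with $\epsilon_i$ being (essentially) the only positive-cost edge in $\btd(A_i)$. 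Set $T := \{u_1,\dots,u_m\}$.

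The main step is to show that, under this construction, the LP \ref{eq.mst-org-dual} collapses onto precisely the packing LP dual of the given fractional set cover instance: the zero-cost auxiliary edges, by complementary slackness, force $y(S)=0$ for every $S$ that is not of the form $A_i$; the remaining dual variables $y(A_i)$ are constrained by $\sum_{i:\,e_j \in S_i} y(A_i) \le c_i$ (after accounting for the cost of $\epsilon_i$), which is exactly the set cover packing constraint. A split $y(A_i,u_j)$ must push all of $y(A_i)$ onto the element vertices of $A_i$ (the auxiliary vertices will be arranged to lie outside every $A_i$, or else fall outside $T$), so the cost-share of $u_j \in T$ is a linear function of the $y(A_i)$'s that are exactly the fractional set-cover dual incidences. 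Combined with the normalization $\sum_{v\in T} p(v) = \opt$, which equals the minimum-branching cost and is easy to compute combinatorially, the value of the max-min cost-share on $T$ determines the fractional set cover optimum via LP duality, and a single execution of the hypothetical combinatorial max-min Owen-set algorithm reveals the support of an optimum set-cover fractional solution.

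Since each ingredient of the reduction — constructing $G$, identifying $T$, and extracting the set cover value and solution from the resulting imputation — is a routine polynomial-time graph operation, a combinatorial algorithm for the generalized max-min Owen set problem indeed yields a combinatorial algorithm for the optimum fractional set cover, completing the proof. The principal obstacle is engineering the auxiliary gadgets so that (i) the only subsets $S$ receiving positive $y(S)$ in some optimal Owen set imputation are exactly the $A_i$'s, and (ii) no added degree of freedom in the split $y(S,v)$ allows the max-min objective to drift above what fractional set cover permits. I would handle (i) by interlocking zero-cost auxiliary edges so that every other cut $\btd(S)$ contains such an edge that must be tight in any optimal dual, and (ii) by ensuring each auxiliary vertex has a cheap cost-share route that keeps it out of the binding max-min constraint, leaving the element vertices in $T$ as the sole bottleneck.
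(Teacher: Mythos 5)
Your high-level plan is the same as the paper's: reduce fractional set cover to the generalized problem by building a rooted digraph with zero-cost edges encoding set membership, one cost-$c_i$ edge per set $S_i$, and taking $T$ to be the element vertices, so that the branching dual's packing constraints become the set-cover packing constraints. The paper's concrete gadget (a set-vertex $u_i$ per $S_i$, an element-vertex $v_j$ per $e_j$, zero-cost edges $(v_j,u_i)$ for $e_j\in S_i$ and $(u_i,a)$, a cost-$c_i$ edge $(u_i,b)$, and edges $(a,b),(b,r)$ of cost $c_0=n\max_i c_i$) is exactly the kind of construction you describe, and your observation that zero-cost edges force $y(S)=0$ unless $S$ contains the whole zero-cost-reachable neighborhood is the right mechanism for step (i).

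However, there is a genuine gap, and it is not just the deferred gadget engineering. The fractional set-cover dual is a \emph{max-sum} problem ($\max\sum_j y_j$ subject to packing), whereas the oracle you are given solves a \emph{max-min} problem over $T$. In your sketch the cost-share of $u_j$ is $\sum_{i:e_j\in S_i} y(A_i)$, and maximizing the minimum of these quantities is simply not the same LP as maximizing their sum; your appeal to ``the normalization $\sum_{v\in T}p(v)=\opt$'' does not hold (only the total over \emph{all} vertices equals $\opt$, and even granting it, a fixed sum does not make max-min equivalent to max-sum unless every coordinate can be equalized). The paper's proof hinges on an equalization device you are missing: a shared bottleneck edge $(b,r)$ of cost $c_0=n\max_i c_i$ together with the grand-coalition split variables $y(\Vr,v_j)$, which let each element vertex $v_j$ be topped up by $(C^*+c_0)/n-y^*_j$ out of a common budget $c_0$. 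This forces the maximin value to equal $(C^*+c_0)/n$, i.e.\ $C^*=n\lambda^*-c_0$, and the converse direction then reads off a feasible set-cover dual by restricting attention to sets avoiding $b$ and $r$. Without such a device, the value returned by the max-min oracle does not determine the set-cover optimum, so the reduction as proposed would fail.
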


\section{\texorpdfstring{$b$}{b}-Matching Game}

% \section{$b$-Matching Game}
\label{sec:b_matching_game}
\renewcommand{\deg}[2]{\mbox{deg}_{#1}(#2)}

We would like to obtain the leximin(/leximax) fair imputations in the core of these games. But the following theorem states that finding a sub-coalition failing core constraint for an ``approximate'' profit share is NP-hard.
We expect this NP-hardness result to extend to normal profit shares and to computing leximin(/leximax) core imputations in the same way as with the max-flow and MST games\footnote{A very recent paper \cite{GTV} has established the NP-hardness of these problems.}.

\begin{restatable}{lemma}{bmatchhardness}
\label{thm:bmatch-hardness}
${}^\dagger$
    Given a $b$-matching game on a bipartite graph $G=(U,V,E), w: E\rightarrow\mathbb{R}_+, b: U \cup V \rightarrow \mathbb{Z}_+ $ and an ``approximate'' profit-share $p:U\cup V\rightarrow\mathbb{R}_+$ such that $p(U\cup V)=\alpha\cdot v(U\cup V)$, deciding if there exists a coalition $S\subseteq U\cup V$ such that $p(S)<v(S)$ is NP-complete for any $\alpha>1$.
    
\end{restatable}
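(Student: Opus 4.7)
The plan is first to confirm NP-membership: a coalition $S\subseteq U\cup V$ serves as a polynomial-size witness, since $v(S)$ -- the maximum weight $b$-matching in the subgraph $G[S]$ -- is computable in polynomial time by LP duality on the integral bipartite $b$-matching polytope, after which comparing $p(S)$ with $v(S)$ is trivial.

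For NP-hardness, I would reduce from a standard NP-hard problem such as \textsc{Partition}, \textsc{3-Sat}, or \textsc{Set Cover}, in the spirit of the co-NP-hardness arguments used by Fang~et~al.\ for the max-flow game and Faigle~et~al.\ for the MST game. The reduction constructs a bipartite graph $G=(U,V,E)$ with edge weights $w$, vertex capacities $b$, and a ``base'' profit share $p_0$ that embeds the structure of the source problem into variable/clause or set/element gadgets. The share $p_0$ is then tuned so that coalitions corresponding to a feasible solution of the source problem satisfy $p_0(S)<v(S)$, while all other coalitions satisfy $p_0(S)\ge v(S)$; both directions of this equivalence need to be verified by reasoning about the structure of $b$-matchings in every induced subgraph of $G$.

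To meet the constraint $p(U\cup V)=\alpha\cdot v(U\cup V)$ for any prescribed $\alpha>1$, the next step is to take the base reduction, arranged so that $p_0(U\cup V)\le v(U\cup V)$ (for instance by making $p_0$ an imputation of the base game), and augment it with isolated vertices carrying positive profits summing to exactly $\alpha\cdot v(U\cup V)-p_0(U\cup V)\ge 0$. Because isolated vertices have no edges, they do not change $v(S)$ for any $S$, and adding them to a coalition can only increase $p(S)$; hence they neither introduce new violating coalitions nor destroy existing ones. This augmentation therefore handles every target $\alpha>1$ uniformly, regardless of the scale of the base reduction.

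The principal obstacle is the combinatorial design of the base gadgets: the weights, capacities, and base profit share must be tuned so that violating coalitions correspond bijectively to solutions of the source NP-hard problem. Since induced-subgraph $b$-matchings admit a rich fractional structure, ruling out spurious small violating coalitions -- for instance a single high-weight edge viewed as a two-vertex coalition -- requires vertex profits that are just high enough to defeat these easy violations while still being defeated by a coalition encoding an actual ``solution''. Once such a base reduction is established, the isolated-vertex augmentation lifts the hardness cleanly to every $\alpha>1$.
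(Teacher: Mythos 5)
Your proposal correctly identifies the shape of the argument (membership in NP via a coalition witness, hardness via a reduction in the style of Fang et al.\ and Faigle et al.), but it leaves out the entire substance of the proof. The paper's proof is a concrete reduction from \textsc{Exact Cover by 3-Sets}: a bipartite graph with element vertices, set vertices, and two auxiliary roots $r_1,r_2$, carefully chosen capacities ($b(u_i)=1$, $b(v_j)=4$, $b(r_1)=q$, $b(r_2)=4q$), edge weights perturbed by a small $\epsilon$, and profit-shares perturbed by a small $\delta$; the correctness argument is a four-way case analysis over which of $r_1,r_2$ lie in the coalition $S$, showing that a violating coalition exists iff an exact cover exists. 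You explicitly defer exactly this gadget design and verification (``the principal obstacle''), so what you have is a plan rather than a proof.

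There is also a conceptual problem with your route to the condition $p(U\cup V)=\alpha\cdot v(U\cup V)$ with $\alpha>1$. You propose to first build a base reduction in which $p_0$ is a genuine imputation and then pad with isolated positive-profit vertices. But a hardness reduction for a genuine imputation is precisely the co-NP-hardness of core membership for the $b$-matching game, which the paper states it could \emph{not} establish (it was only proved in later work); the whole point of phrasing the lemma with an ``approximate'' profit-share is to avoid needing that stronger result. In the paper's construction the over-distribution is not cosmetic padding: the excess $p(U\cup V)-v(U\cup V)=3q\epsilon/4-\delta>0$ arises because the case analysis needs the non-cover coalitions to be over-paid (so they never violate) while the exact-cover coalition is under-paid by $\delta$ (so it violates), and these requirements cannot be met simultaneously by an exact imputation in this construction. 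So the $\alpha>1$ slack is intrinsic to making the reduction work, not something you bolt on afterwards.
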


And so, we focus our attention on the Owen set of the $b$-Matching game and we show how to efficiently compute leximin Owen set imputations for this game. 

\subsection{Owen set imputations for the \texorpdfstring{$b$}{b}-matching game}

The linear program, LP~\ref{eq.b-uncon-core-primal-bipartite}, depicts the LP relaxation for finding a max-weight bipartite $ b $-matching. In this formulation, the variable $ x_{ij} $ indicates the degree to which edge $ (i, j) $ is included in the solution. Note that the variables $ x_{ij} $ are not bounded above as an edge may be matched more than once. For ease of notation, we will use $b_i$ instead of $b(i)$ for the capacity of vertices.

	\begin{maxi}
		{} {\sum_{(i, j) \in E}  {w_{ij} x_{ij}}}
			{\label{eq.b-uncon-core-primal-bipartite}}
		{}
		\addConstraint{\sum_{(i, j) \in E} {x_{ij}}}{\leq b_i \quad}{\forall i \in U}
		\addConstraint{\sum_{(i, j) \in E} {x_{ij}}}{\leq b_j }{\forall j \in V}
		\addConstraint{x_{ij}}{\geq 0}{\forall (i, j) \in E}
	\end{maxi}

Taking $u_i$ and $v_j$ to be the dual variables for the first and second constraints of (\ref{eq.b-uncon-core-primal-bipartite}), we obtain the dual LP: 

 	\begin{mini}
		{} {\sum_{i \in U}  {b_i u_{i}} + \sum_{j \in V} {b_j v_j}} 
			{\label{eq.b-uncon-core-dual-bipartite}}
		{}
		\addConstraint{ u_i + v_j}{ \geq w_{ij} \quad }{\forall (i, j) \in E}
		\addConstraint{u_{i}}{\geq 0}{\forall i \in U}
		\addConstraint{v_{j}}{\geq 0}{\forall j \in V}
	\end{mini}

	% \bigskip

\begin{definition}
    A bipartite $b$-matching core imputation $(p_U, p_V)$ is in the \textbf{Owen set} if there exists an optimal solution $(u, v)$ for the dual LP \ref{eq.b-uncon-core-dual-bipartite} and profit of each agent $i \in U$ is $p_U(i) = b_i u_i$ and the profit of each agent $j \in V$ is $p_V(j) = b_j v_j$.
\end{definition}

In \cite{Shapley1971assignment}, authors show that for the uniform bipartite $b$-matching game, the dual completely characterizes the core. However for the non-uniform bipartite $b$-matching game, all Owen set imputations are in the core but not all core imputations are in the Owen set. The following example from \cite{vazirani2023lpduality} provides one such instance.

\begin{figure}[ht]
\begin{center}
\begin{tikzpicture}[scale=1.5]
  \tikzstyle{vertex}=[circle,fill=blue!25,minimum size=20pt,inner sep=0pt]
  
  \node[vertex] (u) at (0,0) {$u$};
  \node[vertex] (v1) at (-2,0) {$v_1$};
  \node[vertex] (v2) at (2,0) {$v_2$};

  \draw (u) -- (v1) node[midway, above] {$1$};
  \draw (u) -- (v2) node[midway, above] {$3$};
\end{tikzpicture}
\caption{The graph for Example \ref{ex.b-arbitrary}.}
\label{fig.bmatchingexample}
\end{center}
\end{figure}

%%%%%%%%%%%%%%%%%%%%%%%%%%%%%%%%%%%%%%%%%%%%%%%%%%%%%%%%%%%%%%%

\begin{example}
	\label{ex.b-arbitrary}

 For the bipartite $ b $-matching game outlined by the graph in Figure \ref{fig.bmatchingexample}, assume the $ b $ values are set at $ (2, 2, 1) $ for vertices $ (u, v_1, v_2) $, and the weights of the edges $ (u, v_1) $ and $ (u, v_2) $ are given as $ 1 $ and $ 3 $ respectively.

When both edges are matched once, the value of the game totals 4. The unique optimal dual solution is $ (1, 0, 2) $ corresponding to $ (u, v_1, v_2) $. It can be observed that the profit-share of $ (4, 0, 0) $ amongst $ (u, v_1, v_2) $ is in the core. The corresponding dual solution would be $ (2, 0, 0) $; nonetheless, such a solution is not dual feasible. As such, this particular core imputation does not mirror an optimal dual solution.
\end{example}

\subsection{Combinatorial algorithm to compute the leximin Owen set imputation}
\label{sec:b_matching_comb_algo}
The leximin Owen set imputation for the $b$-matching game can be computed using the LP method (see \Cref{app:bmatching_lp}). We can also reduce the problem to finding leximin core imputation of an assignment game(see \Cref{app:bmatching_comb}) albeit in exponential time,. Below, we provide a high-level overview of an efficient combinatorial algorithm adapted from \cite{Vazirani-leximin} for the assignment game. A more detailed description of the algorithm is given in Appendix~\ref{app:b-matching-algo}.

\cite{Vazirani-leximin} first divides each edge $(i,j)$ into three groups—\emph{essential}, \emph{viable}, and \emph{subpar}— based on whether the edges are matched $min(b_i,b_j)$ times in all, some or no maximum weight $b$-matching. The graph $G=(V,E)$ is then restricted to just the essential and viable edges to give graph $H=(V,T_0)$; note that complementarity slackness conditions ensure that these edges are always tight in LP~\ref{eq.b-uncon-core-dual-bipartite}. Each connected component of $H$ is labeled as a \emph{unique imputation} component if all vertices receive the same profit share in every Owen set imputation, or a \emph{fundamental} component otherwise.

A key feature of the algorithm in \cite{Vazirani-leximin} is that the profit shares of vertices in a fundamental component can be “rotated” without violating core conditions. In the assignment game, this means profits (dual values) on the $U$-side can be increased (or decreased) while the corresponding profits on the $V$-side are decreased (or increased) by the same amount. For the $b$-matching game, this generalizes: the duals can still be adjusted at the same rate on both sides (with opposite signs), but the profit changes are proportional to the $b_i$  values of each vertex. The following lemma proves that such profit adjustments will still yield a valid Owen set imputation.

\begin{restatable}{lemma}{lemcomponentsarebalanced}
% ${}^\dagger$
\label{lem:components_are_balanced}
    Let $C$ be any fundamental component in $H_0$. Then the sum of $b_i$ values of the $U$ vertices is same as the sum of $b_i$ values of the $V$ vertices. 
\end{restatable}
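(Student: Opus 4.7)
The plan is to exploit the rotation structure of optimal dual solutions within a fundamental component. Since $C$ is fundamental, by definition there exist two Owen set imputations whose profit shares on $C$ differ; equivalently, there exist two optimal dual solutions $(u, v)$ and $(u', v')$ of LP~\ref{eq.b-uncon-core-dual-bipartite} that disagree on $C$.

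First I would pin down that the discrepancy on $C$ is a uniform rotation. Every essential or viable edge $(i, j) \in T_0$ inside $C$ is tight in every optimal dual, because it is matched with strictly positive multiplicity in some max-weight $b$-matching and complementary slackness then forces $u_i + v_j = w_{ij}$. Hence $u_i + v_j = u'_i + v'_j = w_{ij}$, which yields $u'_i - u_i = -(v'_j - v_j)$ for each such edge. Because $C$ is connected through $T_0$-edges, propagating this relation gives a single nonzero constant $\epsilon$ with $u'_i = u_i + \epsilon$ for all $i \in U_C$ and $v'_j = v_j - \epsilon$ for all $j \in V_C$.

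Next, I would argue that the two optimal duals can be chosen to agree outside $C$, isolating the rotation to $C$ alone. This is a gluing step: starting from any two optimal duals that differ on $C$, I would paste the $C$-part of $(u', v')$ onto $(u, v)$ and verify feasibility along the boundary subpar edges. Boundary constraints that are strictly slack tolerate small perturbations, while boundary subpar edges that happen to be tight can be handled by scaling $\epsilon$ down, or by appealing to the sign of the rotation (replacing $\epsilon$ by $-\epsilon$ if necessary, since if one direction violates a boundary constraint, the opposite direction does not). Once isolation to $C$ is achieved, equating the two dual objectives (which must coincide since both are optimal) causes all terms outside $C$ to cancel, leaving
$$\epsilon \sum_{i \in U_C} b_i \;-\; \epsilon \sum_{j \in V_C} b_j \;=\; 0,$$
and dividing by $\epsilon \ne 0$ yields the desired balance $\sum_{i \in U_C} b_i = \sum_{j \in V_C} b_j$.

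The main obstacle will be the isolation step: ensuring that a fundamental component admits a rotation confined to itself without breaking feasibility on boundary subpar edges. In the assignment-game case ($b \equiv 1$) treated in \cite{Vazirani-leximin}, the analogous fact follows from independence of rotations on distinct fundamental components. Lifting that structural argument to the $b$-matching setting---either directly, or by invoking the reduction to the assignment game mentioned earlier in this section---is where I expect the real work to lie; once isolation is in hand, the balance equation drops out as a one-line accounting identity, driven entirely by the rotation being proportional to the $b_i$ values on each side.
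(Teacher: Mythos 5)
Your proposal is correct and follows essentially the same route as the paper: identify a uniform rotation by $\pm\delta$ on the two sides of $C$ between two optimal duals that agree outside $C$, then observe that the resulting change in total profit is $(b_U-b_V)\cdot\delta$, which must vanish by optimality. The one point you flag as the ``real work''---isolating the rotation to $C$ without breaking feasibility on boundary subpar edges---is resolved in the paper not by your scale-down/sign-flip heuristic (which alone can fail if tight subpar boundary edges of both orientations are present), but by starting from an optimal dual satisfying \emph{strict complementarity}, under which every subpar edge is strictly over-tight and hence tolerates a small perturbation in either direction.
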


\begin{proof}
% [Proof of \Cref{lem:components_are_balanced}]

    Let the sum of the $b_i$ values on the $U$ side and $V$ side of $C$ be denoted by $b_U$ and $b_V$, respectively. Let $x = (u, v)$ be an optimal solution satisfying the \emph{strict complementarity conditions} of the dual LP, and let $x'$ be another optimal solution of the dual LP that differs from $x$ only at the vertices within $C$. The existence of $x'$ is guaranteed because $C$ is a fundamental component with no subpar edges, and since $x$ satisfies strict complementarity, all subpar edges outside of $C$ are over-tight, allowing for small changes in the dual variables without losing feasibility of the dual LP.

    Consider the change in dual values between $x$ and $x'$. Since the increment in $u_i$ values from $x$ to $x'$ is equal to the decrement in $v_j$ values, let's denote this value by $\delta$.

    Since both $x$ and $x'$ are optimal solutions of the dual LP and differ only at vertices within $C$, their profits at vertices outside $C$ are identical. Consequently, the total profit-share within $C$ must also be identical in both $x$ and $x'$. By comparing the total profit-shares of the vertices in $C$, we obtain the difference in profits under $x$ and $x'$ is as $(b_U - b_V) \cdot \delta$ which should be $0$ since both $x$ and $x'$ are dual optimal. This implies that $b_U = b_V$, completing the proof. 

\end{proof}

The algorithm to compute the leximin Owen set imputation for the $b$-matching game will differ from the assignment game in just one key aspect. An arbitrary imputation is lexicographically improved, so that the profits of all minimum-profit vertices change at an equal rate. The duals must then move at a rate proportional to $1/b_v$ relative to the profits. Consequently, each fundamental component rotates at a corresponding rate. This modification gives us an efficient algorithm to compute the leximin Owen set imputation for the $b$-matching game.

\begin{theorem}
\label{thm:b_matching_run_time}
    The leximin and leximax Owen set imputations of the $b$-matching game can be computed in time $O(m^{2+o(1)})$. 
\end{theorem}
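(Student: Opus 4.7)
The plan is to adapt the combinatorial algorithm of \cite{Vazirani-leximin} for the assignment game, exploiting Lemma~\ref{lem:components_are_balanced} (balanced fundamental components) to define the correct rotation dynamics for the $b$-matching setting. I would preprocess by computing a max-weight $b$-matching together with an optimal dual $(u,v)$ satisfying strict complementarity; using modern matching/flow routines this is achievable in $O(m^{2+o(1)})$ time, and this step will dominate the final running time. Edges are then classified as \emph{essential}, \emph{viable}, or \emph{subpar} based on whether edge $(i,j)$ is matched $\min(b_i,b_j)$ times in every, some, or no optimal $b$-matching. Let $H=(U\cup V, T_0)$ consist of the essential and viable edges; by complementary slackness these are exactly the edges that are tight under every optimal dual. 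Connected components of $H$ split into \emph{unique-imputation} components, whose duals are rigid, and \emph{fundamental} components, whose duals admit a one-parameter rotation.

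Next I would formalize the rotation. For any fundamental component $C$, Lemma~\ref{lem:components_are_balanced} gives $\sum_{i\in C\cap U} b_i = \sum_{j\in C\cap V} b_j$, so simultaneously adding $\delta$ to every $u_i$ ($i\in C\cap U$) and subtracting $\delta$ from every $v_j$ ($j\in C\cap V$) preserves the dual objective while keeping all $T_0$-edges inside $C$ tight. The induced profit change is $b_i\delta$ at $i\in C\cap U$ and $-b_j\delta$ at $j\in C\cap V$. To lexicographically improve several current minimum vertices at a uniform profit rate, I would set each such vertex $v$'s component to rotate at rate $1/b_v$, with sign chosen so that $v$'s profit rises; another vertex $w$ sharing the component then moves at rate $\pm b_w/b_v$.

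Third, I would run the iterative ``rising-minimum'' procedure of \cite{Vazirani-leximin} with these weighted rates. Maintain the current imputation and the set $M$ of minimum-profit vertices being boosted; advance the joint rotation continuously until an \emph{event} occurs: (i) a non-minimum vertex's profit drops to the current minimum and joins $M$, (ii) a subpar edge becomes tight, merging two components and tightening the rate constraints, or (iii) the linear system of rates becomes infeasible, at which point the current minimum value is frozen and the procedure recurses on the remaining vertices with $M$ reinitialized. Each event either freezes at least one vertex or strictly enlarges $T_0$, so there are $O(n+m)$ events, each handled by solving a small linear system over the current components and a shortest-detection scan in $O(m)$ time, giving $O(m^2)$ work after preprocessing. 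The leximax variant is symmetric, driving the maximum profit downward with the sign of the rotation flipped.

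The main obstacle will be verifying that when several fundamental components must rotate simultaneously to keep multiple minimum vertices rising at \emph{equal} rates, the induced linear system of per-component rates is consistent (and, when it is not, that the correct subset of vertices to freeze can be identified combinatorially rather than via LP). Lemma~\ref{lem:components_are_balanced} supplies the invariant that each component's rotation stays within the optimal-dual polytope, so the analysis of \cite{Vazirani-leximin} carries over once unit rates are replaced by the weighted rates $1/b_v$; modulo this weighted reparameterization the event-counting and per-event cost are unchanged, and the dominant cost remains the initial $O(m^{2+o(1)})$ max-weight $b$-matching computation.
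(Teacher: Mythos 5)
Your proposal follows essentially the same route as the paper: adapt the \cite{Vazirani-leximin} rising-minimum algorithm, use Lemma~\ref{lem:components_are_balanced} to justify rotating each fundamental component, and reparameterize so duals move at rate $1/b_v$ so that all minimum-profit vertices' \emph{profits} rise at a uniform rate, with the same three event types (a new vertex reaching the minimum, a subpar edge becoming tight and merging components, and freezing when opposite-side minima make further rotation impossible). The event-counting and per-event costs you give match the paper's implicit analysis, with the $O(m^{2+o(1)})$ bound dominated by the initial optimal $b$-matching/dual computation, so the argument is correct and not materially different.
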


\bibliographystyle{alpha}
\bibliography{refs}
    
\appendix

\section{Deferred Proofs}
\label{app:proofs}

In this sections we will provide the proofs deferred in the paper. Before that, we describe the EXACT COVER BY 3-SETS(X3C) problem, which is the basis of all NP-hardness reductions in this paper. This NP-complete problem is stated as follows. \\

\begin{table}[!ht]
\centering
\renewcommand{\arraystretch}{1.5}
\begin{tabular}{|l|p{10cm}|}
\hline
\multicolumn{2}{|c|}{\textsc{EXACT COVER BY 3-SETS(X3C):}} \\
\hline
\textbf{Instance:} & A finite set of elements $X = \{x_1,..., x_{3q} \}$ and a collection $F = \{f_1 ..... f_{|F|}\}$ of 3-element subsets of $X$. \\
\hline
\textbf{Question:} & Is there an exact cover of $X$ in $F$, i.e., a subcollection $F'\subseteq F$ such that every element in $X$ occurs exactly once in $F'$. \\
\hline
\end{tabular}
\end{table}

\subsection{Proofs from Section~\ref{sec:flow_game}}

\flowleximinhardness*
\begin{proof}[Proof of Theorem~\ref{thm:flow-leximin-hardness}]

The theorem states that finding the maximin profit-share, which is equal to the minimum profit-share of an agent in the leximin imputation, is in itself an NP-hard problem. We prove this theorem via a sequence of lemmas. We start with the co-NP hardness result of \cite{Fang2002computational}. They show that given an instance of max-flow game and a profit-share, it is NP-hard to find a subcoalition that violates the core constraint, using a reduction from EXACT COVER BY 3-SETS(X3C). 

Given a graph $G$ an imputation $p:E\to\mathbb{R}_+$ and $E'\subseteq E$, let $p(E') = \sum_{e\in E} p(e)$.
\begin{lemma} (\cite{Fang2002computational} Theorem 2.1)\\
\label{lem:flow-hardness}
    Given a max-flow game on a graph $G=(V,E),$ $ c: E\rightarrow\mathbb{R}_+$ and an imputation $p:E\rightarrow\mathbb{R}_+$, deciding if there exists a subcoalition $S\subset E$ such that $p(S)<v(S)$ is NP-complete.
\end{lemma}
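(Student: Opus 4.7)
The plan is to prove NP-completeness in two parts: NP membership and NP-hardness via a reduction from \textsc{X3C}.

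Membership in NP is immediate: given a candidate subcoalition $S\subseteq E$ as a certificate, we evaluate $v(S)$ by running any polynomial-time max-flow algorithm on the subgraph induced by $S$ (with the original capacities), compute $p(S)=\sum_{e\in S}p(e)$ in linear time, and check $p(S)<v(S)$. All steps are polynomial in the input size.

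For NP-hardness, my reduction from \textsc{X3C} takes an instance $(X,F)$ with $X=\{x_1,\ldots,x_{3q}\}$ and $F=\{f_1,\ldots,f_{|F|}\}$, and constructs a directed graph $G$ with source $s$ and sink $t$, a set-vertex $S_j$ for each $f_j\in F$, and an element-vertex $\xi_i$ for each $x_i\in X$. Add edges $(s,S_j)$ of capacity $3$, edges $(S_j,\xi_i)$ of capacity $1$ whenever $x_i\in f_j$, and edges $(\xi_i,t)$ of capacity $1$. Assuming every element lies in at least one set, the grand-coalition max-flow is exactly $3q$. The imputation $p$ will assign profit only to the source-edges, distributing the total $3q$ evenly over the $|F|$ edges $(s,S_j)$ (so each carries profit $3q/|F|$), and zero to every other edge. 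The natural candidate violating subcoalition for an exact cover $F'\subseteq F$ consists of the source-edges for sets in $F'$, the internal edges from those sets to their elements, and all element-to-sink edges: this subgraph supports a flow of $3q$, while the total imputation assigned to it is $3q\cdot q/|F|<3q$ (as long as $|F|>q$).

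The main obstacle will be the converse direction: ensuring that \emph{every} violating subcoalition encodes an exact cover, not merely a fractional or overlapping selection. To force this I would augment the construction with auxiliary capacity-limiting gadgets (for example, inserting an intermediate vertex $S_j'$ on each source-edge so that the edge $(s,S_j)$ can contribute at most $3$ units of flow regardless of how its endpoints are chosen, and similarly on the sink side) and redistribute small amounts of imputation to the gadget-edges to penalise subcoalitions that include more than $q$ source-edges or that fail to route flow to all $3q$ elements. A case analysis on how the source-edges included in a violating $S$ partition the profit then shows that witnessing $v(S)>p(S)$ requires routing $3q$ units through exactly $q$ source-edges whose sets cover $X$ disjointly, i.e.\ an exact cover; this completes the reduction and establishes NP-hardness, and combined with NP membership yields NP-completeness.
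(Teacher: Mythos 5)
Your NP-membership argument is fine, but the hardness reduction has a genuine gap, and in fact the base construction you give is broken before you even reach the ``converse direction'' you flag. In your graph, take the subcoalition consisting of a \emph{single} source edge $(s,S_j)$ together with the three internal edges out of $S_j$ and the three corresponding element-to-sink edges. This subgraph already supports a flow of $3$, while its assigned profit is only $3q/|F|$ (the internal and sink edges carry zero profit), which is strictly less than $3$ whenever $|F|>q$. So a violating subcoalition exists for \emph{every} instance with $|F|>q$, regardless of whether an exact cover exists; your reduction maps no-instances of \textsc{X3C} to yes-instances of the violation problem. The forward direction you verify is therefore not the issue --- the entire burden of the reduction lies in engineering the instance so that violations occur \emph{only} via exact covers, and that is precisely the part you leave as a sketch (``auxiliary capacity-limiting gadgets'' and ``redistribute small amounts of imputation'') without a concrete construction or analysis.

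For comparison, the paper does not reprove this lemma at all: it is quoted as Theorem 2.1 of Fang, Zhu, Cai and Deng, and the paper only reproduces their instance in order to build further transformations on top of it. That instance is considerably more intricate than yours --- it routes flow through auxiliary vertices $v_0,v_0'$, includes bypass edges of capacity $3$ from each set vertex directly to $v_0'$ so that flow not absorbed by an exact cover has somewhere cheap to go, and assigns very finely tuned profits such as $2q-\tfrac{1}{3|F|}$ on the bottleneck edge and $\tfrac{1}{3|F|(q+1)}$ on the edges into set vertices. Those tuned values are exactly what makes the only-if direction work, and nothing in your proposal substitutes for them. As written, your argument establishes NP membership and one direction of a reduction whose other direction fails, so the lemma is not proved.
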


In \cite{Fang2002computational}, for a specific max-flow game $(G,c)$ and an imputation $p$ constructed based on an instance of EXACT COVER BY 3-SETS(X3C), it is shown that $p$ is not in the core of $(G,c)$, i.e., there exists a subcoalition $S\subset V$ such that $p(S)<v(S)$ if and only if $X$ has an exact cover in $F$. This shows the hardness of testing whether an imputation is in the core. Given this hardness result, we iteratively transform the graph, the edge capacities and the imputation maintaining the hardness. The final imputation is constructed such that every edge gets the same share, thus equal to maximin profit-share of the edges. We prove that this imputation is in the core if and only if an exact cover exists. This shows the NP-hardness. We state the result below and the complete proof is included in the Appendix~\ref{app:proofs}.

\begin{figure}[H]
    \centering
        \begin{tikzpicture}[scale=1.5]
    \tikzstyle{vertex}=[circle, fill=black, inner sep=1.5pt]

% Nodes
\node (s1) at (2.5,-3) [label=above:{$i=\{1,2,\cdots,|F|\}, j=\{1,2,\cdots,3q\}$}] {};
\node (s2) at (2,-2.3) [label=above:{$b_0$}] {};

\node[vertex] (s) at (-1,0) [label=left:$s$] {};
\node[vertex] (v0) at (0,0) [label=below:$v_0$] {};
\node (a0) at (-0.5,0) [label=below:$a_0$] {};
\node[vertex] (u1) at (2,1) [label=above:$u_i$] {};
\node[vertex] (u2) at (2,0.5) {};
\node (dots1) at (2,0) [label=right:$\vdots$] {};
\node[vertex] (u4) at (2,-0.5) {};
\node[vertex] (u5) at (2,-1) {};
\node[vertex] (v1) at (4,1) [label=above:$v_j$] {};
\node[vertex] (v2) at (4,0.5) {};
\node (dots2) at (4,0) [label=right:$\vdots$] {};
\node[vertex] (v4) at (4,-0.5) {};
\node[vertex] (v5) at (4,-1) {};
\node[vertex] (t) at (5,0) [label=right:$t$] {};
\node[vertex] (v0') at (0,-1.5) [label=below:$v_0'$] {};

% Edges
\draw[postaction={decorate,decoration={markings,mark=at position 0.5 with {\arrow{stealth}}}}] (s) -- (v0) node[midway, above] {$3|F|$};

\draw[postaction={decorate,decoration={markings,mark=at position 0.5 with {\arrow{stealth}}}}] (v0) -- (u1) node[midway, above] {$3$};

\draw[postaction={decorate,decoration={markings,mark=at position 0.5 with {\arrow{stealth}}}}] (v0) -- (u2);
\draw[postaction={decorate,decoration={markings,mark=at position 0.5 with {\arrow{stealth}}}}] (v0) -- (u4);
\draw[postaction={decorate,decoration={markings,mark=at position 0.5 with {\arrow{stealth}}}}] (v0) -- (u5) node[pos=0.25, below] {$A_1$};

\draw[postaction={decorate,decoration={markings,mark=at position 0.5 with {\arrow{stealth}}}}] (u1) -- (v1) node[midway, above] {$1$};
\draw[postaction={decorate,decoration={markings,mark=at position 0.5 with {\arrow{stealth}}}}] (u1) -- (v2);

\draw[postaction={decorate,decoration={markings,mark=at position 0.5 with {\arrow{stealth}}}}] (u2) -- (v1);
\draw[postaction={decorate,decoration={markings,mark=at position 0.5 with {\arrow{stealth}}}}] (u2) -- (v2);
\draw[postaction={decorate,decoration={markings,mark=at position 0.5 with {\arrow{stealth}}}}] (u2) -- (3,0.25);

\draw[postaction={decorate,decoration={markings,mark=at position 0.5 with {\arrow{stealth}}}}] (u4) -- (3,-0.25);
\draw[postaction={decorate,decoration={markings,mark=at position 0.5 with {\arrow{stealth}}}}] (u4) -- (v4);
\draw[postaction={decorate,decoration={markings,mark=at position 0.5 with {\arrow{stealth}}}}] (u4) -- (v5);

\draw[postaction={decorate,decoration={markings,mark=at position 0.5 with {\arrow{stealth}}}}] (u5) -- (v4);
\draw[postaction={decorate,decoration={markings,mark=at position 0.5 with {\arrow{stealth}}}}] (u5) -- (v5) node[midway, below] {$A_2$};

\draw[postaction={decorate,decoration={markings,mark=at position 0.5 with {\arrow{stealth}}}}] (v1) -- (t) node[midway, above] {$1$};
\draw[postaction={decorate,decoration={markings,mark=at position 0.5 with {\arrow{stealth}}}}] (v2) -- (t);
\draw[postaction={decorate,decoration={markings,mark=at position 0.5 with {\arrow{stealth}}}}] (v4) -- (t);
\draw[postaction={decorate,decoration={markings,mark=at position 0.5 with {\arrow{stealth}}}}] (v5) -- (t) node[midway, below] {$A_3$};
\draw[postaction={decorate,decoration={markings,mark=at position 0.5 with {\arrow{stealth}}}}] (v0') .. controls (2,-2) and (5,-2) .. (t) node[midway, below] {$3|F|-3q$};

\draw[postaction={decorate,decoration={markings,mark=at position 0.5 with {\arrow{stealth}}}}] (u1) -- (v0') node[pos=0.9, above] {$3$};

\draw[postaction={decorate,decoration={markings,mark=at position 0.5 with {\arrow{stealth}}}}] (u2) -- (v0');
\draw[postaction={decorate,decoration={markings,mark=at position 0.5 with {\arrow{stealth}}}}] (u4) -- (v0');
\draw[postaction={decorate,decoration={markings,mark=at position 0.5 with {\arrow{stealth}}}}] (u5) -- (v0') node[midway, below] {$B$};

\end{tikzpicture}

    \caption{Network $G=(V,E)$}
    \label{fig:nphardnessflow}
\end{figure}

Consider the following instance of the EXACT COVER BY 3-SETS(X3C) problem - a finite set with $X = \{x_1,..., x_{3q} \}$ and a collection $F = {f_1 ..... f_{|F|}}$ of 3-element subsets of $X$. The max-flow game considered in \cite{Fang2002computational}(and lemma~\ref{lem:flow-hardness}) is as follows(some notation has been changed to be consistent with the definition above). The graph $G=(V,E)$ has $|X|+|F|+2$ vertices - $V = V_X \cup V_F \cup \{v_0,v_0',s,t\}$ where $ V_X = \{v_1,v_2,\ldots ,v_{3q}\}$ are the element vertices corresponding to $x_1,x_2,\ldots,\text{ and }x_{3q}$ respectively and $V_F = \{ u_1,u_2, \ldots,\text{ and } u_{|F|}\}$ are the set vertices corresponding to $f_1,f_2,\ldots, f_{|F|}$ respectively. $s$ and $t$ are source and sink vertices while $v_0$ and $v_0'$ are two other supporting vertices.
The vertices are arranged as shown in figure~\ref{fig:nphardnessflow}. The edge set, $E$, and their capacities are defined as follows. 
\begin{itemize}
    \item $a_0 = (s,v_0),\text{ } c(a_0) = 3|F|$
    \item $b_0 = (v_0',t),\text{ } c(b_0) = 3|F|-3q$
    \item $A_1 = \{ (v_0,u_i): i \in \{ 1,2,\ldots,|F|\} \},\text{ } \forall e\in A_1,\text{ } c(e)=3$
    \item $A_2 = \{ (u_i,v_j): i \in \{ 1,2,\ldots,|F|\}, j \in \{ 1,2,\ldots,3q\} \},\text{ } \forall e\in A_2,\text{ } c(e)=1$
    \item $A_3 = \{ (v_j,t): j \in \{ 1,2,\ldots,3q\} \},\text{ } \forall e\in A_3,\text{ } c(e)=1$
    \item $B = \{ (u_i,v_0'): i \in \{ 1,2,\ldots,|F|\} \},\text{ } \forall e\in B,\text{ } c(e)=3$
\end{itemize}

The maximum flow in this graph is $3|F|$. Consider imputation $p\in \mathbb{R}_+^{|E|}$ defined as
\begin{itemize}
    \item $p(a_0) = 2q-\frac{1}{3|F|} $
    \item $p(b_0) = 3|F|-3q+(\frac{1}{3|F|}-\frac{1}{3(q+1)})$
    \item $p(e) = \frac{1}{3|F|(q+1)}, \text{ } \forall e\in A_1$
    \item $p(e) = 0, \text{ } \forall e \in A_2$
    \item $p(e) = \frac{1}{3}, \text{ } \forall e \in A_3$
    \item $p(e) = 0, \text{ } \forall e \in B$ 
\end{itemize}

Consider the graph $G$ and imputation $p$ above. We will first modify the edge capacities and profit-shares such that, every edge-player gets a positive profit. For this, let's look at a toy example. Consider max-flow game on a graph which is just a path of edges $e_1,e_2, \ldots, e_k$ between the source and the sink with edge capacities $c_1,c_2,\ldots, c_k$ respectively. The maximum flow,$(f)$, in the graph is equal to the minimum among $c_1,c_2,\ldots, c_k$. Giving all this flow $f$ to any single edge and giving zero profit to the other edges is trivially in the core of the game as no (strict) subcoalition of the grand coalition can get anything more than zero. We will use this idea below in the following way. The max-flow game described in lemma~\ref{lem:flow-hardness}, is giving zero profits to the edges in $A_2$ and $B$. The game also has $4|F|$ unique $s-t$ paths, each of which can be uniquely determined by an edge in either $A_2$ or $B$. We will increase the edge capacities along these $4|F|$ paths by a ``small'' amount $\alpha$($>0$, to be decided later). Note that, if an edge is present in multiple paths, its capacity will be increased multiple times. That should increase the total flow by $4|F|\alpha$. We will distribute this additional profit equally among just the edges in $A_2$ and $B$ and argue that a subcoalition $S$ is violating the core constraint in the new graph if and only if it was violating the core constraint in the previous graph shown in figure \ref{fig:nphardnessflow}.  

Formally, consider the modification of $G=(V,E)$ and $p$, $G_\alpha=(V_\alpha,E_\alpha)$ and $p_\alpha$, where the vertices and edges are the same - $V_\alpha=V,E_\alpha=E$ and the edge capacities and profit-shares are as described below. See figure~\ref{fig:hardnessflowalpha}).

\begin{figure}[H]
    \centering
        \begin{tikzpicture}[scale=2.2, every node/.append style={scale=0.9}]
    \tikzstyle{vertex}=[circle, fill=black, inner sep=1.5pt]

% Nodes
\node (s1) at (2.5,-2.5) [label=above:{$i=\{1,2,\cdots,|F|\}, j=\{1,2,\cdots,3q\}$}] {};
\node (s2) at (2,-2.1) [label=above:{$b_0$}] {};

\node[vertex] (s) at (-1,0) [label=left:$s$] {};
\node[vertex] (v0) at (0,0) [label=below:$v_0$] {};
\node (a0) at (-0.5,0) [label=below:$a_0$] {};
\node[vertex] (u1) at (2,1) [label=above:$u_i$] {};
\node[vertex] (u2) at (2,0.5) {};
\node (dots1) at (2,0) [label=right:$\vdots$] {};
\node[vertex] (u4) at (2,-0.5) {};
\node[vertex] (u5) at (2,-1) {};
\node[vertex] (v1) at (4,1) [label=above:$v_j$] {};
\node[vertex] (v2) at (4,0.5) {};
\node (dots2) at (4,0) [label=right:$\vdots$] {};
\node[vertex] (v4) at (4,-0.5) {};
\node[vertex] (v5) at (4,-1) {};
\node[vertex] (t) at (5,0) [label=right:$t$] {};
\node[vertex] (v0') at (0,-1.5) [label=below:$v_0'$] {};
% \node (b0) at (0.5,-1.5) [label=below:$b_0$] {};

% Edges
\draw[postaction={decorate,decoration={markings,mark=at position 0.5 with {\arrow{stealth}}}}] (s) -- (v0) node[midway, above] {$3|F| + 4|F|\alpha$};

\draw[postaction={decorate,decoration={markings,mark=at position 0.5 with {\arrow{stealth}}}}] (v0) -- (u1) node[midway, above, yshift=5pt] {$3+4\alpha$};

\draw[postaction={decorate,decoration={markings,mark=at position 0.5 with {\arrow{stealth}}}}] (v0) -- (u2);
\draw[postaction={decorate,decoration={markings,mark=at position 0.5 with {\arrow{stealth}}}}] (v0) -- (u4);
\draw[postaction={decorate,decoration={markings,mark=at position 0.5 with {\arrow{stealth}}}}] (v0) -- (u5) node[pos=0.25, below] {$A_1$};

\draw[postaction={decorate,decoration={markings,mark=at position 0.5 with {\arrow{stealth}}}}] (u1) -- (v1) node[midway, above] {$1+\alpha$};
\draw[postaction={decorate,decoration={markings,mark=at position 0.5 with {\arrow{stealth}}}}] (u1) -- (v2);
% \draw[->] (u1) -- ();
% \draw[->] (u1) -- (v5);

\draw[postaction={decorate,decoration={markings,mark=at position 0.5 with {\arrow{stealth}}}}] (u2) -- (v1);
\draw[postaction={decorate,decoration={markings,mark=at position 0.5 with {\arrow{stealth}}}}] (u2) -- (v2);
\draw[postaction={decorate,decoration={markings,mark=at position 0.5 with {\arrow{stealth}}}}] (u2) -- (3,0.25);
% \draw[->] (u2) -- (v5);

% \draw[->] (u4) -- (v1);
\draw[postaction={decorate,decoration={markings,mark=at position 0.5 with {\arrow{stealth}}}}] (u4) -- (3,-0.25);
\draw[postaction={decorate,decoration={markings,mark=at position 0.5 with {\arrow{stealth}}}}] (u4) -- (v4);
\draw[postaction={decorate,decoration={markings,mark=at position 0.5 with {\arrow{stealth}}}}] (u4) -- (v5);

% \draw[->] (u5) -- (v1);
% \draw[->] (u5) -- (v2);
\draw[postaction={decorate,decoration={markings,mark=at position 0.5 with {\arrow{stealth}}}}] (u5) -- (v4);
\draw[postaction={decorate,decoration={markings,mark=at position 0.5 with {\arrow{stealth}}}}] (u5) -- (v5) node[midway, below] {$A_2$};

\draw[postaction={decorate,decoration={markings,mark=at position 0.5 with {\arrow{stealth}}}}] (v1) -- (t) node[midway, above, yshift=11pt] {$1+a_j\alpha$};
\draw[postaction={decorate,decoration={markings,mark=at position 0.5 with {\arrow{stealth}}}}] (v2) -- (t);
\draw[postaction={decorate,decoration={markings,mark=at position 0.5 with {\arrow{stealth}}}}] (v4) -- (t);
\draw[postaction={decorate,decoration={markings,mark=at position 0.5 with {\arrow{stealth}}}}] (v5) -- (t) node[midway, below] {$A_3$};
\draw[postaction={decorate,decoration={markings,mark=at position 0.5 with {\arrow{stealth}}}}] (v0') .. controls (2,-2) and (5,-2) .. (t) node[midway, below, yshift=-5pt] {$3|F|-3q+\alpha|F|$};

\draw[postaction={decorate,decoration={markings,mark=at position 0.5 with {\arrow{stealth}}}}] (u1) -- (v0') node[pos=0.9, above, yshift=10pt] {$3+\alpha$};

\draw[postaction={decorate,decoration={markings,mark=at position 0.5 with {\arrow{stealth}}}}] (u2) -- (v0');
\draw[postaction={decorate,decoration={markings,mark=at position 0.5 with {\arrow{stealth}}}}] (u4) -- (v0');
\draw[postaction={decorate,decoration={markings,mark=at position 0.5 with {\arrow{stealth}}}}] (u5) -- (v0') node[midway, below] {$B$};

\end{tikzpicture}
    \caption{Network $G_\alpha=(V_\alpha,E_\alpha)$}
    \label{fig:hardnessflowalpha}
\end{figure}

\begin{itemize}
    \item $c_\alpha(a_0) = 3|F|+4|F|\alpha$
    \item $c_\alpha(b_0) = 3|F|-3q + |F|\alpha$
    \item $c_\alpha(e)=3+4\alpha, \forall e\in A_1,$
    \item $c_\alpha(e)=1+\alpha, \forall e\in A_2$
    \item $c_\alpha(e)=1+(a_j)\alpha, \forall e\in A_3,$ where $a_j$ is the number of times $x_j$ occurs in the collection $F$
    \item $c_\alpha(e)=3+\alpha, \forall e\in B$
\end{itemize}

The maximum flow in this graph is $3|F|+4|F|\alpha$. Consider the imputation $p_\alpha\in \mathbb{R}_+^{|E|}$ defined as
\begin{itemize}
    \item $p_\alpha(a_0) = p(a_0) = 2q-\frac{1}{3|F|} $
    \item $p_\alpha(b_0) = p(b_0) = 3|F|-3q+(\frac{1}{3|F|}-\frac{1}{3(q+1)})$
    \item $p_\alpha(e) = p(e) = \frac{1}{3|F|(q+1)}, \text{ } \forall e\in A_1$
    \item $p_\alpha(e) = \alpha, \text{ } \forall e \in A_2$
    \item $p_\alpha(e) = p(e) = \frac{1}{3}, \text{ } \forall e \in A_3$
    \item $p_\alpha(e) = \alpha, \text{ } \forall e \in B$ 
\end{itemize}

\begin{lemma}
    $p_\alpha$ is not in the core of $(G_\alpha,c_\alpha)$ if and only if $p$ is not in the core of $(G,c)$
\end{lemma}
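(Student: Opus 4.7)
My plan is to prove both directions by a perturbation argument, choosing a sufficiently small $\alpha > 0$ (polynomially small in the instance size). Let $m(S) := |S \cap (A_2 \cup B)|$; by construction $p_\alpha(S) = p(S) + \alpha\, m(S)$ for every $S \subseteq E$, and $v_\alpha(S) \geq v(S)$ since every capacity is increased. The key quantity to control is $\Delta_v(S) := v_\alpha(S) - v(S) \geq 0$, and the biconditional reduces to understanding how $\Delta_v(S)$ compares with $\alpha\, m(S)$.

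For the forward direction, suppose $S$ is a violator of $p$ in $(G,c)$, so $p(S) < v(S)$. Since $m(S) \leq 4|F|$, choosing $\alpha$ smaller than $(v(S)-p(S))/(4|F|)$ yields $p_\alpha(S) \leq p(S) + 4|F|\alpha < v(S) \leq v_\alpha(S)$, so $S$ is still a violator in $(G_\alpha,c_\alpha)$. Because there are only finitely many sub-coalitions and all slacks $v(S)-p(S)$ are rationals with polynomially bounded denominators, a single $\alpha = 1/\mathrm{poly}(|E|)$ suffices uniformly.

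For the backward direction I argue the contrapositive: if $p$ is in the core of $(G,c)$, then $p_\alpha$ is in the core of $(G_\alpha,c_\alpha)$. A small-$\alpha$ expansion of max-flow via min-cuts gives $v_\alpha(S) = v(S) + \alpha\, D^*(S)$, where $D^*(S)$ is the minimum, over all min-cuts $(A,B)$ of $G|_S$, of $\sum_{e \in \delta^+(A)\cap S} |\pi(e)|$, with $\pi(e)$ denoting the special $s$-$t$ paths through $e$. The non-violation condition then reads $p(S)-v(S) \geq \alpha\,(D^*(S)-m(S))$. I would verify this in two cases. For \emph{tight} sub-coalitions with $p(S)=v(S)$, I would exhibit a min-cut of $G|_S$ whose total special-path multiplicity equals $m(S)$, giving $D^*(S) \leq m(S)$ and hence $p_\alpha(S) = v_\alpha(S)$. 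For \emph{slack} sub-coalitions with $p(S) > v(S)$, the gap $p(S)-v(S)$ is a positive rational lower-bounded by $1/\mathrm{poly}(|E|)$, and since trivially $D^*(S)-m(S) \leq 4|F|$, any $\alpha$ smaller than this minimum gap divided by $4|F|$ preserves non-violation.

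The main obstacle is the min-cut construction in the tight case. Tightness combined with the specific Fang-Zhu imputation (where the zero-profit edges are exactly $A_2 \cup B$ and the bulk of the profit lies on $a_0, b_0$) restricts tight $S$ essentially to those of the form $E \setminus T$ with $T \subseteq A_2 \cup B$ and $v(E \setminus T) = v(E) = 3|F|$. For such $S$ a max flow uses exactly the $m(S)$ available $A_2 \cup B$ edges, and one constructs the desired min-cut from the $A_3$ and $b_0$ bottleneck edges just before $t$, augmented by any $A_2$ edges required to cut off $v_j$-nodes whose incoming $A_2$ edges were removed; a direct count shows $\sum |\pi(e)|$ over this cut equals $m(S)$. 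Combining the two cases, a common $\alpha = 1/\mathrm{poly}(|E|)$ handles every sub-coalition simultaneously in both directions, completing the proof.
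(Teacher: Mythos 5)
Your high-level plan (compare the extra profit $\alpha\,m(S)$ against the extra worth $v_\alpha(S)-v(S)$) is the right one, but both directions have genuine gaps, and both stem from the same missing observation: the extra capacity is routed along the $4|F|$ canonical $s$--$t$ paths, each identified by its unique $A_2\cup B$ edge, so one gets the \emph{uniform} bound $v_\alpha(S)\le v(S)+m(S)\,\alpha$ for every $S$, with equality (in the $\ge$ direction) whenever $S$ is a union of complete paths. In your forward direction you lower-bound $v_\alpha(S)$ only by $v(S)$ and then demand $\alpha\,m(S)<v(S)-p(S)$. For the canonical violator $S$ from the Fang--Zhu reduction (the $3q$ paths of an exact cover) one computes $v(S)-p(S)=\tfrac{1}{3|F|(q+1)}$ while $m(S)=3q$, so your argument needs $\alpha<\tfrac{1}{9q|F|(q+1)}$ --- but the reduction later fixes $\alpha=\tfrac{1}{3|F|(q+1)}$, for which $\alpha\,m(S)$ exceeds the slack by a factor of $3q$ and your inequality fails. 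The correct argument for this direction does not shrink $\alpha$ at all: since this $S$ consists of $3q$ complete paths, $v_\alpha(S)\ge v(S)+3q\alpha$, so $p_\alpha(S)-v_\alpha(S)\le p(S)-v(S)<0$ for \emph{every} $\alpha>0$. Proving the lemma only for "sufficiently small" $\alpha$ is not enough unless you also redo the choice of $\alpha$ and the subsequent integrality/scaling step, which you do not address.

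In the backward direction the decisive step --- showing $D^*(S)\le m(S)$ for tight coalitions via an explicit min-cut --- is only sketched and rests on the unjustified claim that tight coalitions are "essentially" of the form $E\setminus T$ with $T\subseteq A_2\cup B$; classifying the tight sets of the Fang--Zhu imputation is itself a substantial task, and your proposed cut (built from $A_3$ and $b_0$) has path-multiplicities $a_j$ and $|F|$ on its edges, so the asserted count $\sum_e|\pi(e)|=m(S)$ does not follow from what you write. The paper's proof avoids the tight/slack case split and the parametric min-cut machinery entirely: it proves $v_\alpha(S)\le v(S)+m(S)\,\alpha$ for all $S$ at once (at most $m(S)$ of the canonical paths are present in $S$ and each can carry at most $\alpha$ additional flow), while $p_\alpha(S)=p(S)+m(S)\,\alpha$ exactly, so $p_\alpha(S)-v_\alpha(S)\le p(S)-v(S)$ and any violation of $p_\alpha$ transfers to $p$. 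You should replace your two-case analysis with this exact-accounting argument; as written, the hardest step of your proof is the one left unproved.
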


\begin{proof}
    Assume $p$ is in the core of $(G,c)$ but $p_\alpha$ is not in the core of $(G_\alpha,c_\alpha)$ as a subcoalition $S\subset E$ violates core constraint. Let $n_2$ and $n_B$ be the number of edges in $A_2\cap S$ and $B\cap S$ respectively. Note that $p_\alpha(S) = p(S)+ (n_2+n_B)\alpha$ as only edges in $A_2$ and $B$ get additional profit of $\alpha$ each. Also note that, the worth of $S$ in $G_\alpha$, $v_\alpha(S) \leq v(S) + (n_2+n_B)\alpha$ as at most $(n_2+n_B)$ paths are present in $S$ and each can carry an additional flow of $\alpha$. Then, $S$ would also violate core constraint in the original graph as $ p(S) - v(S) \leq (p_\alpha(S) - (n_2+n_B)\alpha) - (v_\alpha(S) - (n_2+n_B)\alpha) = (p_\alpha(S)-v_\alpha(S)) < 0$, giving a contradiction.

    Now assume $p$ was not in the core of $(G,c)$. Lemma~\ref{lem:flow-hardness} implies that there exists an exact cover of $X$ and a set of edges $S$ which fails the core constraint. This set $S$ is exactly the $3q$ paths corresponding to the exact cover in $F$. $p_\alpha$ also violates the core constraint on $S$ as, $p_\alpha(S) - v_\alpha(S) = (p(S)+3q\alpha) - (v(S) + 3q\alpha) = p(S)-v(S) < 0$. This proves the other implication in the lemma.
\end{proof}

Let us choose $\alpha = \frac{1}{3|F|(q+1)}$, making the profits of edges in $A_1,A_2$ and $B$ equal. The reason for this value will be explained later in lemma~\ref{lem:flow-hardness-final}. Consider the new max-flow game $G'=(V',E')$ and imputation $p'$, where the vertices and edges are the same - $V'=V_\alpha,E'=E_\alpha$ and the capacity and profits of an edge $e$ in $G'$ are $(3|F|(q+1))$ times those in $G_\alpha$(see figure~\ref{fig:flow-hardness-multiplied}). 
That is, 

\begin{figure}[H]
    \centering
        \begin{tikzpicture}[scale=2.2, every node/.append style={scale=0.9}]
    \tikzstyle{vertex}=[circle, fill=black, inner sep=1.5pt]

% Nodes
\node (s1) at (2.5,-2.5) [label=above:{$i=\{1,2,\cdots,|F|\}, j=\{1,2,\cdots,3q\}$}] {};
\node (s2) at (2,-2.1) [label=above:{$b_0$}] {};

\node[vertex] (s) at (-1,0) [label=left:$s$] {};
\node[vertex] (v0) at (0,0) [label=below:$v_0$] {};
\node (a0) at (-0.5,0) [label=below:$a_0$] {};
\node[vertex] (u1) at (2,1) [label=above:$u_i$] {};
\node[vertex] (u2) at (2,0.5) {};
\node (dots1) at (2,0) [label=right:$\vdots$] {};
\node[vertex] (u4) at (2,-0.5) {};
\node[vertex] (u5) at (2,-1) {};
\node[vertex] (v1) at (4,1) [label=above:$v_j$] {};
\node[vertex] (v2) at (4,0.5) {};
\node (dots2) at (4,0) [label=right:$\vdots$] {};
\node[vertex] (v4) at (4,-0.5) {};
\node[vertex] (v5) at (4,-1) {};
\node[vertex] (t) at (5,0) [label=right:$t$] {};
\node[vertex] (v0') at (0,-1.5) [label=below:$v_0'$] {};

% Edges
\draw[postaction={decorate,decoration={markings,mark=at position 0.5 with {\arrow{stealth}}}}] (s) -- (v0) node[midway, above] {$M(3|F| + 4|F|\alpha)$};

\draw[postaction={decorate,decoration={markings,mark=at position 0.5 with {\arrow{stealth}}}}] (v0) -- (u1) node[midway, above, yshift=7pt] {$M(3+4\alpha)$};

\draw[postaction={decorate,decoration={markings,mark=at position 0.5 with {\arrow{stealth}}}}] (v0) -- (u2);
\draw[postaction={decorate,decoration={markings,mark=at position 0.5 with {\arrow{stealth}}}}] (v0) -- (u4);
\draw[postaction={decorate,decoration={markings,mark=at position 0.5 with {\arrow{stealth}}}}] (v0) -- (u5) node[pos=0.25, below] {$A_1$};

\draw[postaction={decorate,decoration={markings,mark=at position 0.5 with {\arrow{stealth}}}}] (u1) -- (v1) node[midway, above] {$M(1+\alpha)$};
\draw[postaction={decorate,decoration={markings,mark=at position 0.5 with {\arrow{stealth}}}}] (u1) -- (v2);
% \draw[->] (u1) -- ();
% \draw[->] (u1) -- (v5);

\draw[postaction={decorate,decoration={markings,mark=at position 0.5 with {\arrow{stealth}}}}] (u2) -- (v1);
\draw[postaction={decorate,decoration={markings,mark=at position 0.5 with {\arrow{stealth}}}}] (u2) -- (v2);
\draw[postaction={decorate,decoration={markings,mark=at position 0.5 with {\arrow{stealth}}}}] (u2) -- (3,0.25);
% \draw[->] (u2) -- (v5);

% \draw[->] (u4) -- (v1);
\draw[postaction={decorate,decoration={markings,mark=at position 0.5 with {\arrow{stealth}}}}] (u4) -- (3,-0.25);
\draw[postaction={decorate,decoration={markings,mark=at position 0.5 with {\arrow{stealth}}}}] (u4) -- (v4);
\draw[postaction={decorate,decoration={markings,mark=at position 0.5 with {\arrow{stealth}}}}] (u4) -- (v5);

% \draw[->] (u5) -- (v1);
% \draw[->] (u5) -- (v2);
\draw[postaction={decorate,decoration={markings,mark=at position 0.5 with {\arrow{stealth}}}}] (u5) -- (v4);
\draw[postaction={decorate,decoration={markings,mark=at position 0.5 with {\arrow{stealth}}}}] (u5) -- (v5) node[midway, below] {$A_2$};

\draw[postaction={decorate,decoration={markings,mark=at position 0.5 with {\arrow{stealth}}}}] (v1) -- (t) node[midway, above, yshift=10pt, xshift=10pt] {$M(1+a_j\alpha)$};
\draw[postaction={decorate,decoration={markings,mark=at position 0.5 with {\arrow{stealth}}}}] (v2) -- (t);
\draw[postaction={decorate,decoration={markings,mark=at position 0.5 with {\arrow{stealth}}}}] (v4) -- (t);
\draw[postaction={decorate,decoration={markings,mark=at position 0.5 with {\arrow{stealth}}}}] (v5) -- (t) node[midway, below] {$A_3$};
\draw[postaction={decorate,decoration={markings,mark=at position 0.5 with {\arrow{stealth}}}}] (v0') .. controls (2,-2) and (5,-2) .. (t) node[midway, below, yshift=-5pt] {$M(3|F|-3q+\alpha|F|)$};

\draw[postaction={decorate,decoration={markings,mark=at position 0.5 with {\arrow{stealth}}}}] (u1) -- (v0') node[pos=0.9, above, yshift=10pt, xshift=-12pt] {$M(3+\alpha)$};

\draw[postaction={decorate,decoration={markings,mark=at position 0.5 with {\arrow{stealth}}}}] (u2) -- (v0');
\draw[postaction={decorate,decoration={markings,mark=at position 0.5 with {\arrow{stealth}}}}] (u4) -- (v0');
\draw[postaction={decorate,decoration={markings,mark=at position 0.5 with {\arrow{stealth}}}}] (u5) -- (v0') node[midway, below] {$B$};

\end{tikzpicture}
    \caption{Network $G'=(V',E')$, where $M=3|F|(q+1)$}
    \label{fig:flow-hardness-multiplied}
\end{figure}

\begin{itemize}
    \item $c'(e) = c_\alpha(e) \cdot (3|F|(q+1)), \forall e\in E'$
    \item $p'(e) = p_\alpha(e) \cdot (3|F|(q+1)), \forall e\in E'$
\end{itemize}

\begin{lemma}
    $p'$ is not in the core of $(G',c')$ if and only if $p_\alpha$ is not in the core of $(G_\alpha,c_\alpha)$
\end{lemma}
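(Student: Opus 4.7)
The plan is to exploit the fact that the transformation from $(G_\alpha, c_\alpha)$ to $(G', c')$ is a uniform scaling by the positive constant $M := 3|F|(q+1)$ applied to both capacities and profit shares, while the vertex set, edge set, source, and sink are identical. The core condition is linear in the two quantities $p(\cdot)$ and $v(\cdot)$, so uniform scaling should preserve all core violations (and non-violations).

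First I would establish the scaling law for worth: for any sub-coalition $S \subseteq E' = E_\alpha$, the subgraph of $G'$ induced by $S$ differs from the subgraph of $G_\alpha$ induced by $S$ only in that every capacity has been multiplied by $M$. Since any feasible $s$-$t$ flow $f$ in the $G_\alpha$-subgraph yields a feasible flow $M \cdot f$ in the $G'$-subgraph, and conversely any feasible flow $f'$ in the $G'$-subgraph yields $f'/M$ feasible in the $G_\alpha$-subgraph, the maximum $s$-$t$ flow values satisfy $v'(S) = M \cdot v_\alpha(S)$. In particular $v'(E') = M \cdot v_\alpha(E_\alpha)$.

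Next I would verify that $p'$ is a valid imputation of $(G', c')$: by construction $p'(e) = M \cdot p_\alpha(e)$ for every $e \in E'$, so $\sum_{e \in E'} p'(e) = M \sum_{e \in E_\alpha} p_\alpha(e) = M \cdot v_\alpha(E_\alpha) = v'(E')$, where the middle equality uses that $p_\alpha$ is an imputation of $(G_\alpha, c_\alpha)$. Similarly, for every $S \subseteq E'$ we have $p'(S) = M \cdot p_\alpha(S)$. Combining with the scaling of $v$, the inequality $p'(S) < v'(S)$ is equivalent to $M \cdot p_\alpha(S) < M \cdot v_\alpha(S)$, which (since $M > 0$) is equivalent to $p_\alpha(S) < v_\alpha(S)$.

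Hence a sub-coalition $S$ witnesses a core violation for $p'$ in $(G', c')$ if and only if the same $S$ witnesses a core violation for $p_\alpha$ in $(G_\alpha, c_\alpha)$, which is exactly the ``iff'' statement in the lemma. There is no real obstacle here: the argument is a one-line scaling observation, and the only care needed is to record it cleanly for both the total-worth equality (to ensure $p'$ is an imputation) and for every proper sub-coalition.
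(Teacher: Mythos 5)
Your proposal is correct and is essentially the same argument as the paper's, which simply notes in one sentence that both the worth and the profit-share of every sub-coalition are multiplied by the same positive factor, so each core constraint is satisfied after scaling if and only if it was satisfied before. Your write-up just makes the scaling of the max-flow value and of the imputation explicit.
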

\begin{proof}
    The worth and the profit-share to any subcoalition are multiplied by the same factor and so, a subcoalition would satisfy the core constraint if and only if it was satisfying the constraint before. 
\end{proof}
Now note that, in $p'$ each vertex gets a positive integer amount of cost-share. Create a new graph $G^*=(V^*,E^*)$ as follows. For each edge $e\in E'$ with a profit-share $p'(e)$, have $p'(e)$ edges, called ``split'' edges of $e$, in $E^*$ along a path between same vertices as before. See figure~\ref{fig:flow-hardness-split}. Note that, since $p'(e) = 1, \forall e \in (A_1\cup A_2\cup B)$, they are still single edges in the graph. The capacity of edges, $c^*$, in $E^*$ will remain the same as the corresponding edge in $E'$. Note that there still are only $4|F|$ paths in $E'$ and the max-flow value has not changed. 

\begin{figure}[H]
    \centering
        \begin{tikzpicture}[scale=2.2, every node/.append style={scale=0.9}]
    \tikzstyle{vertex}=[circle, fill=black, inner sep=1.5pt]

% Nodes
\node (s1) at (2.5,-2.5) [label=above:{$i=\{1,2,\cdots,|F|\}, j=\{1,2,\cdots,3q\}$}] {};
% \node (s2) at (2,-2.1) [label=above:{$b_0$}] {};

\node[vertex] (s) at (-1,0) [label=left:$s$] {};
\node[vertex] (v0) at (0,0) [label=below:$v_0$] {};
\node (a0) at (-0.5,0) [label=below:$a_0$] {};
\node[vertex] (u1) at (2,1) [label=above:$u_i$] {};
\node[vertex] (u2) at (2,0.5) {};
\node (dots1) at (2,0) [label=right:$\vdots$] {};
\node[vertex] (u4) at (2,-0.5) {};
\node[vertex] (u5) at (2,-1) {};
\node[vertex] (v1) at (4,1) [label=above:$v_j$] {};
\node[vertex] (v2) at (4,0.5) {};
\node (dots2) at (4,0) [label=right:$\vdots$] {};
\node[vertex] (v4) at (4,-0.5) {};
\node[vertex] (v5) at (4,-1) {};
\node[vertex] (t) at (5,0) [label=right:$t$] {};
\node[vertex] (v0') at (0,-1.5) [label=below:$v_0'$] {};
% \node (b0) at (0.5,-1.5) [label=below:$b_0$] {};

% Edges

    \node (i1) at (-0.6,0) {};
    \node (i2) at (-0.4,0) {};

    % Draw path of directed edges with \cdots in between
    \draw[postaction={decorate,decoration={markings,mark=at position 0.5 with {\arrow{stealth}}}}] (s) -- (i1) node[midway, above, xshift=15pt] {$M(3|F| + 4|F|\alpha)$};
    \node at (-0.5,0) {$\cdots$};
    \draw[postaction={decorate,decoration={markings,mark=at position 0.5 with {\arrow{stealth}}}}] (i2) -- (v0);

\draw[postaction={decorate,decoration={markings,mark=at position 0.5 with {\arrow{stealth}}}}] (v0) -- (u1) node[midway, above, yshift=7pt] {$M(3+4\alpha)$};

\draw[postaction={decorate,decoration={markings,mark=at position 0.5 with {\arrow{stealth}}}}] (v0) -- (u2);
\draw[postaction={decorate,decoration={markings,mark=at position 0.5 with {\arrow{stealth}}}}] (v0) -- (u4);
\draw[postaction={decorate,decoration={markings,mark=at position 0.5 with {\arrow{stealth}}}}] (v0) -- (u5) node[pos=0.25, below] {$A_1$};

\draw[postaction={decorate,decoration={markings,mark=at position 0.5 with {\arrow{stealth}}}}] (u1) -- (v1) node[midway, above] {$M(1+\alpha)$};
\draw[postaction={decorate,decoration={markings,mark=at position 0.5 with {\arrow{stealth}}}}] (u1) -- (v2);
% \draw[->] (u1) -- ();
% \draw[->] (u1) -- (v5);

\draw[postaction={decorate,decoration={markings,mark=at position 0.5 with {\arrow{stealth}}}}] (u2) -- (v1);
\draw[postaction={decorate,decoration={markings,mark=at position 0.5 with {\arrow{stealth}}}}] (u2) -- (v2);
\draw[postaction={decorate,decoration={markings,mark=at position 0.5 with {\arrow{stealth}}}}] (u2) -- (3,0.25);
% \draw[->] (u2) -- (v5);

% \draw[->] (u4) -- (v1);
\draw[postaction={decorate,decoration={markings,mark=at position 0.5 with {\arrow{stealth}}}}] (u4) -- (3,-0.25);
\draw[postaction={decorate,decoration={markings,mark=at position 0.5 with {\arrow{stealth}}}}] (u4) -- (v4);
\draw[postaction={decorate,decoration={markings,mark=at position 0.5 with {\arrow{stealth}}}}] (u4) -- (v5);

% \draw[->] (u5) -- (v1);
% \draw[->] (u5) -- (v2);
\draw[postaction={decorate,decoration={markings,mark=at position 0.5 with {\arrow{stealth}}}}] (u5) -- (v4);
\draw[postaction={decorate,decoration={markings,mark=at position 0.5 with {\arrow{stealth}}}}] (u5) -- (v5) node[midway, below] {$A_2$};

    \node (i3) at (4.3,0.7) {};
    \node (i4) at (4.7,0.3) {};

    % Draw path of directed edges with \cdots in between
    \draw[postaction={decorate,decoration={markings,mark=at position 0.5 with {\arrow{stealth}}}}] (v1) -- (i3) node[midway, above, xshift=55pt,yshift=-25pt] {$M(1+a_j\alpha)$};
    \node[rotate=-45] at (4.5,0.5) {$\cdots$};
    \draw[postaction={decorate,decoration={markings,mark=at position 0.5 with {\arrow{stealth}}}}] (i4) -- (t);

    \node (i5) at (4.3,0.35) {};
    \node (i6) at (4.7,0.15) {};

    % Draw path of directed edges with \cdots in between
    \draw[postaction={decorate,decoration={markings,mark=at position 0.5 with {\arrow{stealth}}}}] (v2) -- (i5);
    \node[rotate=-35] at (4.5,0.25) {$\cdots$};
    \draw[postaction={decorate,decoration={markings,mark=at position 0.5 with {\arrow{stealth}}}}] (i6) -- (t);

    \node (i7) at (4.3,-0.35) {};
    \node (i8) at (4.7,-0.15) {};

    % Draw path of directed edges with \cdots in between
    \draw[postaction={decorate,decoration={markings,mark=at position 0.5 with {\arrow{stealth}}}}] (v4) -- (i7);
    \node[rotate=35] at (4.5,-0.25) {$\cdots$};
    \draw[postaction={decorate,decoration={markings,mark=at position 0.5 with {\arrow{stealth}}}}] (i8) -- (t);

    \node (i9) at (4.3,-0.7) {};
    \node (i10) at (4.7,-0.3) {};

    % Draw path of directed edges with \cdots in between
    \draw[postaction={decorate,decoration={markings,mark=at position 0.5 with {\arrow{stealth}}}}] (v5) -- (i9) node[midway, below, xshift=10pt,yshift=10pt] {$A_3$};;
    \node[rotate=45] at (4.5,-0.5) {$\cdots$};
    \draw[postaction={decorate,decoration={markings,mark=at position 0.5 with {\arrow{stealth}}}}] (i10) -- (t);

    \node (i11) at (3.5,-1.5) {};
    \node (i12) at (4,-1.5) {};

    % Draw path of directed edges with \cdots in between
    \draw[postaction={decorate,decoration={markings,mark=at position 0.5 with {\arrow{stealth}}}}] (v0') -- (i11) node[midway, below] {$M(3|F|-3q+\alpha|F|)$};
    \node at (3.75,-1.5) {$\cdots$};
    \draw[postaction={decorate,decoration={markings,mark=at position 0.5 with {\arrow{stealth}}}}] (i12) -- (t);

\draw[postaction={decorate,decoration={markings,mark=at position 0.5 with {\arrow{stealth}}}}] (u1) -- (v0') node[pos=0.9, above, yshift=10pt, xshift=-12pt] {$M(3+\alpha)$};

\draw[postaction={decorate,decoration={markings,mark=at position 0.5 with {\arrow{stealth}}}}] (u2) -- (v0');
\draw[postaction={decorate,decoration={markings,mark=at position 0.5 with {\arrow{stealth}}}}] (u4) -- (v0');
\draw[postaction={decorate,decoration={markings,mark=at position 0.5 with {\arrow{stealth}}}}] (u5) -- (v0') node[midway, below] {$B$};

\end{tikzpicture}
    \caption{Network $G^*=(V^*,E^*)$, where $M=3|F|(q+1)$}
    \label{fig:flow-hardness-split}
\end{figure}

Now consider the profit-share where each edge gets unit profit, i.e., $p^*(e)=1, \forall e\in E^*$. Note that this imputation both minimizes the maximum and maximizes the minimum cost-share among vertices and also it is both the leximin and leximax imputation in $(G^*,c^*)$. The following lemma characterizes if it is in the core of the game.

\begin{lemma}
\label{lem:flow-hardness-final}
    $p^*$ is not in the core of $(G^*,c^*)$ if and only if $p'$ is not in the core of $(G',c')$.
\end{lemma}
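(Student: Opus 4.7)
The plan is to use the structural fact that splitting a directed edge into a series of edges with the same capacity preserves the max-flow, and to carefully relate subcoalitions of $E^*$ to subcoalitions of $E'$ through which original edges have \emph{all} their split copies present.

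First I would record the key preservation property. For each $e=(u,v)\in E'$, $e$ is replaced in $G^*$ by a path of $p'(e)$ edges through fresh intermediate vertices used by no other edge, each of capacity $c'(e)$. A standard lifting/projection argument shows that for any flow in the subgraph of $G'$ on an edge set $R'$, one obtains an equal-value flow in the subgraph of $G^*$ on the union of the split copies of $R'$, and conversely by flow conservation at the private intermediate vertices. From this I would extract the main structural lemma: for any $S^*\subseteq E^*$, if we define
\[
S' \;=\; \{\,e\in E' \mid \text{every split copy of } e \text{ lies in } S^*\,\},
\]
then $v^*(S^*)=v'(S')$. The point is that for any $e\notin S'$, at least one split edge of its path is missing from $S^*$, and since the intermediate vertices of that path appear on no other edge, no positive flow can use any of the remaining split copies of $e$; hence deleting those partial copies changes neither $v^*$ nor the set of edges that carry flow.

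Having set this up, both directions of the lemma follow cleanly. For the forward direction ($p'\notin\mathrm{core}(G',c')\Rightarrow p^*\notin\mathrm{core}(G^*,c^*)$), take a violating $S'\subseteq E'$ with $p'(S')<v'(S')$, and let $S^*$ be the set of all split copies of edges in $S'$. Then $v^*(S^*)=v'(S')$ by the preservation property, and
\[
p^*(S^*)\;=\;|S^*|\;=\;\sum_{e\in S'} p'(e)\;=\;p'(S')\;<\;v'(S')\;=\;v^*(S^*),
\]
so $S^*$ violates the core of $(G^*,c^*)$. For the reverse direction, take $S^*\subseteq E^*$ with $p^*(S^*)<v^*(S^*)$ and define $S'$ as above. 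The structural lemma gives $v^*(S^*)=v'(S')$, while
\[
p^*(S^*) \;=\; |S^*| \;\ge\; \sum_{e\in S'} p'(e) \;=\; p'(S'),
\]
the inequality absorbing any split copies of edges $e\notin S'$ that may lie in $S^*$ (they contribute to $p^*(S^*)$ but not to the right-hand side). Chaining these gives $p'(S')\le p^*(S^*)<v^*(S^*)=v'(S')$, so $S'$ witnesses a core violation in $(G',c')$.

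The main obstacle is establishing the preservation property $v^*(S^*)=v'(S')$ crisply. The issue to watch is that $S^*$ may contain some but not all split copies of an edge $e\notin S'$: one must argue that these stray edges cannot contribute to any $s$–$t$ flow in the subgraph induced by $S^*$. This is where the construction's design matters — the intermediate vertices of each split path are private to that path, so a missing split copy cuts the path entirely and the other copies become incident only to degree-$1$ or otherwise unusable vertices for $s$–$t$ flow. Once this is in hand, the rest is bookkeeping with $p^*(e)=1$ and $|S^*|=\sum_{e\in E'} k_e$ where $k_e$ counts split copies of $e$ in $S^*$.
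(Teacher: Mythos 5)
Your proof is correct and takes essentially the same approach as the paper's: lift a violating coalition of $G'$ to the set of all its split copies in one direction, and in the other direction project $S^*$ to the edges all of whose split copies are present, using the fact that partial split copies cannot carry $s$--$t$ flow (the intermediate vertices are private) and hence contribute to profit but not to value. The only cosmetic difference is that the paper first deletes the stray partial copies from $S^*$ before projecting, whereas you keep them and absorb them into the inequality $p'(S')\le p^*(S^*)$; the substance is identical.
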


\begin{proof}

    Assume $p'$ is not in the core of $(G',c')$ as $p'(S)<v'(S)$ for some set $S\subset E$. Let $S^*$ be the union of all corresponding split edges of every edge in $S$. Then $v^*(S^*) = v'(S)$ as capacities of edges in $E^*$ is the same as those in $E'$ and so, the same paths will be chosen in both-max-flows. Also note that $p^*(S^*) = p'(S)$ as every edge $e$ in $S$ has exactly $p'(v)$ split edges in $S^*$. Therefore $p'(S') < v'(S') \implies p^*(S^*) < v^*(S^*)$. This shows that if $p'$ is not in the core of $(G',c')$ then $p^*$ is not in the core of $(G^*,c^*)$.

    Now assume $p^*$ is not in the core of $(G^*,c^*)$ as $p^*(S)<v^*(S)$ for some set $S\subset E$. Note that, if $S$ has one split edge of some edge $e$ in $E'$, it can be assumed that it contains all the split edges of $e$. Else, removing such an edge would not change the value of the subcoalition  while only decreasing( or not increasing) the profit-share of the sub-coalition. And so, the new subcoalition would still have $p^*(S)<v^*(S)$. So, consider such a subcoalition $S^*$ and in the graph $G'$, consider the set of edges $S'$, for which all their split edges are included in $S^*$. 
    Note that the both the value and profit-shares of both coalitions are the same meaning $ p^*(S^*) < v^*(S^*) \implies p'(S') < v'(S') $. This proves the other implication of the lemma.
    
\end{proof}

    Note that the construction of this graph would take polynomial time as every edge had a profit that is polynomial in terms of $q$ and $|F|$ because of the chosen value of $\alpha$ - any other value that would make the profits integral and keep the graph size polynomial in $q$ and $|F|$ would work. Together with the lemmas stated above, this completes the proof of theorem~\ref{thm:flow-leximin-hardness}. 
    
    Since the final imputation $p^*$ assigns equal profits to every agent, this imputation has the highest maximin value, lowest minimax value, and it is the leximin and leximax among all possible imputations. And since testing its validity in the core is NP-hard, we get the following corollary. 

\begin{corollary}
    Finding the leximin/leximax core imputation in a max-flow game is NP-hard.
\end{corollary}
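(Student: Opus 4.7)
The plan is to leverage the construction from the preceding proof of Theorem~\ref{thm:flow-leximin-hardness} in a direct way: the max-flow instance $(G^*, c^*)$ and the uniform imputation $p^*$ that assigns profit exactly $1$ to every edge already do essentially all the work. The key idea is that $p^*$ serves simultaneously as a distinguished ``witness'' imputation whose core-membership governs whether the underlying X3C instance has an exact cover, and as the extremal point under both lex orders among \emph{all} imputations.

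First, I would establish the following structural fact. Every imputation of the max-flow game $(G^*, c^*)$ is a non-negative vector indexed by $E^*$ whose entries sum to $v(E^*) = |E^*|$. By a standard rearrangement argument, among all such vectors the uniform vector $p^*$ (entries all equal to $1$) is simultaneously the unique lexicographically largest one when entries are sorted in ascending order and the unique lexicographically smallest one when entries are sorted in descending order. In particular, \emph{any} imputation $q \ne p^*$ satisfies $\min_e q(e) < 1 < \max_e q(e)$. Hence, over the set of all imputations (ignoring the core constraint), $p^*$ is both the leximin and the leximax imputation.

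Second, I would combine this with Lemma~\ref{lem:flow-hardness-final} together with the chain of reductions that precede it, which show $p^*$ is in the core of $(G^*,c^*)$ if and only if the input X3C instance has \emph{no} exact cover. Suppose for contradiction that some polynomial-time algorithm $\mathcal{A}$ returns the leximin core imputation of any max-flow game. Apply $\mathcal{A}$ to $(G^*,c^*)$, obtaining an imputation $q$, and test in linear time whether $\min_e q(e) = 1$ (equivalently, $q = p^*$). By the structural fact, the leximin core imputation equals $p^*$ precisely when $p^*$ lies in the core; otherwise the leximin core imputation is some other imputation with minimum share strictly less than $1$. Thus this test decides X3C in polynomial time, contradicting its NP-completeness. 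The leximax case is strictly symmetric, using that $p^*$ is also the extremal leximax imputation among all imputations.

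The only non-routine ingredient is the structural fact about the uniform vector being extremal under both lex orders, but this follows from a short rearrangement/convexity argument: mixing any two distinct imputations pushes the sorted profile towards the uniform one in the sense of majorization, so any deviation from $p^*$ strictly decreases the sorted-ascending lex value and strictly increases the sorted-descending lex value. I do not foresee any genuine obstacle beyond this observation, since the heavy combinatorial lifting has already been carried out in the construction of $(G^*, c^*)$ and the proof of Lemma~\ref{lem:flow-hardness-final}.
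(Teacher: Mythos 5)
Your proposal is correct and follows essentially the same route as the paper: the paper likewise observes that the uniform imputation $p^*$ on $(G^*,c^*)$ is the leximin and leximax among all imputations, so an algorithm returning the leximin (or leximax) core imputation would decide whether $p^*$ is in the core, which is NP-hard by the preceding reduction. Your write-up merely makes explicit the rearrangement argument and the decision procedure that the paper leaves implicit.
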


\end{proof}

\subsection{Proofs from Section~\ref{sec:branching_game}}

\mstleximinhardness*
\begin{proof}[Proof of Theorem~\ref{thm:mst-leximin-hardness}]

The theorem states that finding the maximin profit-share, which is equal to the minimum profit-share of an agent in the leximin imputation, is in itself an NP-hard problem. We prove this theorem via a sequence of lemmas. We will show that this is true for the MST game which implies the hardness for the min-cost branching game. We use the co-NP hardness result of \cite{faigle1997complexity}. They show that given an instance of MST game and a cost-share, it is NP-hard to find a subcoalition that violates the core constraint, using a reduction from EXACT COVER BY 3-SETS(X3C). 

Given a graph $G$ with root vertex $r$ and an imputation $s$, let us extend the definitions of a cost-share $s$ of vertices to sets of vertices as $s(S) = \sum_{v\in S} s(v)$, the total cost-share of vertices. The characteristic function $v(S) = \sum_{\text{edge }e \text{ in } \text{MST of } S\cup \{r\}} c(e)$, i.e., as the worth of a set of vertices given by the minimum cost spanning tree of that vertices with the root.

\begin{lemma} (\cite{faigle1997complexity} Theorem 2.1)\\
\label{thm:MST-hardness}
    Given an MST game on a graph $G=(V,E), c: E\rightarrow\mathbb{R}_+$, a root vertex $r$ and an imputation $s:V\rightarrow\mathbb{R}_+$, deciding if there exists a subcoalition $S\subset V$ such that $s(S)<v(S)$ is NP-complete.
\end{lemma}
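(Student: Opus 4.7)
The plan is to reduce EXACT COVER BY 3-SETS (X3C), which is NP-complete, to the problem of deciding whether a given MST-game imputation admits a sub-coalition $S$ with $s(S) < v(S)$. Membership in NP is immediate: a violating $S$ is a polynomial-size certificate, verifiable by running any MST algorithm on the sub-graph induced by $S \cup \{r\}$ and comparing the cost to $s(S)$. Hence the entire burden is a polynomial-time reduction from X3C.

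Given an X3C instance with elements $X = \{x_1,\ldots,x_{3q}\}$ and 3-element subsets $F = \{f_1,\ldots,f_{|F|}\}$, I would construct a graph with root $r$, a ``set vertex'' $u_i$ for each $f_i \in F$, and an ``element vertex'' $v_j$ for each $x_j \in X$. Key edges and costs: $(r,u_i)$ of cost $3$ (the price of ``purchasing'' a set of size 3), $(u_i,v_j)$ of cost $0$ whenever $x_j \in f_i$ (free coverage), and $(r,v_j)$ of cost $1+\epsilon$ for a small $\epsilon>0$ (slightly above the per-element amortized price). With these costs, the MST of the grand coalition routes through all set vertices, paying $3$ per set and $0$ for elements. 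I would then design an imputation $s$ so that a distinguished sub-coalition $S^*$ (the element vertices together with a carefully chosen subset of set vertices) triggers the target inequality iff $F$ contains an exact cover of $X$: concretely, calibrate $s$ so that $s(S^*) = 3q$, while $v(S^*)$ — the MST of the sub-graph induced on $S^* \cup \{r\}$ — equals $3q$ precisely when an exact cover exists (covering all elements via $q$ sets at total cost $3q$, free edges for elements) and exceeds $3q$ otherwise (some element must use the $(r,v_j)$ edge of cost $1+\epsilon$, or a cover of size $> q$ is forced).

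The main obstacle, echoing the careful calibration required in the analogous max-flow NP-hardness proof (Theorem~\ref{thm:flow-leximin-hardness}), is tuning the costs and imputation so that (i)~$s$ sums to $c(\mathrm{MST}(G))$ and is a genuine imputation, and (ii)~the biconditional is clean — in particular, ``near-miss'' sub-coalitions (those corresponding to covers of size $q+1, q+2, \ldots$, or to proper subsets of the element vertices) do not spuriously create the inequality. I expect this delicate step to require auxiliary gadget vertices or a careful choice of $\epsilon$ relative to $q$ and $|F|$, analogous to the $\alpha$-padding and edge-splitting tricks used in the proof of Theorem~\ref{thm:flow-leximin-hardness}, so that the target inequality is first triggered exactly at the exact-cover threshold and not elsewhere. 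Once such a calibration is fixed, polynomiality of the reduction together with the verified biconditional ``exact cover exists $\iff$ some sub-coalition $S$ satisfies $s(S) < v(S)$'' yields NP-completeness, implying in particular the stated co-NP-hardness of core membership.
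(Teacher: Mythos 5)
You did not need to re-derive this statement: the paper does not prove it, but quotes it verbatim as Theorem~2.1 of \cite{faigle1997complexity}, reproducing that paper's X3C construction afterwards only in order to build on it inside the proof of Theorem~\ref{thm:mst-leximin-hardness}. Re-proving it is of course legitimate, but what you give is a plan rather than a proof. The decisive content of any such reduction --- fixing the edge costs and the imputation, checking that $s$ really sums to the cost of an MST of the grand coalition, and above all the case analysis showing that no ``spurious'' coalition (partial covers, covers of size larger than $q$, coalitions that drop some element vertices or use the $(r,v_j)$ edges) creates a violation unless an exact cover exists --- is precisely the step you defer with ``I expect this delicate step to require auxiliary gadget vertices or a careful choice of $\epsilon$.'' That is where all the work lies in the cited proof (their gadget needs an extra Steiner vertex and a guardian vertex, with costs $q+1$, $q$, $q+1$, $2q-1$ and imputation values $q+2$, $q$, $0$ tuned so the violation appears exactly at the exact-cover threshold), so as it stands the proposal has a genuine gap rather than a complete argument.

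There is also a directional problem. For a cost game the core violation is $s(S) > v(S)$ (a coalition charged more than its stand-alone MST cost); the ``$<$'' in the statement is a slip carried over from the profit-game convention, and the paper's own follow-up lemmas (e.g.\ the one comparing $s'$ and $s$) all work with $s(S) > v(S)$. Your calibration moreover contradicts your own claimed biconditional: with $s(S^*) = 3q$, and $v(S^*) = 3q$ exactly when an exact cover exists while $v(S^*) > 3q$ otherwise, the strict inequality $s(S^*) < v(S^*)$ holds precisely when there is \emph{no} exact cover --- the opposite of ``triggers the target inequality iff $F$ contains an exact cover'' --- and in any case an undercharged coalition is not a core violation, so a reduction aimed at $s(S) < v(S)$ would not yield the hardness of core membership that this lemma is invoked for. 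A correct reduction must make the designated coalition overcharged, $s(S) > v(S)$, exactly when an exact cover lets it connect to the root cheaply, which is how \cite{faigle1997complexity} arranges it. (Your NP-membership remark is fine and matches the standard argument.)
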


In \cite{faigle1997complexity}, for a specific minimum spanning tree game $(G,c)$ and an imputation $s$ constructed based on an instance of EXACT COVER BY 3-SETS(X3C), it is shown that $s$ is not in the core of $(G,c)$, i.e., there exists a subcoalition $S\subset V$ such that $s(S)<v(S)$ if and only if $X$ has an exact cover in $F$. This shows the hardness of testing whether an imputation is in the core. Like in the case of max-flow game, we extend this hardness result by iteratively transforming the graph, the edge costs and the imputation maintaining the hardness. The final imputation is such that every vertex gets the same share, thus equal to maximin profit share of the vertices. We will show that this imputation is in the core if and only if an exact cover exists. The result is  stared below and the complete proof is presented in the Appendix~\ref{app:proofs}.

\begin{figure}[H]
    \centering
    \begin{tikzpicture}
% Layer 4 Nodes
\node (L4-1) at (-4,4) [circle, fill, inner sep=2pt] {};
\node (L4-2) at (-3,4) [circle, fill, inner sep=2pt] {};
\node (L4-3) at (-2,4) [circle, fill, inner sep=2pt] {};
\node (L4-4) at (-1,4) [circle, fill, inner sep=2pt] {};
\node (L4-5) at (1,4) [circle, fill, inner sep=2pt] {};
\node (L4-6) at (2,4) [circle, fill, inner sep=2pt] {};
\node (L4-7) at (3,4) [circle, fill, inner sep=2pt] {};
\node (L4-8) at (4,4) [circle, fill, inner sep=2pt] {};
\node (L4-9) at (5,4) [label=right:$3q \text{ elements}$] {};

% Dots between Layer 4 Nodes
\node at (0,4) {$\cdots$};

% Layer 3 Nodes
\node (L3-1) at (-3.5,2) [circle, fill, inner sep=2pt] {};
\node (L3-2) at (-2.3,2) [circle, fill, inner sep=2pt] {};
\node (L3-3) at (-1.5,2) [circle, fill, inner sep=2pt] {};
\node (L3-4) at (1.5,2) [circle, fill, inner sep=2pt] {};
\node (L3-5) at (2.5,2) [circle, fill, inner sep=2pt] {};
\node (L3-6) at (3.5,2) [circle, fill, inner sep=2pt] {};
\node (L3-7) at (4.5,2) [label=right:$|F| sets$] {};

% Dots between Layer 3 Nodes
\node at (0,2) {$\cdots$};

% Layer 2 Node
\node (st) at (0,0) [circle, fill, inner sep=2pt, label=right:$St$] {};

% Layer 1 Node
\node (g) at (0,-2) [circle, fill, inner sep=2pt, label=right:$g$] {};

% Root Node
\node (r) at (0,-3) [circle, fill, inner sep=2pt, label=right:$r$] {};

\draw (L4-1) -- (L3-1);
\draw (L4-1) -- (L3-2);
\draw (L4-2) -- (L3-1);
\draw (L4-2) -- (L3-2);
\draw (L4-2) -- (L3-3);
\draw (L4-3) -- (L3-2);
\draw (L4-4) -- (L3-1);
\draw (-0.75,3) -- (L3-3);

\draw (L4-5) -- (L3-4);
\draw (L4-5) -- (L3-5);
\draw (L3-4) -- (0.75,3);
\draw (L4-6) -- (L3-5);
\draw (L4-6) -- (L3-6);
\draw (L4-7) -- (L3-4);
\draw (L4-7) -- (L3-6);
\draw (L4-8) -- (L3-5);
\draw (L4-8) -- (L3-6) node[midway, right] {$q+1$};

% Connections from Layer 3 to Layer 2
\foreach \i in {1,2,3,4,5} {
    \draw (L3-\i) -- (st);
}
\draw (L3-6) -- (st) node[midway,right, xshift = 9pt,yshift = 3pt] {$q$};

% Connections from Layer 3 to Layer 2
\foreach \i in {1,2,3,4,5} {
    \draw (L3-\i) -- (g);
}
\draw (L3-6) -- (g) node[midway, right] {$q+1$};

% % Connection from Layer 2 to Layer 1
\draw (st) -- (g) node[midway,yshift = 13pt, xshift = 3pt] {$q+1$};
\draw (g) -- (r) node[midway,right] {$2q-1$};

% % Connection from Layer 1 to Root
    \end{tikzpicture}
    \caption{Graph $G=(V,E)$}
    \label{fig:mst-hardness-orig}
\end{figure}

Consider the following instance of the EXACT COVER BY 3-SETS(X3C) problem - a finite set with $X = \{x_1,..., x_{3q} \}$ and a collection $F = {f_1 ..... f_{|F|}}$ of 3-element subsets of $X$. The MST game they considered in \cite{faigle1997complexity}(and lemma~\ref{lem:flow-hardness}) is as follows(some notation has been changed to be consistent with the definition above). The graph $G=(V,E)$ has $|X|+|F|+3$ vertices - $V = \{1,2,\ldots ,3q\} \cup \{ 3q+1,3q+2, \ldots, 3q+|F|\} \cup \{0,g,St\}$. The first $3q$ correspond to the element-players, the next $F$ to the set-players in the collection $F$ and the last three correspond to the root vertex(0), a guardian-player($g$) and a Steiner-vertex-player$(St)$ respectively. The vertices are arranged in layers as shown in figure~\ref{fig:mst-hardness-orig}. The edges and their costs are defined as follows. For each set $f_i=\{j,k,l\}, i\in \{1,2,\ldots,|F|\}$,
\begin{itemize}
    \item $c(3q+i,j) = c(3q+i,k) = c(3q+i,l) = q+1$
    \item $c(3q+i,St) = q$
    \item $c(3q+i,g) = q+1$
    \item $c(St,g) = q+1$
    \item $c(g,0) = 2q-1$
\end{itemize}

Given this, consider the imputation $s\in \mathbb{R}_+^{|X|+|F|+2}$ defined as
\begin{itemize}
    \item $s_i = q+2, \forall i\in {1,2, \ldots, 3q}$
    \item $s_i = q, \forall i \in {3q+1,3q+2, \ldots, 3q+|F|}$
    \item $s_{St} = s_{g} = 0$
\end{itemize}

Consider the modification of $G=(V,E)$ and $s$, $G'=(V',E')$ and $s'$, where the vertices and edges are the same - $V'=V,E'=E$ - but every edge in $G'$ has a cost $c'$ that is one unit more than in $G$ and $s'$ gives every node one unit more cost than in $s$ (see figure~\ref{fig:mst-hardness-one-added}). That is, 

\begin{itemize}
    \item $c'(e) = c(e)+1, \forall e\in E$
    \item $s'_v = s_v + 1, \forall v\in V$
\end{itemize}

\begin{figure}[H]
    \centering
    \begin{tikzpicture}
% Layer 4 Nodes
\node (L4-1) at (-4,4) [circle, fill, inner sep=2pt] {};
\node (L4-2) at (-3,4) [circle, fill, inner sep=2pt] {};
\node (L4-3) at (-2,4) [circle, fill, inner sep=2pt] {};
\node (L4-4) at (-1,4) [circle, fill, inner sep=2pt] {};
\node (L4-5) at (1,4) [circle, fill, inner sep=2pt] {};
\node (L4-6) at (2,4) [circle, fill, inner sep=2pt] {};
\node (L4-7) at (3,4) [circle, fill, inner sep=2pt] {};
\node (L4-8) at (4,4) [circle, fill, inner sep=2pt] {};
\node (L4-9) at (5,4) [label=right:$3q \text{ elements}$] {};

% Dots between Layer 4 Nodes
\node at (0,4) {$\cdots$};

% Layer 3 Nodes
\node (L3-1) at (-3.5,2) [circle, fill, inner sep=2pt] {};
\node (L3-2) at (-2.3,2) [circle, fill, inner sep=2pt] {};
\node (L3-3) at (-1.5,2) [circle, fill, inner sep=2pt] {};
\node (L3-4) at (1.5,2) [circle, fill, inner sep=2pt] {};
\node (L3-5) at (2.5,2) [circle, fill, inner sep=2pt] {};
\node (L3-6) at (3.5,2) [circle, fill, inner sep=2pt] {};
\node (L3-7) at (4.5,2) [label=right:$|F| sets$] {};

% Dots between Layer 3 Nodes
\node at (0,2) {$\cdots$};

% Layer 2 Node
\node (st) at (0,0) [circle, fill, inner sep=2pt, label=right:$St$] {};

% Layer 1 Node
\node (g) at (0,-2) [circle, fill, inner sep=2pt, label=right:$g$] {};

% Root Node
\node (r) at (0,-3) [circle, fill, inner sep=2pt, label=right:$r$] {};

\draw (L4-1) -- (L3-1);
\draw (L4-1) -- (L3-2);
\draw (L4-2) -- (L3-1);
\draw (L4-2) -- (L3-2);
\draw (L4-2) -- (L3-3);
\draw (L4-3) -- (L3-2);
\draw (L4-4) -- (L3-1);
\draw (-0.75,3) -- (L3-3);

\draw (L4-5) -- (L3-4);
\draw (L4-5) -- (L3-5);
\draw (L3-4) -- (0.75,3);
\draw (L4-6) -- (L3-5);
\draw (L4-6) -- (L3-6);
\draw (L4-7) -- (L3-4);
\draw (L4-7) -- (L3-6);
\draw (L4-8) -- (L3-5);
\draw (L4-8) -- (L3-6) node[midway, right] {$q+2$};

% Connections from Layer 3 to Layer 2
\foreach \i in {1,2,3,4,5} {
    \draw (L3-\i) -- (st);
}
\draw (L3-6) -- (st) node[midway,right, xshift = 9pt,yshift = 3pt] {$q+1$};

% Connections from Layer 3 to Layer 2
\foreach \i in {1,2,3,4,5} {
    \draw (L3-\i) -- (g);
}
\draw (L3-6) -- (g) node[midway, right] {$q+2$};

% % Connection from Layer 2 to Layer 1
\draw (st) -- (g) node[midway,yshift = 13pt, xshift = 3pt] {$q+2$};
\draw (g) -- (r) node[midway,right] {$2q$};

% % Connection from Layer 1 to Root
    \end{tikzpicture}
    \caption{Graph $G'=(V',E')$}
    \label{fig:mst-hardness-one-added}
\end{figure}

Let $v \text{ and }v'$ be the corresponding characteristic functions of the MST games on $G$ and $G'$. Then

\begin{lemma}
    $s'$ is not in the core of $(G',v')$ if and only if $s$ is not in the core of $(G,v)$
\end{lemma}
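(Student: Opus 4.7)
The plan is to exploit a simple cardinality identity: for any $S \subseteq \Vr$, every spanning tree of the subgraph induced on $S \cup \set{r}$ has exactly $|S|$ edges. First I would use this observation to show that uniformly adding $1$ to every edge cost increases $v(S)$ by exactly $|S|$ --- the set of minimum-cost spanning trees of $S\cup\set{r}$ is unchanged, since the cost of every spanning tree grows by the same additive constant $|S|$, so
\[
v'(S) \;=\; v(S) + |S| \qquad \text{for every } S \subseteq \Vr.
\]

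Next I would compute the shifted cost-share directly from the definition of $s'$:
\[
s'(S) \;=\; \sum_{u \in S}\bigl(s(u)+1\bigr) \;=\; s(S) + |S|.
\]
Subtracting these two identities gives $v'(S) - s'(S) = v(S) - s(S)$ for every $S \subseteq \Vr$. Hence the core violation $s'(S) < v'(S)$ occurs in $(G',v')$ if and only if $s(S) < v(S)$ occurs in $(G,v)$, which is exactly the biconditional asserted in the lemma. As a preliminary sanity check I would verify that $s'$ is actually an imputation of $(G',v')$; applying the same identity with $S = \Vr$ and using that $s$ is an imputation of $(G,v)$ gives $s'(\Vr) = s(\Vr) + |\Vr| = v(\Vr) + |\Vr| = v'(\Vr)$, as required.

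I do not anticipate any real obstacle: the entire argument rests on the fact that the edge count of a spanning tree is determined by its vertex count, so a uniform shift in edge costs and a uniform shift in vertex cost-shares cancel out at the level of coalition deficits $v(S)-s(S)$. The one subtlety worth double-checking is that the Steiner-type vertices $g$ and $St$ are treated as ordinary players here, so the identities above apply uniformly to all subsets of $\Vr$, including those containing $g$ or $St$.
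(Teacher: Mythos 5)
Your proof is correct and takes essentially the same approach as the paper: both rest on the identities $s'(S)=s(S)+|S|$ and $v'(S)=v(S)+|S|$ (every spanning tree of $S\cup\set{r}$ has exactly $|S|$ edges), so the coalition deficits $v(S)-s(S)$ are preserved and the biconditional follows. One small slip: since this is a cost-sharing game the core violation is $s(S)>v(S)$, not $s(S)<v(S)$ as you wrote, but because the deficits are equal the equivalence holds with either orientation of the inequality.
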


\begin{proof}
    For any subcoalition $S\subset V\setminus \{0\}$, note that $s'(S) = s(S)+ |S|$ as every node in $S$ gets one more unit of cost. Also note that, $v'(S) = v(S)+|S|$ because an MST connecting all nodes in $S$ to node 0 would need $|S|$ edges and each edges costs just a unit more. And so, $s'(S) > v'(S) \iff s(S) > v(S)$, proving the above statement. 
\end{proof}

Now note that, in $s'$ each vertex gets a positive integer amount of cost-share. Create a new graph $G^*=(V^*,E^*)$ as follows. For each vertex $v\in V'$ with a cost-share $s'(v)$, have $s'(v)$ vertices in $V^*$. Of these, one will be called a representative vertex and the remaining will be called duplicates of the corresponding vertex. The edges, $E^*$, in $G^*$ will be of two kinds. One set, $E^*_1$, will only contain edges between the representative vertices and they will look the same as in $G$ or $G'$. The other set of edges, $E^*_2$, will only exist between the representative vertex and the corresponding duplicates - that is, zoomed in on these vertices, the subgraph will look like a tree of height one with representative vertex as the root. See figure~\ref{fig:mst-hardness-split}. Note that, since $s'(g)=s'(St) = 1$, they do not have any duplicates in the graph. The costs of edges, $c^*$, in $E^*_1$ remain the same as in $E'$ but the costs of edges in $E^*_2$ is zero as shown below.

\begin{itemize}
    \item $c^*(e) = c'(e), \forall e\in E^*_1$
    \item $c^*(e) = 0, \forall e\in E^*_2$
\end{itemize}

Let $v^*$ be the corresponding characteristic functions of the MST game on $G^*$. Now consider the cost-share where each vertex gets unit cost, i.e., $s^*(v)=1, \forall v\in V^*$. Note that this imputation both minimizes the maximum and maximizes the minimum cost-share among vertices as it is both the leximin and leximax imputation in $(G^*,v^*)$. The following lemma characterizes if it is in the core of the game.

\begin{figure}[H]
    \centering
    \begin{tikzpicture}
% Layer 4 Nodes
\node (L4-1) at (-4,4) [circle, fill, inner sep=2pt] {};
\node (L4-2) at (-3,4) [circle, fill, inner sep=2pt] {};
\node (L4-3) at (-2,4) [circle, fill, inner sep=2pt] {};
\node (L4-4) at (-1,4) [circle, fill, inner sep=2pt] {};
\node (L4-5) at (1,4) [circle, fill, inner sep=2pt] {};
\node (L4-6) at (2,4) [circle, fill, inner sep=2pt] {};
\node (L4-7) at (3,4) [circle, fill, inner sep=2pt] {};
\node (L4-8) at (4,4) [circle, fill, inner sep=2pt] {};
\node (L4-9) at (5,4) [label=right:$3q \text{ elements}$] {};
\node (L4-9) at (5,4.5) [label=right:$3q(q+2) \text{ copies of elements}$] {};

\foreach \i in {1,2,3,4,5,6,7,8} {
    \node (A\i-1) at ($(L4-\i) + (-0.35,0.5)$) [circle, fill, draw=blue, inner sep=0.5pt] {};
    \node (A\i-3) at ($(L4-\i) + (0,0.5)$) {$\textcolor{blue}{..}$};
    \node (A\i-2) at ($(L4-\i) + (0.35,0.5)$) [circle, fill, draw=blue, inner sep=0.5pt] {};
    \node (A\i-4) at ($(L4-\i) + (0.2,0.5)$) [circle, fill, draw=blue, inner sep=0.5pt] {};
    \node (A\i-5) at ($(L4-\i) + (-0.2,0.5)$) [circle, fill, draw=blue, inner sep=0.5pt] {};
    \draw[draw=blue] (A\i-1) -- (L4-\i);
    \draw[draw=blue] (A\i-2) -- (L4-\i);
    \draw[draw=blue] (A\i-4) -- (L4-\i);
    \draw[draw=blue] (A\i-5) -- (L4-\i);
% \draw[draw=blue] ($(L4-1) + (0.05,0.35)$) -- (L4-1);
% \draw[draw=blue] ($(L4-1) + (-0.05,0.35)$) -- (L4-1);
}

\foreach \i in {1,2,3,4,5,6} {
    \node (B\i-1) at ($(L3-\i) + (-0.35,0.5)$) [circle, fill, draw=red, inner sep=0.5pt] {};
    \node (B\i-3) at ($(L3-\i) + (0,0.5)$) {$\textcolor{red}{..}$};
    \node (B\i-2) at ($(L3-\i) + (0.35,0.5)$) [circle, fill, draw=red, inner sep=0.5pt] {};
    \node (B\i-4) at ($(L3-\i) + (0.2,0.5)$) [circle, fill, draw=red, inner sep=0.5pt] {};
    \node (B\i-5) at ($(L3-\i) + (-0.2,0.5)$) [circle, fill, draw=red, inner sep=0.5pt] {};
    \draw[red] (B\i-1) -- (L3-\i);
    \draw[draw=red] (B\i-2) -- (L3-\i);
    \draw[draw=red] (B\i-4) -- (L3-\i);
    \draw[draw=red] (B\i-5) -- (L3-\i);
% \draw[draw=blue] ($(L4-1) + (0.05,0.35)$) -- (L4-1);
% \draw[draw=blue] ($(L4-1) + (-0.05,0.35)$) -- (L4-1);
}

% Dots between Layer 4 Nodes
\node at (0,4) {$\cdots$};

% Layer 3 Nodes
\node (L3-1) at (-3.5,2) [circle, fill, inner sep=2pt] {};
\node (L3-2) at (-2.3,2) [circle, fill, inner sep=2pt] {};
\node (L3-3) at (-1.5,2) [circle, fill, inner sep=2pt] {};
\node (L3-4) at (1.5,2) [circle, fill, inner sep=2pt] {};
\node (L3-5) at (2.5,2) [circle, fill, inner sep=2pt] {};
\node (L3-6) at (3.5,2) [circle, fill, inner sep=2pt] {};
\node (L3-7) at (4.5,2) [label=right:$|F| sets$] {};
\node (L3-7) at (4.5,2.5) [label=right:$q|F| \text{ copies of sets}$] {};

% Dots between Layer 3 Nodes
\node at (0,2) {$\cdots$};

% Layer 2 Node
\node (st) at (0,0) [circle, fill, inner sep=2pt, label=right:$St$] {};

% Layer 1 Node
\node (g) at (0,-2) [circle, fill, inner sep=2pt, label=right:$g$] {};

% Root Node
\node (r) at (0,-3) [circle, fill, inner sep=2pt, label=right:$r$] {};

\draw (L4-1) -- (L3-1);
\draw (L4-1) -- (L3-2);
\draw (L4-2) -- (L3-1);
\draw (L4-2) -- (L3-2);
\draw (L4-2) -- (L3-3);
\draw (L4-3) -- (L3-2);
\draw (L4-4) -- (L3-1);
\draw (-0.75,3) -- (L3-3);

\draw (L4-5) -- (L3-4);
\draw (L4-5) -- (L3-5);
\draw (L3-4) -- (0.75,3);
\draw (L4-6) -- (L3-5);
\draw (L4-6) -- (L3-6);
\draw (L4-7) -- (L3-4);
\draw (L4-7) -- (L3-6);
\draw (L4-8) -- (L3-5);
\draw (L4-8) -- (L3-6) node[midway, right] {$q+2$};

% Connections from Layer 3 to Layer 2
\foreach \i in {1,2,3,4,5} {
    \draw (L3-\i) -- (st);
}
\draw (L3-6) -- (st) node[midway,right, xshift = 9pt,yshift = 3pt] {$q+1$};

% Connections from Layer 3 to Layer 2
\foreach \i in {1,2,3,4,5} {
    \draw (L3-\i) -- (g);
}
\draw (L3-6) -- (g) node[midway, right] {$q+2$};

% % Connection from Layer 2 to Layer 1
\draw (st) -- (g) node[midway,yshift = 13pt, xshift = 3pt] {$q+2$};
\draw (g) -- (r) node[midway,right] {$2q$};

% % Connection from Layer 1 to Root
    \end{tikzpicture}
    \caption{Graph $G^*=(V^*,E^*)$. Blue and red edges have zero cost.}
    \label{fig:mst-hardness-split}
\end{figure}

\begin{lemma}
    $s^*$ is not in the core of $(G^*,v^*)$ if and only if $s'$ is not in the core of $(G',v')$.
\end{lemma}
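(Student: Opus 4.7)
The plan is to establish the biconditional by showing that the map ``take all copies of each vertex'' gives a natural correspondence between whole-vertex subsets of $V^*$ and subsets of $V'$, one that preserves both the total cost-share and the MST cost. Once that correspondence is in place, core violations on one side translate directly into core violations on the other.

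The easy direction goes as follows. Given $S' \subseteq V' \setminus \{0\}$ with $s'(S') > v'(S')$, I would define $S^*$ to be the set of all copies---representative and duplicates---of vertices in $S'$. Since each $v \in S'$ contributes $s'(v)$ copies, each with cost-share $1$, we get $s^*(S^*) = s'(S')$. For the MST cost, I would take an MST of $S' \cup \{0\}$ in $G'$ (using representative vertices) and extend it by the zero-cost edges joining each duplicate to its representative; this is a valid spanning tree of $S^* \cup \{0\}$ in $G^*$ of cost $v'(S')$, and it is optimal since duplicates offer no alternative routes. Hence $v^*(S^*) = v'(S')$ and $s^*(S^*) > v^*(S^*)$.

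For the harder direction, I would start with $S^* \subseteq V^* \setminus \{0\}$ violating the core and argue that, without loss of generality, $S^*$ is closed under taking copies. The key observation is that the violation forces $v^*(S^*)$ to be finite, so $S^* \cup \{0\}$ must be connected in $G^*$; since a duplicate is incident to only a single edge---the zero-cost edge to its representative---every duplicate in $S^*$ forces its representative into $S^*$ as well. For any representative in $S^*$ whose duplicates are not all included, I would add the missing duplicates one at a time: each addition increases $s^*(S^*)$ by $1$ (the duplicate has cost-share $1$) while leaving $v^*(S^*)$ unchanged (the duplicate attaches via a zero-cost edge), so the violation is preserved. After cleanup, $S^*$ is exactly the set of all copies of some $S' \subseteq V' \setminus \{0\}$, and the same cost-share and MST-cost identities yield $s'(S') > v'(S')$.

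The main obstacle is the cleanup step in the harder direction, since it must preserve the direction of the violation while collapsing $S^*$ to whole-vertex form. Two features of the construction make this work: the finiteness of $v^*(S^*)$ whenever there is a violation rules out ``orphan'' duplicates lacking their representative, and the zero cost of duplicate edges makes the addition of missing duplicates monotone in $s^*$ but neutral in $v^*$. Once these observations are in hand, the remainder is bookkeeping, applying the correspondence of the easy direction in reverse.
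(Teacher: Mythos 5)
Your proof is correct and follows essentially the same route as the paper's: the easy direction takes all copies of each vertex and uses that duplicates attach by zero-cost edges, and the hard direction first notes that a finite $v^*$ forces each duplicate's representative into the coalition and then closes the set under adding duplicates, which only increases $s^*$ while leaving $v^*$ unchanged. No gaps.
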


\begin{proof}
    Assume $s'$ is not in the core of $(G',v')$ as $s'(S)>v'(S)$ for some set $S\subset V\setminus\{0\}$. Let $S^*$ be the union of all representative vertices and duplicate vertices of every vertex in $S$. Then $v^*(S^*) = v'(S)$ as costs of edges in $E^*_1$ is the same as those in $E'$ and cost of edges in $E^*_2$ is zero. Also note that $s^*(S^*) = s'(S)$ as every vertex $v$ in $S$ has exactly $s'(v)$ copies in $S^*$. Therefore $s'(S') > v'(S') \implies s^*(S^*) > v^*(S^*)$. This shows that if $s'$ is not in the core of $(G',v')$ then $s^*$ is not in the core of $(G^*,v^*)$.

    Now assume $s^*$ is not in the core of $(G^*,v^*)$ as $s^*(S)>v^*(S)$ for some set $S\subset V^* \setminus \{0\}$. Note that, if $S$ has a duplicate vertex of some vertex $v\in V$, then it must also contain the representative vertex of $v$, else, $v^*(S)=\inf$ as there's no other edge that can link the vertex to node 0, giving a contradiction. Now consider the set $S^*$ such that, if a representative vertex $v_0$ is in $S$, then all the corresponding duplicate vertices are also chosen in $S^*$. Note that, even with this change, $v^*(S^*)=v^*(S)$ as we are only adding edges in $E^*_2$ which have zero cost. Also, $s^*(S^*)\geq s^*(S)$ as each new vertex will only increase the cost-share. Now, in the graph $G'$, consider the set of vertices, $S'$, whose representative vertices belong to $S^*$. For this set, it is easy to see that $s'(S') = s^*(S^*) \geq s^*(S) > v^*(S) = v^*(S^*) = v'(S')$. This shows the other direction of the lemma and completes the proof.
\end{proof}

    Combining all the lemmas above and the easily-observed fact that each graph construction described above can be done in polynomial time, finding leximin core imputations for the MST game is NP-hard. Since MST games are a special subclass of min-cost branching games, this completes the proof of theorem~\ref{thm:mst-leximin-hardness}.

\begin{corollary}
    Finding the leximin/leximax core imputation in an MST game or a min-cost branching game is NP-hard.
\end{corollary}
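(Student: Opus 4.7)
The plan is to deduce this corollary as an essentially one-line consequence of Theorem~\ref{thm:mst-leximin-hardness}. The key observation is that a leximin core imputation, by its very definition, is a core imputation whose smallest coordinate equals $\max_{s' \in \mathrm{core}} \min_{v} s'(v)$, so any polynomial-time algorithm producing a leximin core imputation would immediately certify a maximin core imputation. Theorem~\ref{thm:mst-leximin-hardness} rules this out, and the same reasoning applies symmetrically to leximax and minimax.

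More concretely, I would argue by the obvious reduction: suppose there were a polynomial-time algorithm $\mathcal{A}$ computing a leximin core imputation on MST or min-cost branching instances. Given any instance on which we wish to decide the maximin problem of Theorem~\ref{thm:mst-leximin-hardness}, run $\mathcal{A}$ on that instance and return its output. The returned imputation lies in the core and attains the maximum of the minimum cost-share across all core imputations, contradicting the NP-hardness of the maximin problem. The leximax argument is identical with ``ascending'' replaced by ``descending'' in the definition. Since MST games are the special case of min-cost branching games obtained by orienting every undirected edge in both directions with the same cost --- a transformation that preserves the characteristic function and hence the core --- hardness for MST games yields hardness for min-cost branching games.

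The only ``obstacle'', and it is really just a sanity check, is to confirm that the appendix's reduction from X3C establishes NP-hardness of the \emph{maximin core-imputation} problem and not merely of core-membership testing for a single candidate. Inspecting the construction, the candidate on the split graph $G^*$ is the all-ones imputation $s^* \equiv 1$ on $V^* \setminus \{r\}$, which uniquely maximizes the minimum and minimizes the maximum share over the entire imputation polytope (any other imputation of the same total must have some share strictly above $1$ and some strictly below). Hence on this family of instances ``is $s^*$ in the core?'' coincides with ``is $s^*$ the leximin/leximax core imputation?'', so the theorem's NP-hardness transfers verbatim and no new combinatorial work is required beyond what the proof of Theorem~\ref{thm:mst-leximin-hardness} already supplies.
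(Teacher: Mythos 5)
Your proposal is correct and follows essentially the same route as the paper: the paper also derives the corollary from the observation that the constructed all-equal imputation $s^*$ on $G^*$ is simultaneously the maximin, minimax, leximin and leximax imputation over the whole imputation polytope, so a polynomial-time leximin/leximax (or maximin/minimax) algorithm would decide core membership of $s^*$, which the X3C reduction shows is hard; the MST-to-branching specialization is invoked the same way. Your extra sanity check that $s^*$ \emph{uniquely} attains the maximin value (so the algorithm's output equals $s^*$ iff $s^*$ is in the core) is exactly the point the paper leaves implicit, and it is verified correctly.
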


\end{proof}

\mstleximincombi*
\begin{proof}[Proof of ~\Cref{thm:mst_leximin_combi}]

Let $U=\set{e_1,e_2,\ldots, e_n}$ be the elements and $S_i, 1\le i\le m$ the sets in a set-cover instance. Let $c_i$ be the cost of set $S_i$. We construct a graph $G=(V,E)$ which has a vertex $u_i$ corresponding to set $S_i, 1\le i\le m$ and a vertex $v_j$ for each element $e_j, 1\le j\le n$. If $e_j\in S_i$ we add edge $(v_j,u_i)$ of zero cost to $E$. 

$G$ has 3 other vertices $a,b,r$ and edges $(u_i,a), (u_i,b), 1\le i\le m$, $(a,b)$ and $(b,r)$. 
Edges $(u_i,a)$ have zero costs, and edges $(u_i,b)$ have cost $c_i$, $1\le i\le m$. Edges $(a,b)$ and $(b,r)$ have cost $c_0=n\cdot\max_i c_i$. The vertices $v_j, 1\le j\le n$ form the set $T$. The following figure illustrates an example where $S_1=\{v_1,v_3\}$, $S_2 = \{v_1,v_2\}$ and $S_3=\{v_2,v_3\}$.

\begin{figure}[H]
    \centering
    \begin{tikzpicture}

% Define coordinates
\coordinate (u1) at (-3,-1.5);
\coordinate (u2) at (0,-1.5);
\coordinate (u3) at (3,-1.5);
\coordinate (v1) at (-3,-3);
\coordinate (v2) at (0,-3);
\coordinate (v3) at (3,-3);
\coordinate (a) at (0,0);
\coordinate (b) at (0,1.5);
\coordinate (r) at (0,3);

% Draw edges
\draw[->, >=stealth, line width=1pt] (v1) -- (u1) node[midway, left] {0};
\draw[->, >=stealth, line width=1pt] (v1) -- (u2) node[midway, left] {0};
\draw[->, >=stealth, line width=1pt] (v2) -- (u2) node[midway, left] {0};
\draw[->, >=stealth, line width=1pt] (v2) -- (u3) node[midway, left] {0};
\draw[->, >=stealth, line width=1pt] (v3) -- (u3) node[midway, left] {0};
\draw[->, >=stealth, line width=1pt] (v3) -- (u1) node[midway, left] {0};
\draw[->, >=stealth, line width=1pt] (u1) -- (a) node[midway, left] {0};
\draw[->, >=stealth, line width=1pt] (u2) -- (a) node[midway, left] {0};
\draw[->, >=stealth, line width=1pt] (u3) -- (a) node[midway, left] {0};

\draw[->, >=stealth, line width=1pt, blue] (u1) -- (b) node[midway, left] {$c_1$};
% \draw[->, >=stealth, line width=1pt, blue] (u2) -- (b) node[midway, left] {$C_2$};
\draw[->, >=stealth, line width=1pt, blue] (u2) .. controls (1,1) and (0,1) .. (b) node[pos=0.25, right] {$c_2$};

\draw[->, >=stealth, line width=1pt, blue] (u3) -- (b) node[midway, right] {$c_3$};

\draw[->, >=stealth, line width=1pt, green] (a) -- (b) node[pos=0.25, left] {$c_{ab}$};

\draw[->, >=stealth, line width=1pt, red] (b) -- (r) node[midway, left] {$c_0$};

% Draw vertices
\filldraw [black] (r) circle (2pt) node[above] {$r$};
\filldraw [green] (a) circle (2pt) node[above right] {$a$};
\filldraw [blue] (b) circle (2pt) node[above right] {$b$};
\filldraw [blue] (u1) circle (2pt) node[below left] {$u_1$};
\filldraw [blue] (u2) circle (2pt) node[below right] {$u_2$};
\filldraw [blue] (u3) circle (2pt) node[below right] {$u_3$};
\filldraw [red] (v1) circle (2pt) node[below left] {$v_1$};
\filldraw [red] (v2) circle (2pt) node[below right] {$v_2$};
\filldraw [red] (v3) circle (2pt) node[below right] {$v_3$};

\end{tikzpicture}
  \label{fig:setcover}
  \caption{Example for equivalence between max min MST core imputation and optimum fractional set cover}
\end{figure}

\begin{lemma}
\label{lemma:setcover}
Let $\lambda^*$ be the maximin cost-share of a vertex in $T$ and $C^*$ the optimum fractional setcover. Then $C^*=n\lambda^*-c_0$.

\end{lemma}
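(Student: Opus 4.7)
The plan is to establish the identity $C^* = n\lambda^* - c_0$ by proving both inequalities $\lambda^* \ge (C^* + c_0)/n$ and $\lambda^* \le (C^* + c_0)/n$ separately, exploiting the minimum-branching LP~(\ref{eq.mst-org-dual}) and LP duality for fractional set cover. First I would record two structural facts about $G$: the minimum cost branching has $\opt = 2c_0$ (using a zero-cost edge from each $v_j$ into some $u_i \in N(v_j) := \set{u_i : e_j \in S_i}$, zero-cost edges $(u_i, a)$, together with $(a,b)$ and $(b,r)$); and complementary slackness on the zero-cost edges forces every support set $S$ of any feasible $y$ to satisfy $v_j \in S \Rightarrow N(v_j) \cup \set{a} \subseteq S$ and $u_i \in S \Rightarrow a \in S$. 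Also, every weight in an optimal dual fractional set-cover satisfies $w_j \le \min_{i : e_j \in S_i} c_i \le \max_i c_i = c_0/n$, so in particular $C^* \le c_0$.

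For the lower bound, I would take an optimal dual fractional set-cover $w^*$ with $\sum_j w^*_j = C^*$ and exhibit an explicit Owen set imputation attaining value $(C^* + c_0)/n$ on every $v_j \in T$. Set
\[
y(S_j) = w^*_j, \quad y(\Vr) = c_0, \quad y(\set{a}) = c_0 - C^*,
\]
where $S_j := \set{v_j} \cup N(v_j) \cup \set{a}$, split as $y'(S_j, v_j) = w^*_j$, $y'(\Vr, v_j) = c_0/n + C^*/n - w^*_j$ (nonnegative since $w_j^* \le c_0/n$), and $y'(\set{a}, a) = c_0 - C^*$. I would then verify LP~(\ref{eq.mst-org-dual}) constraint by constraint: the $(u_i, b)$ constraint reduces to $\sum_{j : e_j \in S_i} w_j^* \le c_i$; $(a,b)$ and $(b,r)$ are both tight at $c_0$; the zero-cost edges are vacuous by the support condition; and the total mass is $C^* + c_0 + (c_0 - C^*) = 2c_0 = \opt$. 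Each $v_j \in T$ then receives $p(v_j) = w^*_j + (c_0/n + C^*/n - w^*_j) = (C^* + c_0)/n$.

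For the upper bound, I would start from any Owen set imputation with $\lambda = \min_{v_j \in T} p(v_j)$ coming from some feasible $y$ and chain the bounds
\[
n\lambda \;\le\; \sum_{v_j \in T} p(v_j) \;\le\; \sum_{S : T \cap S \neq \emptyset} y(S),
\]
since the split of $y(S)$ can contribute to $T$-shares only when $S$ meets $T$. Partitioning the right-hand sum by whether $b \in S$, the $b \in S$ portion is $\le c_0$ by the $(b,r)$ constraint. For the $b \notin S$ portion, the crucial reweighting step is: pick any feasible fractional set cover $x^*$; every such $S$ contains some $v_j$, hence $N(v_j) \subseteq S$, so $\sum_{i : u_i \in S} x_i^* \ge 1$. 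Reversing summation order,
\[
\sum_{\substack{S : T \cap S \neq \emptyset \\ b \notin S}} y(S) \;\le\; \sum_i x_i^* \sum_{\substack{S : u_i \in S \\ b \notin S}} y(S) \;\le\; \sum_i x_i^*\, c_i,
\]
where the last step uses the $(u_i, b)$ dual constraint. Choosing $x^*$ optimal gives $n\lambda \le c_0 + C^*$.

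The main obstacle is the upper bound: the reweighting of each set $S$ by $\sum_{i : u_i \in S} x_i^*$ is the clever move that converts a generic bound on $y$-mass into exactly $C^*$, combining the primal set-cover constraint $\sum_{i : u_i \in S} x_i^* \ge 1$ with the dual capacity bound on the $(u_i, b)$ edges. The lower bound is comparatively routine but requires correctly identifying the ``universal'' set $\Vr$, whose only outgoing edge is $(b,r)$, as the vehicle for evenly distributing $c_0/n$ to each $v_j \in T$, with the correction terms $C^*/n - w^*_j$ needed to equalize total profit-shares across $T$.
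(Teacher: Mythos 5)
Your proof is correct, and it diverges from the paper's in an instructive way on the upper bound. For the lower bound ($\lambda^* \ge (C^*+c_0)/n$) you use essentially the same construction as the paper: the zero-cost-reachable sets $S_j = \set{v_j}\cup N(v_j)\cup\set{a}$ loaded with the optimal set-cover dual weights $w_j^*$, plus $\Vr$ to route the $(b,r)$ capacity evenly to $T$. You go one step further than the paper by adding $y(\set{a})=c_0-C^*$ so that the total dual value is exactly $2c_0=\opt$; the paper omits this, even though its own definition requires an Owen set cost-share to be an imputation, so your version is actually the more careful one (and your justification $w_j^*\le \max_i c_i = c_0/n$, hence $C^*\le c_0$, is what makes the padding legal). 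For the upper bound the paper works on the dual side: it aggregates $y^*_j:=\sum_{S\subseteq V\setminus\set{r,b}} y(S,v_j)$, verifies this is feasible for the set-cover packing LP using the $(u_i,b)$ constraints, and invokes weak duality $C^*\ge\sum_j y^*_j$. You instead work on the primal side, reweighting each support set $S$ with $b\notin S$ by $\sum_{i:u_i\in S} x_i^*\ge 1$ for an optimal fractional cover $x^*$ and then swapping the order of summation to land on $\sum_i c_i x_i^*=C^*$. The two arguments are LP-dual to one another and rely on the same structural facts (zero-cost edges forcing $N(v_j)\cup\set{a}\subseteq S$, the $(u_i,b)$ capacities, and the $(b,r)$ edge capping the $b\in S$ mass); yours has the small advantage of not needing to certify feasibility of a new dual solution, the paper's has the advantage of producing an explicit set-cover dual certificate. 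One cosmetic quibble: the support condition on zero-cost edges is plain feasibility of the packing constraints (a constraint with right-hand side $0$), not complementary slackness.
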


\begin{proof}
Recall the LP for the fractional set cover problem
\begin{mini}
		{} {\sum_{1\leq i \leq m} c_i x_i}
			{\label{eq.setcover}}
		{}
        \addConstraint{\sum_{i:e_j\in S_i} x_i}{\ge 1}{\quad \forall e_j, 1\leq j\leq n}
        \addConstraint{x_i}{\geq 0}
       {\quad \forall x_i, 1\leq i\leq m} 
	\end{mini} 

and its dual,
\begin{maxi}
		{} {\sum_{1\leq j\leq n} y_j}
			{\label{eq.setcover-dual}}
		{}
        \addConstraint{\sum_{j:e_j \in S_i} y_j}{\le c_i}{\quad\forall S_i, 1\leq i \leq m}
		\addConstraint{y_j}{\geq 0}{\quad\forall e_j, 1\leq j \leq n}
    \end{maxi} 

We first prove that $C^* \leq n\lambda^* - c_0$. Let $y^*$ be an optimum solution to LP (\ref{eq.setcover-dual}). By strong duality $C^*=\sum_{1\le j\le n} y^*_j$. We use $y^*$ to assign a cost-share of at least $(C^*+c_0)/n$ to every vertex in $T$, establishing the inequality.

For vertex $v_j$, let $S'_j$ be the vertices reachable from $v_j$ by a path of cost 0 in the graph $G$. Note that $a\in S'_j$ and if $e_j\in S_i$ then $u_i\in S'_j$. 

Consider assigning values to variables $y(S,v), v\in S\subseteq\Vr$ as follows. For all $1\leq j \leq n$, let $y(S'_j, v_j)= y^*_j$ and let $y(\Vr,v_j)= (C^*+c_0)/n-y^*_j$. Note that for all $1\le j\le n$, $c_0= n\max_i c_i\ge ny^*_j$ and hence $y(\Vr,v_j) \ge 0$. Thus the total cost-share assigned to $v_j\in T$ is $(C^*+c_0)/n$. This assignment of variables satisfies the packing constraints on the edges since
\begin{enumerate}
\item for any edge $e=(u_i,b)$, $\sum_{j:e\in\btd(S'_j)} y(S'_j,v_j) = \sum_{j:e_j\in S_i} y^*_j \le c_i$,
\item for edge $e=(a,b)$, $\sum_{j:e\in\btd(S'_j)} y(S'_j,v_j) = \sum_{j=1}^n y^*_j \le n\max_i c_i = c_0$, and
\item for edge $e=(b,r)$, $\sum_{j=1}^n y(\Vr,v_j)= \sum_{j=1}^n ((C^*+c_0)/n-y^*_j) = C^*+c_0-\sum_{j=1}^n y^*_j = c_0$.
\end{enumerate}

For the converse we need to show that $C^* \geq n\lambda^* - c_0$. Let $y(S,v_j), v_j\in S\subseteq \Vr$ be an assignment to variables that gives every vertex $v_j\in T$ a cost-share of at least $\lambda^*$. Note that if $b\in S$ then $(b,r)\in \btd(S)$ which implies that $\sum_{j=1}^n \sum_{S:b\in S} y(S,v_j) \le c_0$. Since $\sum_{j=1}^n \sum_{S\subseteq\Vr} y(S,v_j) \ge n\lambda^*$ it follows that $\sum_{j=1}^n \sum_{S\subseteq V\setminus\set{r,b}} y(S,v_j) \ge  n\lambda^* - c_0$. 

 We construct a solution, $y^*$, to the dual setcover LP by assigning $y^*_j=\sum_{S\subseteq V\setminus\set{r,b}} y(S,v_j)$. To argue that this is a feasible solution we need to show that for all $i$, $1\le i\le m$, $\sum_{j:e_j\in S_i} y^*_j \le c_i$. Note that, for $S\subseteq V\setminus\set{r,b}$ if $y(S,v_j)> 0$ then $S'_j\subseteq S$. This implies that for all $i: e_j\in S_i$, $(u_i,b)\in\btd(S)$ if $y(S,v_j)> 0$. Thus for $1\le i\le m$,
 \begin{align*}
    \sum_{j:e_j\in S_i} y^*_j = \sum_{j:e_j\in S_i}\sum_{S\subseteq V\setminus\set{r,b}} y(S,v_j) \le c_i
 \end{align*}
where the last inequality follows from the fact that for every set $S$ that contributes a non-zero value to the sum, edge $(u_i,b)\in\btd(S)$ and the variables $y(S,v_j)$ satisfy the packing constraints on the edges. To complete the proof note that $$C^*\ge \sum_{j=1}^n y^*_j = \sum_{j=1}^n \sum_{S\subseteq V\setminus\set{r,b}} y(S,v_j) \ge  n\lambda^* - c_0.$$
\end{proof}

\end{proof}

\subsection{Proofs from Section~\ref{sec:b_matching_game}}

\bmatchhardness*
\begin{proof}[Proof of Theorem~\ref{thm:bmatch-hardness}]

Lemma~\ref{thm:bmatch-hardness} states that given an instance of the $b$-matching game  and an ``approximate'' profit-share, it is NP-hard to find a subcoalition that violates the core constraint, using a reduction from EXACT COVER BY 3-SETS(X3C). Note that in the lemma, $p$ is not an imputation as it distributes strictly more value than worth of the game.

Recall that an instance of the exact cover by 3-sets problem is specified by a set of elements $X=\set{x_1,\ldots,x_{3q}}$ and a collection $F=\set{f_1,\ldots f_k}$ of 3-element subsets of $X$. The problem remains NP-hard for $k=2q$ and given such an instance we construct a bipartite graph $G=(U,V,E)$ and the function $b$ as follows.
 \begin{enumerate}
\item For each element $x_i, 1\le i\le 3q$ we include a vertex $u_i$ in $U$ and for each set $f_j, 1\le j\le k$ we add a vertex $v_j$ to $V$. Two root vertices $r_1,r_2$ are also added to $U$. 
\item If $x_i\in f_j$ we add an edge $(u_i,v_j)$ of unit weight to $E$. We also add edges $(v_j,r_1), (v_j,r_2), 1\le j\le k$. Edges incident to $r_1$ have weight $(1-\epsilon)$ while edges incident to $r_2$ have weight $(1-\epsilon/4)$, where $\epsilon>0$ is a constant to be determined later.
\item Let $b(u_i)=1, 1\le i\le 3q$, $b(v_j)=4, 1\le j\le k$, $b(r_1)=q$, and $b(r_2)=4q$.
\end{enumerate}
\begin{figure}
    \centering
    \includegraphics[scale=0.5]{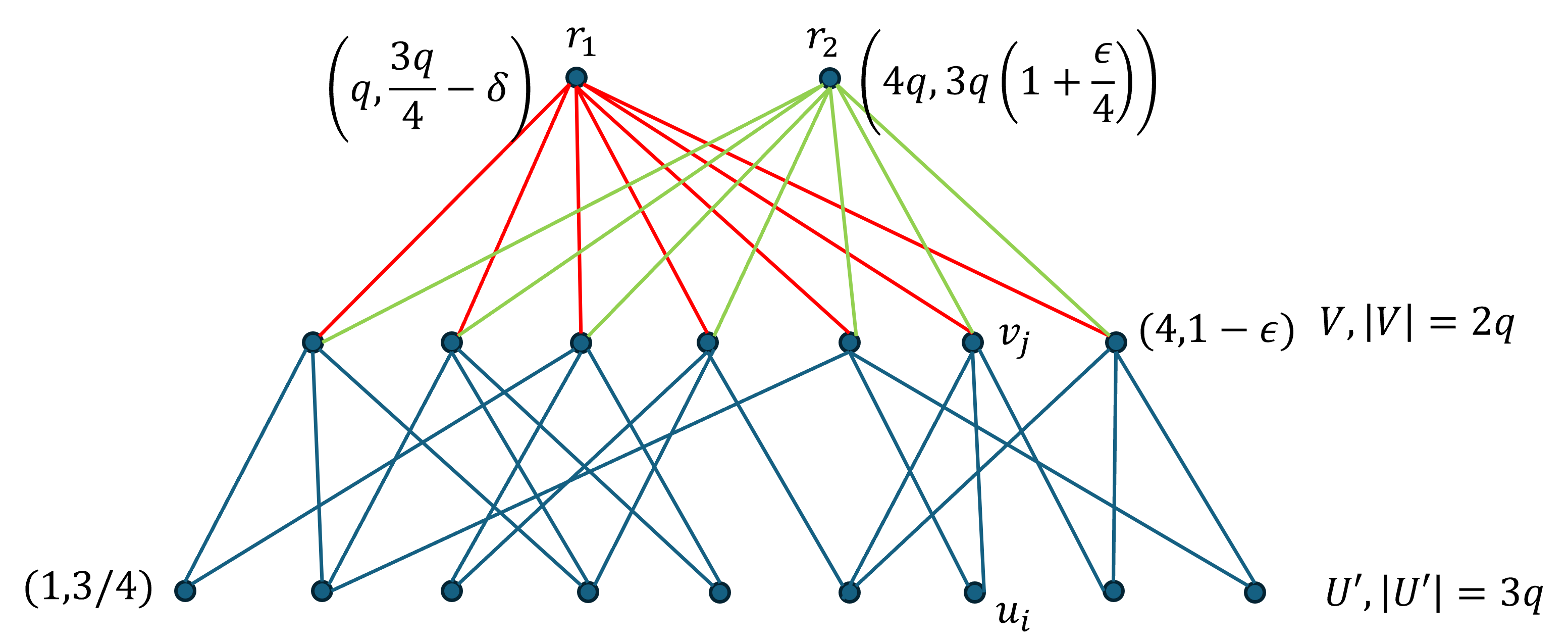}
    \caption{The tuple besides a vertex $v$ is $(b(v),p(v))$. The blue edges have unit weight, red edges have weight $(1-\epsilon)$ and green edges have weight $(1-\epsilon/4)$.}
    \label{fig:b_match_tree}
\end{figure}
%\todo[inline]{Change $U$ to $U'$ in the figure.}
Consider the profit-share $p$ defined as $p(u_i)=3/4, 1\le i\le 3q$, $p(v_j)=1-\epsilon, 1\le j\le k$, $p(r_1)=3q/4-\delta$ and $p(r_2)=3q(1+\epsilon/4)$ where $\delta > 0$ is a constant to be determined later. See figure~\ref{fig:b_match_tree} for reference. We will argue that there exists $S\subseteq U\cup V$ such that $p(S)< v(S)$ iff there exists a collection of $q$ sets in $F$ which cover all elements in $X$. 

%\todo[inline]{Rohith: $\delta$ not defined before use.}

Let $U'=\set{u_i, 1\le i\le 3q}$.
For the easy direction let $F'\subset F$, $\abs{F'}=q$, be an exact cover of $X$. Let $S=U\setminus\set{r_2}\cup\set{v_j|f_j\in F'}$ and $E'\subseteq E$ be the edges induced over $S$. Define $\deg{E'}{v}$ to be the degree of vertex $v$ in $E'$. Note that for all $v\in U\cup V, \deg{E'}{v}=b(v)$ and hence edges of $E'$ form a maximum weight $b$-matching on $S$. Thus $v(S)=4q - \epsilon$. However, $p(S)=4q-\epsilon-\delta$ and hence $p(S)< v(S)$.

For the converse, we need to show that if there is a set $S\subseteq U\cup V$ such that $p(S)< v(S)$ then the instance has an exact cover. We consider 4 cases determined by which of $r_1,r_2$ are in $S$.
\begin{enumerate}
\item $r_1, r_2\notin S$. Let $\abs{S\cap V}=t, \abs{S\cap U'}=s$. Choose a minimal set $S$ such that $p(S)<v(S)$. Then, every vertex in $|S\cap U'|$ is adjacent to some vertex in $|S\cap V|$ - else it can be removed to get $S'$ satisfying $p(S')<v(S')$. Hence, $s\leq 3t$. 

Note that $v(S)\leq s$ and $p(S)=t(1-\epsilon)+3s/4= t+3s/4-3\epsilon s/4$. For a small $\epsilon$, $p(S)< v(S)$ implies $t+3s/4\le s$. This implies $s\ge 4t$ which is a contradiction since $s\le 3t$.

\item $r_1,r_2\in S$. Let $\abs{S\cap V}=t, \abs{S\cap U'}=s$. Then 
\begin{align*}
    v(S) & =s+4q(1-\epsilon/4)+(4t-s-4q)(1-\epsilon) = 4t-\epsilon(4t-s-3q)\\
    p(S) & =3s/4+t(1-\epsilon)+3q(1+\epsilon/4)+(3q/4-\delta)  = 15q/4+3s/4+t-\epsilon(t-3q/4)-\delta\\
    p(S)-v(S) & = 15q/4+3s/4-3t + \epsilon(3t-9q/4-s)-\delta
\end{align*}
If $p(S)-v(S) <0$, then for a small choice of $\epsilon,\delta$, we must have $15q/4+3s/4-3t\le 0$ which implies $4t\ge (s+5q)$. The number of edges incident to vertices in $S\cap U$ is at most $4t$ and this cannot exceed the $b$ values of vertices in $S\cap U$ which is $s+5q$. This implies that $t=(s+5q)/4$ and $15q/4+3s/4-3t = 0$.

Hence if $p(S)<v(S)$ then for a sufficiently small choice of $\delta$, we must have $3t-9q/4-s\le 0$. Substituting the value of $t$ obtained above we get $3q/2-s/4\le 0$ which is a contradiction
since $s \le 3q$.

\item $r_1\notin S, r_2\in S$. Let $\abs{S\cap V}=t, \abs{S\cap U'}=s$. Then 
\begin{align*}
    v(S) & =s+(4t-s)(1-\epsilon/4) = 4t-\epsilon(t-s/4)\\
    p(S) & =3s/4+t(1-\epsilon)+3q(1+\epsilon/4) = 3q+3s/4+t-\epsilon(t-3q/4)\\
    p(S)-v(S) & = 3q+3s/4-3t + \epsilon(3q/4-s/4)
\end{align*}
If $p(S)-v(S) <0$, then for a small choice of $\epsilon$, we must have $3q+3s/4-3t\le 0$ which implies $4t\ge (s+4q)$. The number of edges incident to vertices in $S\cap U$ is at most $4t$ and this cannot exceed the $b$ values of vertices in $S\cap U$ which is $s+4q$. This implies that $3q+3s/4-3t = 0$.

Hence if $p(S)<v(S)$ then we must have $3q/4-s/4 < 0$. This implies $s> 3q$ which is a contradiction.
\item $r_1\in S, r_2\notin S$. Let $\abs{S\cap V}=t, \abs{S\cap U'}=s$. Then 
\begin{align*}
    v(S) & =s+(4t-s)(1-\epsilon) = 4t-\epsilon(4t-s)\\
    p(S) & =3s/4+t(1-\epsilon)+3q/4-\delta = 3q/4+3s/4+t-\epsilon t -\delta\\
    p(S)-v(S) & = 3q/4+3s/4-3t + \epsilon(3t-s)-\delta
\end{align*}
If $p(S)-v(S) <0$, then for a small choice of $\epsilon$ and $\delta$, we must have $3q/4+3s/4-3t\le 0$ which implies $4t\ge (s+q)$. The number of edges incident to vertices in $S\cap U$ is at most $4t$ and this cannot exceed the $b$ values of vertices in $S\cap U$ which is $s+q$. This implies that $4t=s+q$ and $3q/4+3s/4-3t = 0$.

Hence if $p(S)<v(S)$ then for a sufficiently small $\delta$ we must have $3t-s \le 0$. By substituting the value of $t$ obtained above, we get $s\ge 3q$. Since $s\le 3q$ it must be the case that $s=3q$ and hence $t=q$. Hence vertices in $S\cap V$ are an exact cover of the vertices in $S\cap U'$. 
\end{enumerate}

Note that the total profit-share of all vertices $p(U\cup V)=3q(3/4)+2q(1-\epsilon)+3q(1+\epsilon/4)+3q/4-\delta$. The maximum $b$-matching in this instance matches all vertices maximally and has size $8q$. The weight of the maximum $b$-matching is $v(U\cup V)=3q+(1-\epsilon/4)4q+(1-\epsilon)q = 8q-2q\epsilon$. Thus, $p(U\cup V)-v(U\cup V)= 3q\epsilon/4-\delta > 0$ since $\delta$ is much smaller than $\epsilon$. Hence the profit-shares we have assigned do not constitute an imputation.

Since $\alpha\ge p(U\cup V)/v(U\cup V) = 1+ 3\epsilon/(32-8\epsilon)$, we conclude that $\epsilon$ should be at most $32(\alpha-1)/(8\alpha-5)\ge 4(1-1/\alpha)$. Another bound on $\epsilon$ is obtained from the above case analysis and it is easy to check that $\epsilon=1/6q$ suffices to carry the argument through. Hence we set $\epsilon=\min\{4(1-1/\alpha),1/6q\}$.    
\end{proof}

\section{Finding leximin Owen set imputation of the max-flow game using linear programming}
\label{app:flow_leximin_lp}

The linear program~\ref{eq.flow-dual} has polynomially many variables and constraints, it can be solved efficiently. In this section, we will show how to modify this LP so that we can use the techniques of section~\ref{sec:lp_leximin_solution} and compute the leximin Owen set imputation. \\
Note that in this setting, the profits are only assigned to the edges, and each edge $(i,j)$ receives a profit of $p(i,j) = c_{ij} \delta_{ij}$. Therefore at every iteration we find the max-min profit that should be assigned, and we assign it to a set of edges. The following set of LPs return the leximin Owen set imputations for profits in max-flow game. $V_F$ can contain only the edges, and $V_F = \emptyset$ initially. The mapping $m: E \to \mathcal{R_+}$ is defined over the set of all edges. The procedure stops when all edge profits are fixed. 

 	\begin{maxi}
		{} {\alpha}  
			{\label{eq.flow-dual-method}}
		{}
        \addConstraint{\sum_{(i, j) \in E}  {c_{ij} \delta_{ij}}}{= \opt}{}
		\addConstraint{\delta_{ij} - \pi_i + \pi_j}{ \geq 0 \quad }{\forall (i, j) \in E}
		\addConstraint{\pi_s - \pi_t}{\geq 1}{}
		\addConstraint{\delta_{ij}}{\geq 0}{\forall (i, j) \in E}
		\addConstraint{\pi_i}{\geq 0}{\forall i \in V}
        \addConstraint{c_{ij} \delta_{ij}}{\geq \alpha}{\forall(i,j) \notin V_F}
        \addConstraint{c_{ij} \delta_{ij}}{= m((i,j))}{\forall(i,j) \in V_F}
	\end{maxi}

While this procedure yields a polynomial-time algorithm for computing the required imputation, it may require solving up to $|E|+1$ s, which is a notable drawback. In contrast, the combinatorial algorithm presented in Section~\ref{subsec:flow-leximin} typically offers a significant runtime advantage.

\section{Min-cost branching}\label{app:leximaxMST}

% \subsection{Going from maximin Owen set imputation to leximin Owen set imputation}

\subsection{Finding leximax Owen set imputation}
Similar to the procedure described in section \ref{section:leximinMST}, the following set of LPs can be defined for finding the leximax Owen set imputation. Let $\opt$ be the cost of minimum branching.
\begin{mini}
		{} {\alpha}
			{\label{eq.mst-primal-leximax}}
		{}
        \addConstraint{\sum_{S\in\Vr:e\in\btd(S)}\sum_{v\in S} y(S,v)}{\le c(e)}{\quad\forall e\in E}
		\addConstraint{\sum_{S\in\Vr:v\in S} y(S,v)}{\leq \alpha}{\quad\forall v\notin V_F, v\neq {r}}
  		\addConstraint{\sum_{S\in\Vr:v\in S} y(S,v)}{=  m(v)}{\quad\forall v\in V_F}
        \addConstraint{\sum_{S\in \Vr} \sum_{v\in S} y(S,v)}{=\opt}{}
        \addConstraint{y(S,v)}{\geq 0}{\quad\forall S\in\Vr,\forall v\in S}
\end{mini} 

The dual program is the following:
\begin{maxi}
		{} {\beta\opt -\sum_{e\in E} z(e)c(e) - \sum_{v \in V_F } z(v) \text{m}(v)}
			{\label{eq.mst-dual-leximax}}
		{}
        \addConstraint{\sum_{e\in\btd(S)} z(e)}{\ge -z(v)+\beta}{\quad\forall S\in\Vr, \forall v\in S}
        \addConstraint{\sum_{v\notin V_F, v\neq r} z(v)}{=1}
		\addConstraint{z(e)}{\geq 0}{\quad\forall e \in E}
        \addConstraint{z(v)}{\geq 0}{\quad\forall v \notin V_F, v\neq r }
\end{maxi} 

Note that the dual LP \ref{eq.mst-dual-leximax} contains exponentially many constraints, but it can be solved in polynomial-time using the ellipsoid method, provided we have a polynomial-time separation oracle to identify a violated constraint. The separation oracle in this context is highly similar to the one used for the leximin Owen set version of the problem discussed in Section \ref{section:leximinMST}. The only difference is that, here, the maximum flow from any vertex $v \in \Vr$ to $r$ is compared to $-z(v) + \beta$.

\subsection{More concise series of LPs for leximin Owen set imputation}
\label{app:MST_complex_LPs}
For the specific problem of finding the leximin Owen set imputation, the constraint $\sum_{S \in \Vr} \sum_{v \in S} y(S,v) = \opt$ can be removed from LPs \ref{eq.mst-primal-}, leading to the following more concise series of LPs, where $\mathcal{S}$ is the set of all contiguous sets, i.e. $\mathcal{S} = \{S \in \Vr, |\btd(S)\cap T|=1\}$ for any minimum branching tree $T$.

\begin{maxi}
		{} {\alpha}
			{\label{eq.mst-primal-concise}}
		{}
        \addConstraint{\sum_{S:e\in\btd(S)}\sum_{v\in S} y(S,v)}{\le c(e)}{\quad\forall e\in E}
		\addConstraint{\sum_{S:v\in S} y(S,v)}{\ge \alpha}{\quad\forall v\notin V_F, v\neq {r}}
  		\addConstraint{\sum_{S:v\in S} y(S,v)}{ =  m(v)}{\quad\forall v\in V_F}
        \addConstraint{y(S,v)}{\geq 0}{\quad\forall S\in\mathcal{S},\forall v\in S}
	\end{maxi} 
The dual program is the following:
\begin{mini}
		{} {\sum_{e\in E} z(e)c(e) - \sum_{v \in V_F } z(v) \text{m}(v)}
			{\label{eq.mst-dual-concise}}
		{}
        \addConstraint{\sum_{e\in\btd(S)} z(e)}{\ge z(v)}{\quad\forall S\in\mathcal{S},\forall v\in S}
        \addConstraint{\sum_{v\notin V_F, v\neq r} z(v)}{=1}
		\addConstraint{z(e)}{\geq 0}{\quad\forall e \in E}
        \addConstraint{z(v)}{\geq 0}{\quad \forall v \notin V_F, v\neq r }
\end{mini}

Although LP \ref{eq.mst-dual-concise} has exponentially many constraints, it can be solved in polynomial time using the ellipsoid method, if there is a polynomial-time separation oracle for identifying violated constraints. The relevant constraints are those corresponding to sets $S \in \mathcal{S} $ and $v \in S$. Given values assigned to edges $z(e), e \in E$ and vertices $z(v), v \in V$, we provide a polynomial-time separation oracle to solve this LP efficiently. 

\begin{lemma}\label{lem:oracle}
Given an assignment of values to edges $z(e),e\in E$ and vertices $z(v), v\in V$, there exists a polynomial time algorithm to determine if there exists a contiguous set of vertices, $S$ in $T$ and a vertex $v\in S$ such that $\sum_{e\in\btd(S)} z(e) < z(v)$.
\end{lemma}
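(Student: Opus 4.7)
The plan is to obtain the oracle by solving, for each candidate vertex $v\in\Vr$, one minimum $s^*$-$t^*$ cut problem in an auxiliary flow network $H_v$ engineered so that its finite cuts correspond exactly to contiguous sets $S$ containing $v$. A violation of the constraint family at $v$ exists iff the min-cut value in $H_v$ is strictly less than $z(v)$, and iterating over the $n-1$ choices of $v$ either returns a witness $(S,v)$ or certifies that no violation exists.

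The first step would be the structural observation that for $S\subseteq\Vr$, the identity $|\btd(S)\cap T|=1$ holds iff $S$ is connected in $T$ viewed as an undirected tree. In the in-branching $T$, a connected $S\subseteq\Vr$ has a unique topmost vertex $u$, and the sole tree edge leaving $S$ is the arc from $u$ to its in-branching parent; conversely, $|\btd(S)\cap T|=1$ forces the tree edges with both endpoints in $S$ to contain $|S|-1$ edges, which therefore form a spanning tree of $S$. After rooting the undirected $T$ at $v$ and letting $p_v(w)$ denote the first vertex on the unique $w$-$v$ path in $T$, contiguous sets containing $v$ are exactly the subsets $S\ni v$ closed under $p_v$, i.e.\ those satisfying $w\in S\Rightarrow p_v(w)\in S$ for every $w\in V\setminus\{v\}$.

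Next I would construct $H_v$ on the node set $V\cup\{s^*,t^*\}$ by (i) including every arc $(a,b)\in E$ with capacity $z(a,b)$, (ii) adding $s^*\to v$ and $r\to t^*$ of capacity $+\infty$ to pin $v$ on the source side and $r$ on the sink side, and (iii) adding a precedence arc $w\to p_v(w)$ of capacity $+\infty$ for every $w\in V\setminus\{v\}$. Any $s^*$-$t^*$ cut of finite value places $v$ on the source side and $r$ on the sink side, and is closed under $p_v$; hence, after removing $s^*$, its source side $S$ is a contiguous subset of $\Vr$ with $v\in S$, and the cut value equals $\sum_{e\in\btd(S)}z(e)$ because only the $E$-arcs contribute a finite amount. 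Conversely, every contiguous $S\ni v$ yields a feasible finite cut of the same value, so the minimum $s^*$-$t^*$ cut in $H_v$ computes $\min_{S\in\mathcal{S},\,v\in S}\sum_{e\in\btd(S)}z(e)$ in polynomial time via any maximum flow algorithm; comparing this value against $z(v)$ decides violation for $v$, and the source side of the minimum cut is the witnessing $S$ whenever a violation is detected.

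The main subtlety I expect is faithfully encoding ``$S$ is a subtree of $T$ containing $v$ and avoiding $r$'' inside a min-cut, since one must simultaneously pin $v$ into $S$, pin $r$ out of $S$, and forbid any $w\in S$ while $p_v(w)\notin S$; in particular, vertices that lie beyond $r$ on the $v$-rooted tree must automatically end up on the sink side via the chain of precedence arcs terminating at the infinite arc $r\to t^*$. The three infinite-capacity arc families above accomplish all of this inside a single network $H_v$, after which the oracle amounts to $O(n)$ max-flow computations on a graph of size $O(n+m)$, which is clearly polynomial in the input size.
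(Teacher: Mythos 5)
Your proposal is correct, and the underlying technique---casting the search over contiguous sets as a minimum-cut computation in which tree-contiguity is enforced by infinite-capacity arcs---is the same one the paper uses; the decomposition, however, is genuinely different. The paper first fixes the unique tree edge $e=(u,w)\in T$ that is permitted to lie in $\btd(S)$, restricts attention to the descendant subtree of $u$, assigns infinite capacity to the remaining tree edges and to auxiliary arcs forcing the boundary vertices onto the sink side, and then runs one max-flow from each candidate $v$ in that subtree to $w$, for $O(n^2)$ max-flow calls overall. You instead fix only the vertex $v$, re-root the undirected tree at $v$, and encode \emph{all} contiguous sets containing $v$ and avoiding $r$ as the finite $s^*$-$t^*$ cuts of a single network, via the precedence arcs $w\to p_v(w)$ together with the pinning arcs $s^*\to v$ and $r\to t^*$. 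Your structural observation that, for $S\subseteq\Vr$, the condition $|\btd(S)\cap T|=1$ is equivalent to connectivity of $S$ in the undirected tree is exactly what makes this single-network encoding sound, and your construction also correctly forces the vertices lying ``beyond'' $r$ in the $v$-rooting onto the sink side through the chain of infinite arcs; since $\set{v}$ itself is contiguous, the minimum cut is always finite and the comparison with $z(v)$ is well defined. The payoff of your version is a cleaner reduction and only $O(n)$ max-flow computations, with the violated set recovered as the source side of the minimum cut; the payoff of the paper's version is that each network lives on a possibly much smaller descendant subtree and the witness tree edge is identified explicitly up front. Both give a valid polynomial-time separation oracle.
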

\begin{proof}
To prove the lemma it suffices to show a polynomial time algorithm which, given an edge $e=(u,w)\in T$, checks if there exists a set $S$ and a vertex $v\in S$ such that $\btd(S)\cap T=\{e\}$ and $\sum_{e\in\btd(S)} z(e) < z(v)$.

Define $ V' $ as the set of all descendants of vertex $ u $ within the tree $ T $. 
Consider $ N(V') \subseteq V \setminus V' $ to represent the subset of vertices that 
are adjacent to at least one vertex in $ V' $. We denote by $ H' = (V' \cup N(V'), E') $ the graph induced by the vertex set $ V' \cup N(V') $, with edges removed that have both endpoints residing in $ N(V') $. We assign an infinite capacity to 
every edge in $ E' \setminus \{(u,w)\} $ that also exists as an edge in $ T $, while all 
other edges in $ E' $ are given a capacity that matches the $ z $ value of the corresponding 
edge in $ E $. Additionally, for each vertex in $ N(V') \setminus \{w\} $, we introduce an 
edge leading to $ w $ with infinite capacity. 
Figure~\ref{fig:seporacle} demonstrates this configuration with an example where all black edges, and the edge $ (u, w) $, belong to a given tree $ T $. The pink subtree encompasses all descendants of $ u $ within tree $ T $. The blue edges represent those edges that connect at least one endpoint in $ V' $, yet are not part of tree $ T $. Finally, the red dashed edges represent added connections with infinite capacity from each vertex in $ N(V') \setminus \{w\} $ to $ w $.

We next argue that for $v\in V'$ if the max-flow from $v$ to $w$ in $H'$ is less than $z(v)$ then there exists a set $S, v\in S$ such that $\btd(S)\cap T=\{e\}$ and $\sum_{e\in\btd(S)} z(e) < z(v)$. 

Consider such a vertex $v$ and let $S$ be the side of the minimum capacity cut separating $v$ and $w$ that contains $v$. By the max-flow min-cut theorem, the total capacity of edges in $\btd(S)$ is less than $z(v)$. Since $v$ is a descendant of $w$ in the tree $T$, at least one edge on the path from $v$ to $w$ in $T$ should be in $\btd(S)$. All these edges, except $(u,w)$, have infinite capacity and since $\btd(S)$ has a finite capacity, $(u,w)\in\btd(S)$. Again, since $\btd(S)$ has a finite capacity, no other edge of $\btd(S)$ is in $T$ and hence $S$ is a contiguous set of vertices in $T$. Thus the total capacity of edges in $\btd(S)$ equals $\sum_{e\in\btd(S)} z(e)$ which, since it is less than $z(v)$, gives us a violated inequality.

For the converse, note that if there were a set of vertices in the tree, $S$, and $v\in S$, such that $\btd(S)\cap T=\{e\}$ and $\sum_{E\in\btd(S)} z(e) < z(v)$ then the cut $\btd(S)$ would have capacity $\sum_{E\in\btd(S)} z(e)$ in $H'$ and since this is less than $z(v)$, the maximum flow from $v$ to $w$ would be less than $z(v)$.
\end{proof}

\begin{figure}[!ht]
  \centering
  \begin{tikzpicture}
    % Style Definitions
    \tikzset{vertex/.style={circle, draw, fill=black, inner sep=0pt, minimum width=4pt}}
    \tikzset{infinite/.style={green!50!black, densely dashed, line width=1pt, -{Latex[bend]}}}
    \tikzset{finite/.style={blue, line width=1pt}}

    \filldraw[fill=purple!10, draw=black] (0,0.5) -- (-1.5, -2.5) -- (1.5, -2.5) -- cycle;
    % Vertices
    \node[vertex] (u) at (0,0) [label=right:$u$] {};
    \node[vertex] (w) at (0,1) [label=above:$w$] {};
    \node[vertex] (v) at (0,-1) [label=below:$v$] {};
    \node[vertex] (k) at (-1,-2) [] {};
    \node[vertex] (z) at (1,-2) [] {};

    \node[vertex] () at (-0.35,0.35) [] {};
    \node[vertex] () at (0.35,0.35) [] {};
    \node[vertex] () at (1.2,-0.5) [] {};
    \node[vertex] () at (-1.2,-0.5) [] {};
    \node[vertex] () at (-2.8,-1.7) [] {};
    \node[vertex] () at (-3.5,-1.9) [] {};
    \node[vertex] () at (2.8,-1.7) [] {};
    \node[vertex] () at (3.5,-1.9) [] {};

    \draw[thick,color=blue] (u) -- (w); % replace with actual coordinates
    \draw[thick, color=black] (u) -- (v) node[midway, right] {$\infty$};
    \draw[thick, color=black] (u) -- (v);
    \draw[thick,color=black] (v) -- (k) node[midway, left] {$\infty$};
    \draw[thick,color=black] (v) -- (z) node[midway, right] {$\infty$};

    \draw[thick, color = blue] (u) -- ++(-0.35,0.35);
    \draw[dashed, thick, color = red] (w) -- (-0.35,0.35);
    
    \draw[thick, color = blue] (u) -- ++(0.35,0.35);
    \draw[dashed, thick, color = red] (w) -- (0.35,0.35);

    \draw[thick, color = blue] (v) -- ++(1.2,0.5);
    \draw[dashed, thick, color = red] (w) -- (1.2,-0.5);

    \draw[thick, color = blue] (v) -- ++(-1.2,0.5);
    \draw[dashed, thick, color = red] (w) -- (-1.2,-0.5);

    \draw[thick, color = blue] (k) -- ++(-1.8,0.3);
    \draw[dashed, thick, color = red] (w) -- (-2.8,-1.7);

    \draw[thick, color = blue] (k) -- ++(-2.5,0.1);
    \draw[dashed, thick, color = red] (w) -- (-3.5,-1.9)node[midway, left] {$\infty$};

    \draw[thick, color = blue] (z) -- ++(1.8,0.3);
    \draw[dashed, thick, color = red ] (w) -- (2.8,-1.7);

    \draw[thick, color = blue] (z) -- ++(2.5,0.1);
    \draw[dashed, thick, color = red] (w) -- (3.5,-1.9) node[midway, right] {$\infty$};
    
    \draw[thick, color = blue] (k) -- (z);

  \end{tikzpicture}
  \caption{Example for separation oracle of MST LP}
  \label{fig:seporacle}
\end{figure}

Let $ y $ be a feasible solution to the LP \ref{eq.mst-org-dual}, and let $ y' $ be a \emph{split} of $ y $. Since $ y(S) = \sum_{v \in S} y'(S, v) $, by a change of variables in the LP \ref{eq.mst-org-dual}, we get the following equivalent LP. Note that the optimal objective of this LP is the cost of the min-cost branching, $ \opt $.

\begin{maxi}
		{} {\sum_{S\subseteq\Vr} \sum_{v\in S} y'(S,v)}
			{\label{eq.mst-split}}
		{}
        \addConstraint{\sum_{S:e\in\btd(S)} \sum_{v\in S} y'(S,v)}{\le c(e)}{\quad\forall e\in E}
		\addConstraint{\sum_{v\in S} y'(S,v)}{\geq 0}{\quad\forall S\subseteq\Vr}
\end{maxi} 
 
\begin{lemma}
    The set of LPs \ref{eq.mst-primal-concise} returns the leximin Owen set imputation. 
\end{lemma}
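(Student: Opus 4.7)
The plan is to show that LP \ref{eq.mst-primal-concise} and LP \ref{eq.mst-primal-} yield the same set of cost-share vectors at each step of the iterative procedure. Combined with the already-established correctness of the iterative procedure based on LP \ref{eq.mst-primal-}, this gives the result immediately.

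The easier direction is that every feasible $\hat y$ of LP \ref{eq.mst-primal-} can be transformed into a feasible solution $y$ of LP \ref{eq.mst-primal-concise} with the same per-vertex cost-shares. I would use an uncrossing argument tied to a fixed minimum branching $T$: iteratively replace pairs of sets in the support of $\hat y$ that are non-contiguous with respect to $T$ by their contiguous ``subtree'' pieces, redistributing the $y(\cdot,v)$ values so that each vertex's total share is unchanged and every packing constraint $\sum_{S : e \in \btd(S)} \sum_{v \in S} y(S,v) \le c(e)$ remains satisfied. This mirrors the classical laminar-support argument for optimal dual solutions of branching LPs. Since per-vertex shares are preserved, the constraints $\sum_{S : v \in S} y(S,v) \ge \alpha$ (and $= m(v)$ for $v \in V_F$) carry over unchanged.

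For the reverse direction, every feasible $y$ of LP \ref{eq.mst-primal-concise} that attains the maximum value of $\alpha$ extends trivially (by zero-padding the variables for $S \notin \mathcal{S}$) to a feasible solution of LP \ref{eq.mst-primal-}, \emph{provided} the missing constraint $\sum_{S, v} y(S,v) = \opt$ holds. I would verify that this holds automatically at optimum. Fix any minimum branching $T$. Each contiguous set $S \in \mathcal{S}$ contributes to the packing constraint of exactly one edge of $T$, namely the unique edge in $\btd(S) \cap T$. Summing the packing constraints over $e \in T$ gives
\[
\sum_{S \in \mathcal{S}} \sum_{v \in S} y(S,v) \;\le\; \sum_{e \in T} c(e) \;=\; \opt,
\]
with equality iff each packing constraint for $e \in T$ is tight. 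I would argue tightness by contradiction: if some $e \in T$ has a slack packing constraint, then via a max-flow/augmenting argument analogous to Lemma \ref{lem:oracle} one can exhibit a contiguous set $S$ with $\btd(S) \cap T = \{e\}$ containing a minimum-share free vertex $v^{*}$ such that $y(S, v^{*})$ can be increased without violating any non-tree packing constraint, strictly raising $\alpha$ and contradicting optimality.

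The main obstacle will be the tightness-of-tree-packing step in this second direction, specifically the argument that slack on some $e \in T$ is \emph{usable} by a minimum-share free vertex. The subtlety is that some contiguous sets containing $v^{*}$ may be blocked by other tight non-tree packing constraints; one must show at least one contiguous set admits simultaneous augmentation. This reduces precisely to the flow feasibility reasoning of Lemma \ref{lem:oracle}: slack on $e$ corresponds to a surplus in a routing network that can be directed to augment $y(S, v^{*})$ for some well-chosen $S$. Once tightness is established for every $e \in T$, summing the packing constraints yields $\sum y(S,v) = \opt$, and the two LPs are equivalent at every iteration.
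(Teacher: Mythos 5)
Your overall framing --- show that the concise LPs \ref{eq.mst-primal-concise} and the full LPs \ref{eq.mst-primal-} admit the same cost-share vectors at each iteration, and inherit correctness from the full system --- is reasonable, but both directions as sketched have problems, and the second contains a genuine gap. For the first direction you do not need uncrossing at all: any feasible point of LP \ref{eq.mst-primal-} satisfies $\sum_{S,v} y(S,v)=\opt$ together with the packing constraints, hence is a split of an \emph{optimal} dual solution of LP \ref{eq.mst-org-primal}; since that LP has an integral optimal primal solution $T$, complementary slackness already forces $y(S)>0\Rightarrow \abs{\btd(S)\cap T}=1$, i.e.\ the support is automatically contiguous. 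If you did insist on uncrossing, you would additionally have to preserve the per-vertex shares $\sum_{S\ni v}y(S,v)$ and not merely the set-level values $y(S)$, which the standard uncrossing exchange does not give you for free.

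The gap is in the second direction. You correctly reduce everything to showing that at the leximin point every tree-edge packing constraint is tight, but the step you yourself flag as ``the main obstacle'' --- that slack on some $e=(u,w)\in T$ can always be converted into an increase of some vertex's share without violating tight non-tree constraints --- is precisely the content that requires proof, and the appeal to Lemma \ref{lem:oracle} does not supply it: that lemma is a separation oracle for the covering constraints $\sum_{e\in\btd(S)}z(e)\ge z(v)$ of the dual LP \ref{eq.mst-dual-concise}; it is not an augmentation statement about the primal variables $y(S,v)$, and the routing network in which ``slack on $e$'' becomes ``surplus deliverable to $v^*$'' is never constructed. You also restrict the augmentation to a \emph{minimum-share} free vertex $v^*$, which is both unnecessary (raising any non-fixed vertex's share is already a lexicographic improvement) and harmful (the contiguous sets $S$ with $\btd(S)\cap T=\{e\}$ need not contain a minimum-share vertex). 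The paper closes this step with a far more local move: if $(u,w)\in T$ is slack, it raises $y(\{u\},u)$ on the singleton contiguous set $\{u\}$ and contradicts leximin-optimality. Until your augmentation claim is actually established, the equality $\sum_{S,v}y(S,v)=\opt$ --- and with it the whole second direction --- remains unproven.
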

\begin{proof}
    The set of LPs \ref{eq.mst-primal-concise} finds a leximin solution among the feasible solutions of the LP \ref{eq.mst-split}. However, we argue that a feasible solution that is lexicographically better than all other feasible solutions should also be optimal. Let $ \alpha^*, y^*(S, v) $ be the optimal solution of the last iteration of the LP \ref{eq.mst-primal-concise}, and let $ T $ be any min-cost branching.
    \begin{equation}
    \begin{aligned}
        \opt &= \sum_{e \in T} c(e) \ge \sum_{e \in T} \sum_{S : e \in \btd(S)} \sum_{v \in S} y^*(S, v) \\
        &= \sum_{S : e \in \btd(S)} \left[ \sum_{v \in S} y^*(S, v) \cdot \abs{\btd(S) \cap T} \right] = \sum_{S : e \in \btd(S)} \sum_{v \in S} y^*(S, v)
    \end{aligned}
    \label{eq.imp}
    \end{equation}
    The last equality holds because we have ensured that $ y^*(S, v) > 0 $ only if $ \abs{\btd(S) \cap T} = 1 $. To show that $ y^*(S, v) $ is an optimal solution to LP \ref{eq.mst-split}, we need to show that the inequality is always an equality, i.e., for all $ e \in T $, $ c(e) = \sum_{S : e \in \btd(S)} \sum_{v \in S} y^*(S, v) $. If an edge $ (u, w) $ is not tight, by raising the dual of $ y^*(\{u\}, u) $, we get a lexicographically better feasible solution, which contradicts the fact that we started with the leximin feasible solution.
\end{proof}

\section{Omitted details from \texorpdfstring{$b$}{b}-matching game}
\label{app:b-matching-omitted}

\subsection{Finding leximin Owen set imputation using linear programming}
\label{app:bmatching_lp}
Since LP~\ref{eq.b-uncon-core-dual-bipartite} has polynomially many variables and constraints, it can be solved efficiently. Therefore we need to show how this LP should be modified to give us a series of LPs that determine the leximin Owen set imputation. \\
Note that in this setting, the profits are only assigned to the vertices, each vertex $i \in U$ receives a profit of $p_U(i) = b_i u_i$, and each vertex $j \in V$ receives a profit of $p_V(j) = b_j v_j$. Therefore at every iteration we find the maxi-min profit that should be assigned, and we assign it to a particular set of vertices. The following set of LPs return the leximin Owen set imputations for profits in $b$-matching game. $V_F$ can contain only the vertices in $U \cup V$. The mapping $m: U \cup V \to \mathbb{R}_+$ is defined over the set of all vertices. The procedure stops when profits of all vertices are fixed. 

 	\begin{maxi}
		{} {\alpha} 
			{\label{eq.b-uncon-core-dual-bipartite_leximin}}
		{}
        \addConstraint{\sum_{i \in U}  {b_i u_{i}} + \sum_{j \in V} {b_j v_j}}{ = \opt}{}
		\addConstraint{ u_i + v_j}{ \geq w_{ij} \quad }{\forall (i, j) \in E}
		\addConstraint{b_i u_{i}}{\geq \alpha}{\forall i \in U, i \notin V_F}
		\addConstraint{b_j v_{j}}{\geq \alpha}{\forall j \in V, j \notin V_F}
        \addConstraint{b_i u_{i}}{= m(i), }{\forall i \in U, i \in V_F}
        \addConstraint{b_j v_{j}}{= m(j), }{\forall j \in V, j \in V_F}
	\end{maxi}

\subsection{Omitted definitions}
\label{app:b-matching-definitions}

The algorithmic approach for the $b$-matching setting presented in Section~\ref{sec:b_matching_comb_algo} follows the same principles as in \cite{Vazirani-leximin}. In this section, we will detail on the relevant definitions that were skipped in the section.

First, we will characterize the vertices and edges based on the number of times they are matched in some or all maximum weight $b$-matchings.

\begin{definition} 
    Let $q$ be a vertex in $U \cup V$ and $b_q$ its associated capacity. We classify vertex $q$ as follows:
    \begin{enumerate}
        \item \textit{Essential}: Vertex $q$ is matched $b_q$ times in every maximum weight $b$-matching. 
        \item \textit{Viable}: Vertex $q$ is matched $b_q$ times in some, but not all, maximum weight $b$-matchings. 
        \item \textit{Subpar}: Vertex $q$ is never matched $b_q$ times in any maximum weight $b$-matching. 
    \end{enumerate}
\end{definition}

\begin{definition} 
    Let $y$ be an Owen set imputation. An agent $q$ is said to \textit{get paid} in $y$ if $y_q > 0$, and \textit{does not get paid} if $y_q = 0$. Furthermore, $q$ is \textit{paid sometimes} if there exists at least one imputation in the Owen set under which $q$ gets paid, and \textit{never paid} if it is not paid under any imputation in the Owen set. 
\end{definition}

\begin{definition}
    We will say that an edge $(i, j)$ is \textit{tight} if $u_i + v_j = w_{ij}$, it is \textit{over-tight} if $u_i + v_j > w_{ij}$ and it is \textit{under-tight} if $u_i + v_j < w_{ij}$.
\end{definition}

\begin{definition} 
    Let $e = (i, j)$ be an edge in the graph $G$. We classify $e$ as follows: 
    \begin{enumerate} 
        \item \textit{Essential}: Edge $e$ is matched $ \min(b_i, b_j)$ times in every maximum weight $b$-matching in $G$. 
        \item \textit{Viable}: There exist maximum weight $b$-matchings $M$ and $M'$ such that $e$ is matched $\min(b_i, b_j)$ times in $M$ but not in $M'$.
        \item \textit{Subpar}: Edge $e$ is never matched $\min(b_i, b_j)$ times in any maximum weight $b$-matching in $G$. 
    \end{enumerate} 
\end{definition}

An essential edge is always connected to at least one essential vertex, whereas a viable or subpar edge can be adjacent to any type of vertex. Furthermore, all vertices adjacent to a subpar vertex are essential vertices. These facts, combined with the complementary slackness theorem and strict complementarity conditions, lead to the following lemmas.

\begin{lemma}(\cite{vazirani2023lpduality})
     For every vertex $q \in (U \cup V)$: 
     $$
     q \text{ is paid sometimes } \iff q \text{ is essential}
     $$
\end{lemma}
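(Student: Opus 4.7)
The plan is to rely entirely on LP duality between LP~\ref{eq.b-uncon-core-primal-bipartite} and LP~\ref{eq.b-uncon-core-dual-bipartite}. Since an Owen set imputation is, by definition, of the form $p(q) = b_q u_q$ (or $b_q v_q$) for some optimal dual $(u,v)$, and since $b_q \ge 1$, the vertex $q$ gets paid under this imputation if and only if its dual coordinate is strictly positive. So the lemma reduces to characterizing which vertices admit some optimal dual in which their dual coordinate is positive.

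For the forward direction I would argue the contrapositive. Suppose $q$ is not essential, i.e., $q$ is viable or subpar. Then by definition there exists a maximum-weight $b$-matching---equivalently, an optimal primal solution $x^*$ to LP~\ref{eq.b-uncon-core-primal-bipartite}---in which $q$ is matched strictly fewer than $b_q$ times, so the constraint $\sum_{(i,j)\in E} x^*_{ij} \le b_q$ has strictly positive slack. Now fix any optimal dual $(u,v)$; applying complementary slackness to the primal-dual pair $(x^*,(u,v))$ forces the dual coordinate at $q$ to be zero. As $(u,v)$ was arbitrary, $q$ receives profit $0$ in every Owen set imputation, i.e., $q$ is never paid.

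For the reverse direction I would invoke the strict complementary slackness theorem for LPs: there exist optimal primal $x^*$ and optimal dual $(u^*,v^*)$ such that for every vertex constraint, the primal slack and the corresponding dual variable are not simultaneously zero. If $q$ is essential, its primal constraint is tight in every primal optimum, and in particular in $x^*$; by strict complementarity the dual coordinate at $q$ must therefore be strictly positive. The Owen set imputation induced by $(u^*,v^*)$ then pays $q$ the amount $b_q u^*_q > 0$, so $q$ is paid sometimes.

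The only delicate point is the invocation of strict complementary slackness, which must be applied to the correct pair (vertex constraint versus its dual variable). Once that is set up, no combinatorial reasoning about matchings beyond the defining property of ``essential'' (saturation in every optimal primal) is needed, and both directions follow immediately from classical LP duality. I do not foresee a substantive obstacle beyond bookkeeping, which is consistent with the result being attributed to \cite{vazirani2023lpduality}.
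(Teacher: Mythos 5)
Your proof is correct and follows exactly the route the paper indicates for this cited lemma: ordinary complementary slackness (applied to an optimal primal with slack at $q$) for the forward direction, and the strict complementary slackness theorem for the reverse. The one step you elide is that ``essential'' is defined via \emph{integral} maximum-weight $b$-matchings, so to conclude that $q$'s degree constraint is tight in every LP optimum (including fractional ones, which is what strict complementarity hands you) you should invoke the integrality of LP~\ref{eq.b-uncon-core-primal-bipartite} for bipartite graphs: every optimal LP solution is a convex combination of integral optimal vertices, each of which saturates $q$.
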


\begin{lemma} (\cite{vazirani2023lpduality})
    For every edge $e \in E$:
    $$
        e \text{ is always tight }\iff e \text{ is viable or essential}
    $$
\end{lemma}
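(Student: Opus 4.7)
The plan is to deduce both directions from LP duality applied to the primal--dual pair (\ref{eq.b-uncon-core-primal-bipartite})--(\ref{eq.b-uncon-core-dual-bipartite}): ordinary complementary slackness handles one implication, while strict complementarity handles the other.

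For the direction ``viable or essential $\Rightarrow$ always tight,'' suppose $e=(i,j)$ is viable or essential. By the classification, there exists a maximum-weight $b$-matching $M$ in which $e$ is matched $\min(b_i,b_j) > 0$ times, so the associated primal optimum $x^M$ has $x^M_e > 0$. For an arbitrary optimal dual $(u^*,v^*)$, the pair $(x^M,(u^*,v^*))$ is primal--dual optimal and therefore satisfies complementary slackness; since $x^M_e > 0$, this forces the dual inequality associated with $e$ to be tight, i.e.\ $u^*_i + v^*_j = w_{ij}$. Because $(u^*,v^*)$ was arbitrary, $e$ is tight in every dual optimum.

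For the converse, I argue by contrapositive: assume $e$ is subpar and exhibit an optimal dual where $e$ is not tight. Invoke the strict complementarity theorem for LPs, which produces a primal--dual optimal pair $(x^*,(u^*,v^*))$ such that for each edge $e'=(i',j')$, exactly one of $x^*_{e'}>0$ or $u^*_{i'} + v^*_{j'} > w_{i'j'}$ holds. If I can show that $x^*_e = 0$ in this distinguished primal optimum, then strict complementarity immediately forces $u^*_i + v^*_j > w_{ij}$ at $(u^*,v^*)$, yielding the required dual optimum.

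The main obstacle is establishing that a subpar edge satisfies $x^*_e = 0$ in this distinguished optimum, because subparness literally only rules out $x_e = \min(b_i,b_j)$, not $0 < x_e < \min(b_i,b_j)$. To close the gap I would exploit the combinatorial structure of the bipartite $b$-matching polytope: given a primal optimum with $0 < x^*_e < \min(b_i,b_j)$, decomposing the symmetric difference with any other optimum into weight-preserving alternating cycles and rerouting along a cycle through $e$ should produce another primal optimum in which $e$ attains multiplicity $\min(b_i,b_j)$, contradicting the subpar assumption. Combined with the companion lemma on vertices (which identifies ``paid sometimes'' with ``essential''), this polytope-based exchange argument is the crux of the proof; once $x^*_e = 0$ is established, strict complementarity completes the backward direction.
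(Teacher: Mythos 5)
Your forward direction is correct and is exactly the intended argument: an integral maximum-weight $b$-matching is an optimal solution of LP (\ref{eq.b-uncon-core-primal-bipartite}) (the bipartite constraint matrix is totally unimodular), so $x^M_e>0$ plus complementary slackness forces $u^*_i+v^*_j=w_{ij}$ in every optimal dual. Your backward direction also follows the intended route (strict complementarity), and you correctly spot the gap: ``subpar'' as defined in Appendix~\ref{app:b-matching-definitions} only rules out $x_e=\min(b_i,b_j)$, not $0<x_e<\min(b_i,b_j)$.

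However, the exchange argument you propose to close that gap is false, and no argument can close it under the full-multiplicity reading, because the implication ``subpar $\Rightarrow$ not always tight'' then fails outright. Take $U=\{i\}$, $V=\{j,k\}$, $w_{ij}=1$, $w_{ik}=2$, $b(i)=b(j)=2$, $b(k)=1$. The unique maximum-weight $b$-matching is $x_{ij}=x_{ik}=1$, so $(i,j)$ is never matched $\min(b_i,b_j)=2$ times and is ``subpar'' under that reading; yet $x_{ij}=1>0$ forces $u_i+v_j=1$ in every optimal dual (the unique dual optimum is $u_i=1$, $v_j=0$, $v_k=1$), so $(i,j)$ is always tight. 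In particular there is no alternating-cycle rerouting that raises $x_{ij}$ to $2$ while preserving optimality --- the optimal face here is a single point, so your claimed ``boost $e$ to full multiplicity in another optimum'' step has no chance. The resolution is that the edge trichotomy under which this lemma is a theorem (and the one the cited source relies on) is whether $x_e>0$ in every / some / no maximum-weight $b$-matching, i.e., whether the edge is matched at all, not whether it attains multiplicity $\min(b_i,b_j)$. With that reading the obstacle you flagged disappears: subpar means $x_e=0$ in every integral optimum, hence in every LP optimum by integrality of the optimal face, and the Goldman--Tucker strictly complementary pair immediately yields a dual optimum with $u^*_i+v^*_j>w_{ij}$. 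So the fix is to correct the definition you are working from, not to add an exchange lemma.
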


After construction of the tight subgraph $H=(V,T_0)$, its connected components are split into two categories, unique imputation components and fundamental components, as defined below.

\begin{definition}
\label{def:unique_imputation_component}
    If a connected component of $H_0$ is such that all its vertices get the same profit-shares across all Owen set imputations, it will be called a \textit{unique imputation component}.
\end{definition}

\begin{definition}
\label{def:fundamental_component}
    A connected component of $H_0$ which is not a \textit{unique imputation component} is called a fundamental component.
\end{definition}

As defined above, only vertices in fundamental components can exhibit multiple profit-shares across different Owen set imputations. These fundamental components are pivotal in the algorithm, as the profit-shares of vertices within these components will be updated simultaneously.

\subsection{Combinatorial algorithms to compute leximin Owen set imputation}
\label{app:bmatching_comb}

The leximin Owen set imputation of the $b$-matching game can be derived through a fully combinatorial method. One approach involves reducing the problem to an assignment game on a related graph as follows. Create a new graph, $ G'=(U'\cup V',E') $, from $G=(U\cup V,E)$, where each vertex $ v $ of the original graph is replaced by $ b_v $ copies. If $u\in U \text{ and }v \in V$ are connected by an edge, every copy of $u$ is connected to every copy of $v$ using an edge with the same weight as the original edge.

It can be easily shown that any core imputation on the assignment game on $G'$ corresponds to an Owen set imputation of the $b$-matching game. Specifically, in any core imputation of the new assignment game, every copy of a vertex will have the same profit-share. Therefore, by giving each vertex in the $b$-matching game $b_v$ times the profit-share of one of its copies, we obtain an Owen set imputation.

Furthermore, it can be easily demonstrated that, by starting with any arbitrary imputation of the assignment game on $G'$ and iteratively updating the profits so that the corresponding imputation in the $b$-matching game on $G$ improves lexicographically-- following the algorithm outlined in \cite{Vazirani-leximin}-- we obtain the leximin Owen set imputation.

This straightforward algorithm, however, is pseudo-polynomial because the construction of the new graph $G'$ depends on the $b_v$ values, which can be large. The complexity of the algorithm is primarily determined by the time required to find a maximum weight matching in the associated assignment game, which also depends on the given $b_v$ values. By using the Hungarian algorithm, which runs in $O(mn + n^2 \log n)$ time on a graph with $n$ vertices and $m$ edges, we get the following theorem.

\begin{theorem}
\label{thm:bmatching_DCC_theorem}
    There exists a $O(B^3mn + B^2n^2\log (Bn))$ time combinatorial algorithm to find the leximin or leximax Owen set imputation of the max-weight bipartite $b$-matching game on a graph, $G$ with $n$ vertices and $m$ edges, where $B= max_{x\in U\cup V} b_x$.
\end{theorem}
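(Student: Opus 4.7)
The plan is to prove the theorem by explicit reduction to the assignment-game algorithm of \cite{Vazirani-leximin}, arguing correctness in two steps and then bounding the runtime by substituting the blown-up graph's parameters into the Hungarian algorithm's complexity.

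First I would formalize the construction: build $G' = (U' \cup V', E')$ by replacing each $v \in U \cup V$ with $b_v$ copies $v^{(1)}, \ldots, v^{(b_v)}$, and for every edge $(u,v) \in E$ include the $b_u b_v$ edges $(u^{(a)}, v^{(b)})$ in $E'$, each with weight $w_{uv}$. Then $|U'|+|V'| = O(Bn)$ and $|E'| = O(B^2 m)$. On $G'$ I would consider the assignment game (the special case $b \equiv 1$). The first key claim is a \emph{symmetrization lemma}: for every optimal dual solution of LP~\ref{eq.b-uncon-core-dual-bipartite} on $G'$, all copies $v^{(1)}, \ldots, v^{(b_v)}$ of a given vertex $v$ receive the same dual value. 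This follows by averaging --- if some copies received different duals, averaging them yields another optimal dual solution whose symmetrized form corresponds under the projection $p(v) = \sum_a p'(v^{(a)})$ to a feasible Owen-set imputation of the original $b$-matching game. This gives the bijection between core imputations of the assignment game on $G'$ and Owen-set imputations of the $b$-matching game on $G$, via $p(v) = b_v \cdot \pi(v)$ where $\pi(v)$ is the common share of the copies.

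Second, I would invoke the combinatorial algorithm of \cite{Vazirani-leximin} that finds the leximin core imputation of the assignment game on $G'$. The algorithm proceeds iteratively: compute a maximum-weight matching, form the tight subgraph, identify fundamental components, and rotate duals within each component to improve the multiset of profit-shares lexicographically. Because every dual solution encountered can be symmetrized without loss, I can either (i) directly execute the algorithm on $G'$ and then symmetrize at the end, or (ii) modify the per-iteration improvement so that it uses the rotation rule from Lemma~\ref{lem:components_are_balanced} (rate proportional to $1/b_v$) when viewed on $G$. Either way, the leximin vector on $G'$ (in which every group of $b_v$ copies shares a common value) projects to a leximin-optimal vector on $G$ by a straightforward rearrangement argument: for any Owen-set imputation $p$ on $G$, the ``unfolded'' vector $(p(v)/b_v, \ldots, p(v)/b_v)$ on $G'$ is a feasible symmetric core imputation of the assignment game, so the leximin on $G'$ restricted to symmetric imputations dominates all other symmetric vectors, which is exactly the leximin condition on $G$.

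Third, for the runtime, the dominant cost in the Vazirani--Leximin algorithm is (a) an initial maximum-weight matching and (b) at most $O(n)$ iterations each requiring recomputation of tight-subgraph structure, dominated in turn by a single matching computation. Using the Hungarian algorithm at $O(m_* n_* + n_*^2 \log n_*)$ with $n_* = O(Bn)$ and $m_* = O(B^2 m)$, each matching costs $O(B^3 m n + B^2 n^2 \log(Bn))$, and absorbing the polynomial number of iterations into the stated bound finishes the accounting. The hard part of the proof --- and the main obstacle --- is the symmetrization/projection step: one must verify that restricting attention to symmetric duals in $G'$ does not shrink the attainable Owen-set, and that the leximin optimum on $G'$ over \emph{all} core imputations is itself automatically symmetric. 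The uniqueness of the leximin optimum (Claim~\ref{cl:unique} in spirit, and Lemma~3 of \cite{Vazirani-leximin}) combined with the symmetry of $G'$ under copy-permutations gives this for free: any copy-permutation of a leximin imputation is another leximin imputation, hence equal to it, forcing equal shares on copies.
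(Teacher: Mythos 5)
Your overall route is the same as the paper's: blow each vertex $v$ up into $b_v$ copies to obtain an assignment game on $G'$ with $O(Bn)$ vertices and $O(B^2m)$ edges, relate its core to the Owen set of the $b$-matching game via the equal-shares-on-copies observation, run the combinatorial algorithm of \cite{Vazirani-leximin}, and read the runtime off the Hungarian algorithm applied to $G'$. The symmetrization/projection discussion and the runtime accounting are at the same level of detail as the paper's (equally brief) argument.

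There is, however, one genuine error. Your option (i) --- execute the assignment-game leximin algorithm on $G'$ as is and symmetrize at the end --- and the ``straightforward rearrangement argument'' offered to support it are wrong. The leximin order on $G'$ compares the sorted multiset containing $b_v$ copies of $p(v)/b_v$ for each $v$, whereas the leximin order on $G$ compares the sorted vector of the values $p(v)$ themselves, and these orders can disagree. For instance, with $b_u=1$ and $b_v=2$, the share vector $(p(u),p(v))=(1,3)$ unfolds to the sorted vector $(1,\,1.5,\,1.5)$ while $(2,2)$ unfolds to $(1,\,1,\,2)$; the $G'$-leximin prefers the former, but the $G$-leximin prefers the latter. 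So the leximin core imputation of the assignment game on $G'$ need not project to the leximin Owen set imputation on $G$. This is exactly why the paper insists that the iterative improvement be driven by the $G$-level objective: the \emph{profits} of the current minimum-profit vertices must rise at equal rates, which forces the duals (equivalently, the per-copy shares on $G'$) to move at rates proportional to $1/b_v$ --- your option (ii), consistent with Lemma~\ref{lem:components_are_balanced}. The claim ``either way'' is therefore false; the proof stands only if you drop option (i) and commit to option (ii).
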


The above result indicates that the Owen set of the $b$-matching game is closely related to the core of the assignment game. However, to develop an efficient algorithm, we must delve deeper into the combinatorial structure of the problem.

Below, we will first provide a concise overview of the ``primal-dual type'' algorithm from \cite{Vazirani-leximin}, emphasizing the key aspects and the properties that core imputations of the assignment game satisfy. We will demonstrate that these properties extend to the Owen set imputations of the $b$-matching game and argue that, with small modifications, the algorithm can be adapted to obtain the leximin Owen set imputation for the $b$-matching game.

\subsubsection{Overview of the algorithm for assignment game}

At a high level, the approach in \cite{Vazirani-leximin} can be summarized as follows: The method begins by categorizing edges and vertices into three groups—\emph{essential}, \emph{viable}, and \emph{subpar}—based on insights from the complementary slackness conditions of the associated dual. The graph, when restricted to essential and viable edges, is then divided into two types of components: \emph{unique imputation} components and \emph{fundamental} components.

The process starts with an arbitrary dual optimal core imputation, and the dual variables are updated iteratively to ultimately reach the leximin imputation. For unique imputation components, where there is already a unique dual feasible core imputation, the leximin imputation is achieved immediately. For the fundamental components, the focus is on identifying the vertices with the minimum profit-share and attempting to increase their profits. This is done by simultaneously adjusting the profits of all vertices within the same fundamental component, ensuring that the imputation remains dual optimal.

Throughout this process, several events may occur: new vertices may become the ones with the minimum profit-share, subpar edges might get tight, and components may merge. These events are handled appropriately, and at some point, certain vertices' profit-shares can no longer be increased, indicating that the leximin condition has been achieved for those vertices. This iterative process continues until no further improvements can be made, at which point the final imputation is guaranteed to be leximin within the core. 

\subsubsection{Overview of the algorithm for \texorpdfstring{$b$}{b}-matching game}
\label{app:b-matching-algo}

Firstly, let us review the definitions from Appendix~\ref{app:b-matching-definitions}. In the context of $b$-matching, a vertex $q \in U \cup V$ is classified as essential, viable, or subpar if it is matched $b_q$ times in every, some, or no maximum weight $b$-matching, respectively. Similarly, an edge $e = (i, j)$ is classified as essential, viable, or subpar if it is matched $\min(b_i, b_j)$ times in every, some, or no maximum weight $b$-matching, respectively.

As observed by \cite{vazirani2023lpduality}, the complementary slackness theorem implies that Owen set imputations only pay essential vertices. Additionally, across all Owen set imputations, every non-subpar edge is always tight, whereas a subpar edge can be either tight or over-tight. Notably, according to LP~\ref{eq.b-uncon-core-dual-bipartite}, no edge is under-tight in any dual feasible solution. These properties are generalizations from the core imputations in the assignment game.

Next, consider the graph $H = (V, T_0)$, formed by restricting $G$ to include only tight edges, i.e., essential and viable edges. As the algorithm progresses, profits are updated, causing additional edges to become tight. The updated set of tight edges and the resulting tight subgraph are denoted by $T$ and $H = (V, T)$, respectively. The connected components of $H$ are then classified into unique imputation components—components with unique Owen set imputations—and fundamental components—components with non-unique Owen set imputations.

Note that each fundamental component consists solely of viable and essential edges, which must remain tight to ensure the solution's dual optimality. Consider a fundamental component where one of its vertices, say $x \in U$, has the minimum profit-share among all vertices in the initial imputation. To increase this profit-share and thereby lexicographically improve the imputation, we must simultaneously increase the $u_i$ values of all $U$ vertices in the connected component and decrease the $v_j$ values of all $V$ vertices in the connected component at an equal rate.

This synchronized adjustment is crucial because increasing only the $u_i$ value of $x$ would cause the edges incident on it to become over-tight, which is not permissible for essential and tight edges. To maintain the tightness of these edges, we must simultaneously decrease the $v_j$ values of the neighboring vertices. To preserve the tightness of edges incident on these neighbors, we further increase the $u_i$ values of their neighbors, and so on. Consequently, the $u_i$ values of all $U$ vertices within the component are increased, while the $v_j$ values of all $V$ vertices are decreased at an equal rate.

Thus, when the $u_i$ values of all vertices in a fundamental component are increased by a certain amount from one imputation to another, the $v_j$ values must be decreased by the same amount. However, since altering the $u_i$ and $v_j$ values changes their contributions to profits by $b_{u_i}$ and $b_{v_j}$ times, respectively, there is a concern that the weight of the $b$-matching itself might change, potentially rendering it suboptimal. The following lemma will demonstrate that this is not the case.

\lemcomponentsarebalanced*

This lemma demonstrates that, similar to the assignment game, in any fundamental component containing a minimum-profit vertex, we can raise the duals on the side with the minimum-profit vertex and simultaneously decrease the duals on the opposite side at the same rate. This ensures that all non-subpar edges remain tight, and the total profits assigned remain unchanged, thus maintaining both the feasibility and optimality of the dual solution.

Accommodating the generalized definitons, Lemma~\ref{lem:components_are_balanced} demonstrates that an Owen set imputation will remain in Owen set after synchronous changes of dual values in a fundamental component, much like core imputations in assignment game. Consequently, we can apply the same algorithm used to find the leximin core imputation in the assignment game to find the leximin Owen set imputation for the bipartite $b$-matching game.

For the assignment game, we were finding the leximin for the dual vector (which also represented the payoffs for agents), but here, we are finding the leximin for the payoff vector, which moves at a different rate compared to the duals.

The process begins with finding an arbitrary imputation, classifying the edges, vertices, and components, and then organizing each fundamental component into a ``bin.'' The vertices with the minimum profit-share are identified, and their components are removed from the bin and made ``active.'' 

We then increase the $u_i$ (or $v_j$) values of all these minimum-profit vertices so that their profit-shares $b_i u_i$ (or $b_j v_j$) increase at the same rate. When adjusting the dual values of these vertices, duals of all the vertices in its fundamental component update simultaneously, at the same rate. Note that while the profits may change at different rate, Lemma~\ref{lem:components_are_balanced} ensures the total change of profit within every fundamental component is zero. During this process, one of three different events can occur.

\begin{itemize} 
    \item If a vertex from another component also becomes a minimum profit-share vertex, move this component to the active set and begin updating its duals as well. 
    \item If vertices on opposite sides of a component (which may have merged, as described below) become minimum profit-share vertices, no further dual changes are allowed for the vertices and their original fundamental components along the path connecting these minimum profit-share vertices. These fundamental components are considered ``fully repaired''. Initially, all unique imputation components are fully repaired. The fundamental components of this merged component that are not on the path are moved back to the bin and will re-update their duals when they once again have a minimum profit-share vertex. 
    \item If a subpar edge is at risk of becoming under-tight due to changing dual values, add this edge to the set of tight edges and merge the two components it connects—say an active component $C$ and another component $C'$—into a new component $D$. If $C'$ is in the bin, then $D$ will be made active. If $C'$ is in an active state then $D$ will contain two vertices with minimum profit-share on opposite sides. In this case, the fundamental components of $D$ that are on the path between these minimum-profit vertices are marked 'fully repaired' and the rest are moved back to the bin. If $C'$ is fully repaired, then $D$ is moved completely into the fully repaired set. 
\end{itemize}

We do not delve into further details here, as it would largely repeat the discussion from the \cite{Vazirani-leximin}. The same algorithm can be adapted for leximax Owen set imputation by starting with vertices that have the maximum profit-share and then reducing their profits. The process for handling events is analogous to the leximin case. The run time of the algorithms also will remain the same. For complete details of the algorithm, please refer to \cite{Vazirani-leximin}.

\end{document}